\documentclass[a4paper,11pt]{article}
\usepackage{amsthm,amsfonts,amssymb,euscript}
\usepackage{latexsym, multicol, fancybox}
\usepackage{graphicx}
\usepackage{color}
\usepackage{amsmath, amsthm, amssymb, bm}
\usepackage{epstopdf}
\usepackage{caption}
\usepackage{psfrag}
\usepackage{slashed}
\usepackage{verbatim}
\usepackage[normalem]{ulem}
\usepackage{cancel}

\usepackage{mathrsfs}
 \usepackage{xcolor}
 \usepackage[citebordercolor={green}]{hyperref}
 \usepackage{tikz}
 \usepackage{bbm}
 \usepackage{dsfont}
 \usepackage{bbold}
  \usepackage{cite}
  \usepackage{hyperref}
\usepackage{cleveref}

\numberwithin{equation}{section}

\newtheorem{theorem}{Theorem}[section]
\newtheorem{lemma}[theorem]{Lemma}
\newtheorem{proposition}[theorem]{Proposition}

\newtheorem{definition}[theorem]{Definition}
\newtheorem{remark}[theorem]{Remark}

\newcommand{\bea}{\begin{eqnarray}}
\newcommand{\eea}{\end{eqnarray}}
\newcommand{\lab}{\lababel}
\newcommand{\nn}{\nonumber}

\DeclareFontFamily{U}{mathx}{\hyphenchar\font45}
\DeclareFontShape{U}{mathx}{m}{n}{
      <5> <6> <7> <8> <9> <10>
      <10.95> <12> <14.4> <17.28> <20.74> <24.88>
      mathx10
      }{}
\DeclareSymbolFont{mathx}{U}{mathx}{m}{n}
\DeclareFontSubstitution{U}{mathx}{m}{n}
\DeclareMathAccent{\widecheck}{0}{mathx}{"71}

\def\beaa{\begin{eqnarray*}}
\def\eeaa{\end{eqnarray*}}
\def\ba{\begin{array}}
\def\ea{\end{array}}
\def\be#1{\begin{equation} \label{#1}}
\def \eeq{\end{equation}}
\def\bsplit{\begin{split}}

\def\pa{\partial}

\def\dual{{\,^*}}
\def\div{\mathrm{div}\,}

\def\hot{\widehat{\otimes}}

\def\lab{\label}
\def\f12{\frac 1 2}

\def\a{{\alpha}}

\def\be{{\beta}}
\def\ga{\gamma}
\def\Ga{\Gamma}
\def\de{\delta}

\def\eps{\varepsilon}
\def\la{\lambda}

\def\Si{\Sigma}

\def\Th{\Theta}

\def\th{\theta}
\def\vth{{\vartheta}}

\def\nab{\nabla}

\def\Up{\Upsilon}

\def\D{{\bf D}}

\def\H{{\bf H}}

\def\g{{\bf g}}

\def\k{\Th}

\def\BB{{\mathcal B}}
\def\CC{{\mathcal C}}
\def\MM{{\mathcal M}}

\def\LL{{\mathcal L}}

\def\RR{{\mathcal R}}

\def\RR{\mathcal{R}}

\def\Lie{\LL}

\def\f12{{\frac 1 2}}

\def\BB{\mathcal{B}}
\def\BBd{{}^*\mathcal{B}}
\def\KK{\mathcal{K}}
\def\KKd{{}^{*}\mathcal{K}}

\def\tr{{\mathrm{tr}}\, }

\def\Bb{\mathscr{B}}
\def\Bbd{\, ^{*  \hspace{-0.15em}}\Bb}

\def\trt{\trh \Th}

\def\Kc{\widecheck{K}}
\def\ah{\hat{a}}

\def\trth{\slashed{\mathrm{tr}}\, \th}
\def\thh{\widehat{\th}\hspace{0.05em}}
\def\thc{\widecheck{\trth}}

\def\ss{\mathfrak{s}}

\def\kh{\widehat{\k}}
\def\Thh{\widehat{\Th}}

\def\0{_0}
\def\1{}
\def\vol{\mathrm{vol}}
\def\ao{\widecheck{a}}

 \def\Rh{R\mkern-11mu /\,}
 \def\Rhh{{\widehat{\Rh}}}
 \def\nabh{\nab\mkern-12mu /\,}

\def\laph{\slashed{\Delta}}

\def\divh{\slashed{\mathrm{div}}\hspace{0.1em}}
\def\curlh{\slashed{\mathrm{curl}}\hspace{0.1em}}
\def\trh{\slashed{\mathrm{tr}}\hspace{0.1em}}

\def\Kk{\mathscr{K}}
\def\Kkd{\, ^{*  \hspace{-0.3em}}\Kk}

\def\vol{\mathrm{vol}}

\def\divz{\slashed{\mathrm{div}}^{(0)}}
\def\curlz{\slashed{\mathrm{curl}}^{(0)}}

\def\d{\slashed{\mathcal{D}}}
\def\Kc{\widecheck{K}}

\def\ah{\hat{a}}
\def\ss{\mathfrak{s}}

 \def\Rh{R\mkern-11mu /\,}
 \def\Rhh{{\widehat{\Rh}}}
 \def\nabh{\nab\mkern-12mu /\,}

\def\n{^{(n)}}
\def\nn{^{(n+1)}}
\def\nnn{^{(n+2)}}

\def\0{^{(0)}}

\def\ss{\mathfrak{s}}

\def\lapz{\laph^{\hspace{-0.1em}(0)}}
\def\nabz{\nabh^{(0)\hspace{-0.1em}}}

\def\i{^{(\infty)}}

\def\pp{p}

 \def\gz{\ga^{(0)}}
 \def\H{\mathfrak{h}}

\def\vol{\mathrm{vol}}

\def\gacr{\mathring \ga}

\begin{document}


\title{Forward Construction of Vacuum Initial Data with Borderline Decay}
\author{Xuantao Chen and Sergiu Klainerman}


\date{}
\maketitle
\abstract{We make use of the free data formalism developed in \cite{CK25}  to construct solutions of the Einstein vacuum constraint equations by integrating in the forward direction. This, together with a new gauge condition based on effective uniformization,  allows us to construct general solutions with limited decay at spacelike infinity. In particular, we construct solutions with minimal and even borderline decay, as considered in \cite{Shen23}, \cite{Shen24} in connection with the stability of the Minkowski space. In a forthcoming paper, we make use of the 
   techniques we develop here to identify and construct a general class of short-pulse Cauchy data that lead to the formation of trapped surfaces, extending the well-known result of  \cite{LiYu}.}
  
\parindent = 0 pt
\parskip = 12 pt

\tableofcontents

\section{Introduction}

In this paper, we make use of the free data formalism developed in \cite{CK25}  to construct slowly    decaying  solutions of the Einstein vacuum constraint equations
\begin{equation}\lab{ece}
    \begin{split}
        \div k-\nab\, \tr k&=0,\\
        R_g+(\tr k)^2-|k|^2&=0,
    \end{split}
\end{equation} by integrating in the forward direction. 

\subsection{Minimal and borderline decay}

An important motivation is to construct initial data for the Einstein vacuum equation that satisfies minimal and borderline decay, as  defined  in Shen \cite{Shen23,Shen24}. 
More precisely, in these works, Shen studied the future evolution of initial data with the following 
decay, where $r=\sqrt{(x^1)^2+(x^2)^2+(x^3)^2}$ is the radius in some underlying coordinates $(x^1,x^2,x^3)$: 
\begin{itemize}
\item The \textit{minimal} case, as  considered in \cite{Shen23}, for which the global stability of Minkowski is proved to be true:\footnote{The notation $O_s(r^{-\a})$ means a quantity $f$ satisfying $\pa^i f=O(r^{-\a-i})$ for $i=0,1,\cdots,s$.}
\begin{equation}\lab{eq:minimal-decay-intro}
g_{ij}-\de_{ij}=O_s( \eps_0 r^{-\de}),\quad k_{ij}=O_{s-1}(\eps_0 r^{-1-\de}).
\end{equation}
Here $\de$ is a given small positive constant.
\item The \textit{borderline}  case,  as considered in \cite{Shen24}, where an exterior stability result is proved:
\begin{equation}\lab{eq:borderline-decay-intro}
g_{ij}-\de_{ij}=O_s(\eps_0),\quad k_{ij}=O_{s-1}(\eps_0 r^{-1}).
\end{equation}
\end{itemize}
The borderline case includes, in particular, the case of \textit{weak borderline} data for which the data lies in the weighted $b$-Sobolev space\footnote{The space $H_b^{s,\a}:=\langle x\rangle^{-\a} H_b^s$, where $H_b^s$ is the $b$-Sobolev space defined by $\| u \|_{H_b^s(\mathbb{R}^3)}^2:=\sum_{i=0}^s \| \langle x\rangle^i \pa^i u\|_{L^2(\mathbb{R}^3)}^2$.} $H_b^{s,-\frac 32}\times H_b^{s-1,-\frac 12}$ for critical weight exponent, but not in $H_b^{s,-\frac 32+\de}\times H_b^{s-1,-\frac 12+\de}$ for any $\de>0$.

The construction of  minimal decaying initial data \eqref{eq:minimal-decay-intro}  can  be done   using the conformal method. Indeed, the $3D$ elliptic operator has good invertibility properties from $O(r^{-\de})$ to $O(r^{-2-\de})$ functions; see \cite{CSC79,CC81}.\footnote{Constructing data with faster decay using the conformal method requires further nontrivial work, as is done in Fang--Szeftel--Touati \cite{FST1,FST2}.} In a recent work, Shen--Wan \cite{SW26}  have used the solution operators studied in  Mao--Tao \cite{MaoTao2022} to construct
 weakly borderline initial data.
At the sup-norm level, the data of Shen--Wan behaves like 
\[ g_{ij}-\de_{ij} \sim \eps_0 L(r)^{-1},\qquad k_{ij} \sim \eps_0 r^{-1} L(r)^{-1},\]
 where $L(r)$ is a mild logarithmic loss function. This loss function provides a tiny extra room that ensures the data still lies in the space $H_b^{s,-\frac 32}\times H_b^{s-1,-\frac 12}$.

In this work, we use the method introduced in our previous work \cite{CK25} to construct initial data with minimal and borderline decay. In particular, we show the existence of a large class of data that is borderline at the level of sup-norm,  that is 
\[
g_{ij}-\de_{ij} \sim \eps_0,\quad k_{ij}\sim \eps_0 r^{-1}.
\]
 In other words,  we construct  data  $(g,k)$  which verifies \eqref{eq:borderline-decay-intro} not only as an upper bound, but also as a lower bound for some of its components; see the precise definition in Definition \ref{def:borderline-decay-data}.  Note that our result provides a large class of data, with the worst possible decay allowed in the exterior stability result   
 of \cite{Shen24}.  Our longer-term goal is to use such data to provide a counterexample to the full 
   global stability of Minkowski space for borderline data.\footnote{We conjecture that borderline data are unstable, in the context of the nonlinear stability of the Minkowski space, but do not rule out that weak borderline data, in the sense of Shen--Wan \cite{SW26},   may still lead to a global stability result.}

In our previous work \cite{CK25}, we identified gauge conditions   and   a full set of   scalars that can be viewed as the free data for solving the Einstein constraint equations. The approach   is suitable   for our purpose here,   as   it gives us the flexibility   to  prescribe   the desired  properties  for the free  data. 
We recall, however,   that the result  \cite{CK25}   was based on an integration-from-infinity approach,   suitable to construct data with much faster decay, as opposed to the case of slow   decay. 

 To deal with slowly decaying data, we are forced to modify the construction in \cite{CK25} in two different ways. The first is that we have to integrate towards infinity. The second concerns the gauge conditions. Recall that in   \cite{CK25},  we fixed a $r$-foliation by prescribing 
 two scalars  $\mu, \nu$,  as geometric conditions,  and then transported the remaining  angular variables according to the  normal flow to $r$. This fails here because,  for one reason,   the  limiting  sphere metric  at infinity  diverges.   In particular, this implies that the standard $\ell=1$ modes given by the angular coordinates are not appropriate.  To fix this problem, we modify the gauge condition   such that   the metric induced  on   each   level surface of $r$  verifies  the effective uniformization condition  used in  \cite{KS-GCM2}.  Our main theorem, see Theorem \ref{thm:main-intro},  is stated relative to this new gauge condition.

\subsubsection{Short review of \cite{CK25}}
In \cite{CK25}, we show that for prescribed scalars $(\BB,\BBd,\KK,\KKd)$ on $(r_0,\infty)\times \mathbb{S}^2$, there exists initial data $(g,k)$ solving \eqref{ece} that verifies
\begin{itemize}
    \item The gauge scalars\footnote{The scalars $\mu$ and $\nu$ address the choice of the sphere foliation on the initial data hypersurface $\Si$, as well as the way of embedding $\Si$ into a given spacetime $(\MM,\g)$. We refer to \cite[Section 2.4]{CK25} for more explanations.} $\mu_{\ell\geq 1}=\nu=0$.
    \item The prescribed scalars satisfy
    \begin{equation}\lab{eq:prescribed-scalar-CK25}
\begin{gathered}
\left(\divh Y- \BB\right)_{\ell\geq 2}= 0,\quad 
\left(\curlh Y-\dual\BB \right)_{\ell\geq 2}= 0,\\
\left(\laph \Pi-\KK \right)_{\ell\geq 2}= 0,\quad 
\left(r^{-4}(\pa_r (r^4\curlh \Xi))-\dual\KK \right)_{\ell\geq 2}= 0.
\end{gathered}
\end{equation}
\item The ADM charges can be specified at spatial infinity.
\item   Choosing appropriately the decay factors for the    defining scalars,    we  can ensure any decay   faster 
 than  $g-g_{Schw}= O(r^{-1-\de}) $, $k =O( r^{-2-\de})$.
\end{itemize}
Here  $Y_a:= R(N,e_b,e_b,e_a)$, $\Pi:= k(N, N)$, $\Xi_a:= k(N,e_a)$, with $N$ the radial normal and $\{e_a\}_{a=1,2}$ an orthonormal frame tangent to spheres, and the slashed notations $\divh$, $\curlh$, $\laph$ refer to the corresponding operator on spheres.

The key idea is that, once  the corresponding quantities are  prescribed, the Horizontal Constraint System (HCS), as defined in \cite{CK25},  is well-posed.\footnote{ That is,  well-posed as a  system of  transport  equations  coupled  with elliptic systems on the transversal foliation.    In fact,  the  choice  of  the free data  was made  to avoid loss  of derivatives in the system.}   This  procedure  can be used to construct data that decays at the rate $g-g_{Schw}=O(r^{-1-\de})$,  or  faster, by  backward  integration from infinity. 


\subsubsection{Main  difficulties in dealing with slow decay}

As mentioned above, the  method  of backward integration  fails    for slowly decaying data; we need instead to integrate  forward towards infinity.   Yet forward integration   has its own difficulties, not just in  dealing  with borderline decay,   but even in the case of  minimal   decay. We present below   the most obvious difficulties,  connected   with  the   definition and behavior of the  $\ell=1$ modes  of the HCS system. 
 In  \cite{CK25}, these modes were  defined  with respect to  a  basis  $J_p$, $p\in\{-,+,0\}$, defined  starting with the standard sphere at $r=\infty$ and transported  by $N(J_p)=0$, with $N$ the  unit normal to the $r$-foliation. This choice leads, however, to obvious divergences.

 \begin{enumerate}
 \item 
To start with, the  propagation equation  for the $\ell=1$ modes  of $K$ (Gauss curvature of the $r$-spheres),   derived in \cite[Section 2.6]{CK25}, has the form
\begin{align}\lab{eq:heuristic-K-ell=1}
\pa_r K_{\ell=1} +2r^{-1} K_{\ell=1} = \text{nonlinear terms}.
\end{align}
This  poses already a problem if we want to construct data with  the minimal decay \eqref{eq:minimal-decay-intro}, since \eqref{eq:heuristic-K-ell=1} is not consistent with the   expected  estimate $|\Kc|\lesssim \eps r^{-2-\de}$. 

\item 
A similar problem occurs  for the sphere  metric  $\ga$, which satisfies the transport equation (see   equation (2.58) in \cite{CK25})
\begin{align*}
    \slashed{\Lie}_{\ah N} (r^{-2}\ga) = 2r^{-2}\ah \thh+\ah \thc (r^{-2}\ga)+2 \ao r^{-1} (r^{-2}\ga),
\end{align*}
where $\thh$, $\thc$ are respectively the traceless part and the (linearized) trace part of the radial second fundamental form $\th$, and $\ao:=\ah-1$ is the perturbed lapse function.
Indeed,  the integration  from $r=r_0$ towards infinity either loses  a  $\de$-decay power, in the case of the minimal decay, or even yields  divergent metric coefficients, in the case of borderline decay.

\end{enumerate}

\subsection{Choice of gauge}\lab{sec:gauge-choice-intro}

\subsubsection{The effective uniformization for all spheres}

To overcome the difficulties mentioned above, we rely    on  a modified  gauge condition,      in which the choice of the  $\ell=1$ basis  
is tied to the  choice of an effective uniformization of the  $r$-spheres, inspired by  \cite{KS-GCM2}.
The result
of \cite{KS-GCM2} asserts the existence of a spherical coordinate chart $(\vth^A)$, unique up to rotation,  for which 
\begin{align}\lab{eq:K-ell=1-transport-intro}
\ga=r^2 e^{2u} \gacr,\\
\lab{eq:barycenter}
\int_{\mathbb{S}^2} e^{2u} x^i =0,
\end{align}
where $\gacr$ denotes the unit round metric under $(\vth^A)$, i.e.,
\bea
\lab{eq:standardsphere}
\gacr=d(\vth^1)^2+\sin^2\vth^1 d(\vth^2)^2.
\eea
As is used in \cite{KS-GCM2}, the condition \eqref{eq:barycenter} ensures the uniqueness of  solutions $u$  of  the equation\footnote{Recall that \eqref{eq:lap-u-K} has a kernel   spanned  by $\ell=1$ modes, at the linear level.}
\begin{equation}\lab{eq:lap-u-K}
-\laph_{\gacr} u-2u = r^2 \Kc +\text{(nonlinear terms in $\Kc$ and $u$)},\quad \Kc:=K-r^{-2}.
\end{equation}
Moreover, \eqref{eq:lap-u-K} fixes the $\ell=1$ modes of $K$, thus avoiding the problem of  having to integrate \eqref{eq:heuristic-K-ell=1}.

\subsubsection{Gauge equations}

Based on the above comments,  we  insist on making sure that  every $r$-sphere  verifies  the effective uniformization condition.\footnote{  Note that    \cite{KS-GCM2} only constructs     a fixed sphere,  denoted \textit{the intrinsic  GCM sphere}, with the effective uniformization condition. 
This sphere played a major role in  the context of Kerr stability \cite{KS:Kerr,GKS,KS-GCM1,KS-GCM2,Shen-GCMH}. For the GCM hypersurface (last slice) constructed in \cite{Shen-GCMH} (see also Theorem M3 of \cite{KS:Kerr}), the angular coordinates are transported by the normal vector $N$ to the spheres.}
In other words, we seek to write the metric in the following form
\begin{equation*}
    g=\ah^2 dr^2+ r^2 e^{2u} \gacr_{AB} (d\vth^A +b^A dr)(d\vth^B+b^B dr).
\end{equation*}
Note that  the shift $b^A=-\ah N(\vth^A)$  is generically non-zero since the choice of $\vth^A$ is made intrinsically on each sphere. 
In fact, in view of the definition of the radial second fundamental form $\th=\frac 12 \Lie_N \ga$, we have, see Section
\ref{section:gauge-eq}, under the current gauge  (where $N=\ah^{-1} (\pa_r -b^A \pa_{\vth^A})$),
\begin{equation}\lab{eq:pa-r-u-b-intro}
2(\pa_r u)e^{2u}\mathring\gamma_{AB}
=
2\ah\,r^{-2}\thh_{AB}
+
\bigl(\ah\,\trth-2r^{-1}\bigr)e^{2u}\mathring\gamma_{AB}
+
r^{-2}(\mathcal L_b\gamma)_{AB}.
\end{equation}
The traceless part of \eqref{eq:pa-r-u-b-intro} yields the following elliptic-type equation
\begin{equation}
\lab{eq:nabhotb}
\nabh\hot b =-2 \ah\, \thh.
\end{equation}
This equation has a kernel at the $\ell=1$ level. On the other hand, the trace part of \eqref{eq:pa-r-u-b-intro}, when projected to $\ell=1$, reads 
\begin{equation}\lab{eq:pa-r-u-i}
\pa_r u_{\ell=1} = (\ah\, \trth)_{\ell=1}+(\divh b)_{\ell=1}.
\end{equation}
Note that the condition \eqref{eq:barycenter} implies  the vanishing  of  $u_{\ell=1}$. Therefore, \eqref{eq:pa-r-u-i} relates $(\divh b)_{\ell=1}$ to the more   geometric quantity  $ (\ah\, \trth)_{\ell=1}$.  The  remaining ambiguity stems from  the   rotational ambiguity of  the effective uniformization conditions.
 The easiest way of removing it is to impose a vanishing condition on  the $\ell=1$ modes of $\curlh b$.\footnote{Indeed, through the integration by parts, such a condition is related to $\dual\nabh x^i$, the generators of the rotations on $\mathbb{S}^2$.}
This procedure  completely determines  $u$ and $b$.

 Recall that in \cite{CK25}, we have also  identified the following two scalars. The first one is the $\mu$, often referred to as the mass aspect function, which is defined as
\beaa
\mu:=-\laph(\log\ah)+K-\frac 14 (\trth)^2.
\eeaa
By prescribing $\mu_{\ell\geq 1}$ as zero, one relates the lapse function to $K$ and $\trth$, thus fixing this gauge freedom. The other gauge scalar is $\nu:=\divh \Xi$, where $\Xi_a:=k(N,e_a)$, which encodes the choice of ways of the embedding of the initial data hypersurface $\Si$ into a given spacetime $(\MM,\g)$.
We refer the readers to \cite[Section 2.4]{CK25} for details.
Together, these completely fix the gauge. We are now ready to proceed with   a first  statement of our main theorem.

 \subsection{First statement  of the main result}
 
 \begin{definition}\lab{def:borderline-decay-data}
 We say a set of vacuum initial data $(g,k)$ has the borderline decay, if there exists a chart $(x^i)$ in which \eqref{eq:borderline-decay-intro} holds, and, moreover, with $r=|x|$,
 \[
 \max_{i,j} \limsup_{R\to \infty}\sup_{r\geq R} \left(|g_{ij}-\de_{ij}|+|r k_{ij}|\right) >0.
\]
 \end{definition}
 
We state below the first version  of our main result, see Theorem  \ref{main-thm}   for the  detailed   version.

 \begin{theorem}[First version of main result]\lab{thm:main-intro}
 There exist  a large class of smooth    solutions to the constraint equations \eqref{ece}, parametrized by free data  as in \cite{CK25},   consistent with borderline decay, as defined in Definition \ref{def:borderline-decay-data}. 
 \end{theorem}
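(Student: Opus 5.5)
The plan is to solve the Horizontal Constraint System of \cite{CK25} by forward integration from an inner sphere $r=r_0$ towards $r=\infty$, in the gauge of Section \ref{sec:gauge-choice-intro}. The gauge consists of effective uniformization on every $r$-sphere with the barycenter condition \eqref{eq:barycenter}, $\mu_{\ell\geq1}=0$, $\nu=0$, and $(\curlh b)_{\ell=1}=0$. The free data are the prescribed scalars $(\BB,\BBd,\KK,\KKd)$ in \eqref{eq:prescribed-scalar-CK25}, and they will be chosen with borderline decay. Concretely, I would take $\BB,\BBd=O(\eps_0 r^{-3})$ and $\KK,\KKd=O(\eps_0 r^{-3})$, supported in $\ell\geq2$ and with a nonvanishing angular profile as $r\to\infty$. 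Seed data on the inner sphere $r=r_0$ will be close to a round sphere in Minkowski.

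The unknowns are $u$, $b$, $\ao$, $\thh$, $\thc$, $K$, $\Pi$, $\Xi$ and the curvature components. I would organize them in three groups.

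\begin{enumerate}
\item \textbf{Elliptic quantities on each sphere.} Given $K$, solve \eqref{eq:lap-u-K} for $u$ uniquely, using \eqref{eq:barycenter} to kill the $\ell=1$ kernel. Obtain $\ao$ from $\mu_{\ell\geq1}=0$. Obtain $b$ from \eqref{eq:nabhotb}, with its $\ell=1$ part fixed by \eqref{eq:pa-r-u-i} (recall $u_{\ell=1}=0$) together with $(\curlh b)_{\ell=1}=0$. Recover $\Xi$ and $\Pi$ from $\nu=0$ and the $\divh$/$\curlh$/$\laph$ relations in \eqref{eq:prescribed-scalar-CK25}.
\item \textbf{Transport quantities.} Integrate $\thc$, $\thh$ and the renormalized curvature forward in $r$ using the radial null-structure/Codazzi equations of \cite{CK25}. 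The $\ell\geq2$ modes of $K$ are driven by $\BB$ and $\KK$.
\item \textbf{Low modes.} The $\ell=0$ and $\ell=1$ modes of $K$ are no longer transported via \eqref{eq:heuristic-K-ell=1}. They are determined by the uniformization equation, which removes the first divergence.
\end{enumerate}

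The second divergence, in $\ga$, disappears because $\ga=r^2e^{2u}\gacr$ is controlled through $u$ alone rather than by transporting $r^{-2}\ga$.

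The proof proceeds by a Picard iteration $\Phi^{(n)}\mapsto\Phi^{(n+1)}$ in a weighted norm. I would allow $\thh,\thc=O(\eps_0r^{-1})$, $\ao,u,b\cdot r=O(\eps_0)$ and $K-r^{-2}=O(\eps_0r^{-2})$, with angular derivatives up to order $s$. Linearized forward transport has the form $\pa_r f+\frac{c}{r}f=F$ with $c\geq0$ in the relevant equations. Such an equation maps $F=O(r^{-1-p})$ to $f=O(r^{-p})$ without loss whenever $c>0$. The marginal case $c=0$ must be avoided, and the gauge choice is designed to achieve this. The elliptic steps on spheres are uniformly invertible because $u$ is small in $C^s$, so the metrics $e^{2u}\gacr$ are uniformly close to round.

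Quadratic terms are $O(\eps_0^2 r^{-2})$ relative to the linear sources. They are therefore integrable and close in the same borderline norm, so the iteration contracts for $\eps_0$ small.

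The main obstacle I expect is precisely the loss in transport equations whose linear coefficient makes them log-divergent at borderline decay. These occur for the $\ell=0,1$ parts of $\thc$ and $\ao$, where the constraint couples to the ADM-type quantities. My first attempt would be to show that, in the new gauge, all such modes are given elliptically (via $\mu_{\ell=0}$ and the Hawking mass identity) rather than by transport. Failing that, I would show that the nonlinear sources for these modes carry an extra factor of $r^{-1}$ from the structure of \cite{CK25}.

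Finally, lower bounds follow from the linear term. Construct coordinates $x^i=r\,x^i(\vth)$ from the uniformizing chart. Then $g_{ij}-\de_{ij}$ contains $u$, whose limit $u_\infty(\vth)$ is nonzero because $\KK_{\ell\geq2}$ has a nonvanishing profile and the nonlinear corrections are $O(\eps_0^2)$. This verifies the $\limsup$ condition in Definition \ref{def:borderline-decay-data}, and the freedom in choosing the profiles yields a large class of such data.
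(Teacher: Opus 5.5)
There is a genuine gap, and it lies exactly where borderline decay is delicate. Your transport criterion is wrong. For $\pa_r f+c\,r^{-1}f=F$ with $F=O(r^{-1-p})$ one has $\pa_r(r^cf)=O(r^{c-1-p})$. So $f=O(r^{-p})$ requires $c>p$; for $c=p$ you get $r^{-p}\log r$, and for $c<p$ only $r^{-c}$. This is the scalar model \eqref{eq:scalar-model}. At the borderline level $\thc,\trt,\Pi,\Xi=O(\eps_0r^{-1})$ with sources $O(\eps_0r^{-2})$, so the threshold is $c>1$, not $c>0$.

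With this corrected, your free data fail. You impose the conditions \eqref{eq:prescribed-scalar-CK25}, in particular $\laph\Pi=\KK$, with $\KK=O(\eps_0r^{-3})$ and nonvanishing profile, so $\Pi\sim\eps_0r^{-1}$. But $\trt$, which is missing from your list of unknowns, must be transported by \eqref{eq:N-trt-linearized}. With $\nu=0$ this reads $N\trt+r^{-1}\trt=2r^{-1}\Pi+\dots$, which is the case $c=1$. Hence $\trt\sim\eps_0r^{-1}\log r$ already at the linear level, and $k=O(\eps_0r^{-1})$ fails. This is why the paper replaces that condition by $\laph(\Pi+\frac12\trt)=\KK$ on $\ell\ge2$, together with $\overline{\Pi}=-\frac12\overline{\trt}$ (see \eqref{eq:prescribed-scalar-conditions} and \eqref{eq:ell=1-condition-u-solution1}). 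Then $\Pi+\frac12\trt$ is determined elliptically, and $N\trt=-2r^{-1}\trt+2r^{-1}(\Pi+\frac12\trt)+\dots$ has $c=2$.

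The same threshold governs the $(\thc,\Kc)$ system, which you do not analyze. Uniformization does not fix the $\ell=0$ mode of $K$. The $\ell=0$ projection of \eqref{eq:lap-u-K} is just Gauss--Bonnet, relating $\Kc_{\ell=0}$ to $u_{\ell=0}$, and since you compute $u$ from $K$ this is circular; only $\Kc_{\ell=1}$ is determined this way.

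The paper proceeds as follows:
\begin{itemize}
\item It imposes $\overline{\ao}=0$ and eliminates $\ao_{\ell\ge1}$ through $\mu_{\ell\ge1}=0$.
\item Both the $\ell\ge2$ and the $\ell=0$ parts of $X=(\thc,r\Kc)$ then obey $NX=r^{-1}(-2I+Q)X+\dots$, with $Q$ nilpotent and nonzero.
\item It integrates these forward after the weights $r^{2-\de},r^{3-\de}$, using for $\ell\ge2$ the modified inner product of Lemma~\ref{lem:matrix-operator-sign}.
\item It treats $\thc_{\ell=1}$ separately, with rate $3$.
\end{itemize}

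Your fallback ideas (determining the low modes elliptically via the Hawking mass, or extra $r^{-1}$ decay of the nonlinear sources) do not address this. At borderline decay the quadratic terms, e.g.\ $\Ga_1\cdot\Ga_1\sim\eps^2r^{-2}$ in the $\thc$ equation, decay at exactly the rate of the linear sources. They close only because the linear rates exceed $1$.

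Two further points:
\begin{itemize}
\item At the linear level $u$ is driven by $\BB$ through $\divh Y$, not by $\KK$; the $(\thc,\Kc,\ao,u)$ subsystem does not see $\KK$. So the nonzero limit of $u$ must come from the profile of $\BB$.
\item You never show that the limit of your iteration solves \eqref{ece}. In this gauge $K$ is tied to $u$, so one must separately prove that the iterated $\trth$ and $Y$ agree with those of the limiting metric, as in Section~\ref{sec:identifying-the-limit}.
\end{itemize}
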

 
 We make some brief comments about the proof and the structure of the paper.
 \begin{itemize}
 \item This is a forward construction towards infinity from general initial conditions on a given sphere.\footnote{In particular, for our purpose here, one can simply take the initial sphere as the trivial one.} We first prove a local existence result under the set of gauge conditions set up in Section \ref{sec:gauge-choice-intro}, and extend it to $r=\infty$ through a bootstrap argument.
 \item In Section \ref{sec:local-existence}, we establish the local existence result, which includes the case with large  free scalars.\footnote{That is, the size of the free scalars $(\BB,\BBd,\KK,\KKd)$ can be any given large numbers.} This is done through an iteration argument similar to that used in  \cite{CK25},  where we also have to take into account  the presence of large quantities.\footnote{This is possible  due to the triangular structure of the HCS
 equations, see Section \ref{sec:heuristics-large-scalars}.} Moreover, as in \cite{CK25}, a limiting idenfication process is required. We need however  to modify our argument in \cite{CK25}, due to the new choice of gauge, as well as to track the effect of the presence of large quantities.
 \item In Section \ref{sec:global-result}, we establish the improved estimates needed for the bootstrap argument. In particular, the linear structure of the system allows us to integrate towards infinity. The lower bound condition is trivially satisfied once we impose the corresponding behavior of the free scalars.
 \end{itemize}

\subsection{Acknowledgements}
The authors would like to thank Jingbo Wan for many helpful comments and discussions, and Dawei Shen for explaining details on his work on the stability of the Minkowski spacetime. The first author is supported by ERC-2023 AdG 101141855 BlaHSt. The second author is supported by the NSF grant 2453843.

\section{Setup}\lab{sec:setup}
\subsection{Geometric quantities}\lab{sec:geometric-quantities}

We recall the notation of geometric quantities introduced in \cite{CK25}. Assume that $\Si:=(r_0,\infty)\times \mathbb{S}^2$ is equipped with a metric $g$ and the associated Levi--Civita connection $\nab$, and consider the foliation by the level spheres $S_r=\{r=\mathrm{const}\}$. 
Let $N$ be the unit normal to $S_r$ in $\Si$. 
We then choose an orthonormal frame $\{e_1,e_2\}$ tangent to $S_r$, so that $\{N,e_1,e_2\}$ forms an orthonormal frame on $\Si$.

 {\bf Ricci coefficients.} The corresponding Ricci (rotation) coefficients are defined by
 \begin{equation}\lab{eq:Ricci-N}
 \pp_a:=g(\nab_N N,e_a),\quad  \th_{ab}:=g(\nab_a N, e_b).
 \end{equation}
 The (sphere) trace and the traceless part of $\th$ are denoted respectively by $\trth$ and $\thh$.

{\bf Curvature components.} The curvature components are denoted as follows:
\begin{equation}\lab{eq:def-curvature-components-Si}
\slashed{R}_{ab}:=R(N,e_a,N,e_b),\quad Y_a:=R(N, e_b, e_b, e_a).
\end{equation}
 The (sphere) trace and the traceless part of $\Rh$ are denoted respectively by $\trh \Rh$ and $\Rhh$.

{\bf Components of $k$.} Given initial data $(g,k)$ and the triad $\{N,e_1,e_2\}$ on $\Si$, we define
\begin{equation*}
\Th_{ab}:=k(e_a,e_b),\quad \Xi_a:=k(N,e_a),\quad \Pi :=k(N,N).
\end{equation*}
They are well-defined scalars or sphere-tangent tensors.
 The sphere trace and the traceless part of $\k$ are denoted respectively by $\trh \k$ and $\kh$. 

{\bf The lapse function.}
The lapse function of a given $r$-foliation is defined as $\ah:=|\nab r|_g^{-1}$. It is independent of the choice of the angular coordinates $(\vth^1,\vth^2)$.
\begin{lemma}\lab{lem:loga}
We have 
\bea\lab{eq:P-loga}
\pp=-\slashed{d} (\log\ah),
\eea
where $\slashed{d}$ denotes the differential on the $r$-level spheres.
\end{lemma}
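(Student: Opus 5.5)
To prove $\pp=-\slashed{d}(\log\ah)$ I will compute $\nab_N N$ directly, using that $N$ is the normalized gradient of the foliation function $r$. Since $\ah=|\nab r|_g^{-1}$, I write $N=\ah\,\nab r$, i.e.\ the one-form dual to $N$ is $N^\flat=\ah\, dr$. The point is that $dr$ is closed, so the antisymmetric part of $\nab N^\flat$ comes only from derivatives of $\ah$. Concretely, for the torsion-free connection,
\[
g(\nab_X N,Y)-g(\nab_Y N,X)=d(N^\flat)(X,Y)=d\ah\wedge dr\,(X,Y)=X(\ah)\,Y(r)-Y(\ah)\,X(r).
\]

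I apply this with $X=N$ and $Y=e_a$. Because $e_a$ is tangent to $S_r$, we have $e_a(r)=0$. Moreover $N(r)=\ah\, g(\nab r,\nab r)=\ah\cdot\ah^{-2}=\ah^{-1}$. This gives
\[
g(\nab_N N,e_a)-g(\nab_{e_a}N,N)=-e_a(\ah)\,\ah^{-1}.
\]
The second term on the left vanishes, since $g(N,N)=1$ implies $g(\nab_{e_a}N,N)=\tfrac12 e_a\big(g(N,N)\big)=0$. Hence $\pp_a=-\ah^{-1}e_a(\ah)=-e_a(\log\ah)$, which is exactly \eqref{eq:P-loga}, with $\slashed{d}$ the restriction of $d$ to vectors tangent to the spheres.

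There is no real obstacle; the only points needing care are the sign convention for $N^\flat$ and the evaluation of $N(r)$. As a cross-check, I can verify the identity in the coordinates $(r,\vth^A)$ with metric $g=\ah^2dr^2+\ga_{AB}(d\vth^A+b^Adr)(d\vth^B+b^Bdr)$. There $N^\flat=\ah\,dr$ by inspection, and the computation reduces to the one above. The shift $b$ plays no role, because it only enters through the tangential part of $N$ and not through $N^\flat$.
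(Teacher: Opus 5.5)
Your proof is correct: with $N^\flat=\ah\,dr$, the identity $d(N^\flat)=d\ah\wedge dr$ together with $e_a(r)=0$, $N(r)=\ah^{-1}$ and $g(\nab_{e_a}N,N)=0$ gives exactly $\pp_a=-e_a(\log\ah)$. The paper offers no argument beyond citing Lemma 2.2 of \cite{CK25}. Your computation, which is equivalent to the symmetry of the Hessian of $r$, is the standard one and serves as a complete self-contained proof.
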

\begin{proof}
See Lemma 2.2 in \cite{CK25}.
\end{proof}

    {\bf Induced metric.}   
The induced metric on  $S_r$ is denoted by $\ga$, with the associated Levi--Civita connection $\nabh$.         
The Gauss curvature of the $r$-spheres is denoted by $K=K_\ga$.     Given standard spherical coordinates $(\vth^A)$, we also denote $\gacr$ as the standard unit round sphere metric as in \eqref{eq:standardsphere}. 

{\bf Gauge scalars.}   As in \cite{CK25}, we define
\begin{align}
\lab{eq:def-mu}
\mu& :=-\slashed{\Delta}(\log\ah)+K-\frac 14(\trth)^2, \\
\lab{eq:def-nu}
\nu & :=\divh\Xi=\de^{ab} \nabh_a \Xi_b,
\end{align}
where we recall the following definitions for two spherical tangent $1$-forms $\psi$, $\phi$, with $\in_{ab}$ denoting the area form oriented by $\in_{12}\, =1$:
\begin{equation*}
\begin{gathered}
    \psi\cdot \phi:=\delta^{ab}\psi_a\phi_b,\quad \psi\wedge\phi:=\in^{ab} \psi_a \phi_b,\quad (\psi\hot\phi)_{ab}=\psi_a\phi_b+\psi_b\phi_a-\delta_{ab}\psi\cdot\phi,\\
    \divh \psi:=\delta^{ab}\, \nabh_a \psi_b,\quad \curlh \psi:=\, \in^{ab} \nabh_a \psi_b,\quad (\nabh\hot \psi)_{ab}:=\nabh_a \psi_b+\nabh_b \psi_a-\delta_{ab}\, \divh\psi.
    \end{gathered}
\end{equation*}

{\bf Linearized quantities.} We define the following linearized quantities
\begin{equation}\lab{eq:linearized-quantities}
\thc:= \trth-2r^{-1},\quad \Kc:=K-r^{-2},\quad \ao:= \ah-1.
\end{equation}

{\bf Hodge operators.} We now recall the Hodge operators defined in \cite{CK}.
\begin{definition}
  Given a $2$-sphere $(S,\ga)$,  we denote by $\ss_0$ the set of scalar pairs, by $\ss_1$ the set of $1$-forms, and by $\ss_2$ the set of symmetric traceless covariant $2$-tensors on $S$.

We define the following Hodge operators:
    \begin{itemize}
        \item $\d_1$ takes $\ss_1$ into $\ss_0$:\qquad     $\d_1\xi :=(\divh\xi,\curlh\xi),$
        \item $\d_2$ takes $\ss_2$ into $\ss_1$:\qquad $   (\d_2 h)_a :=\nabh^b h_{ab}$.
    \end{itemize}
\end{definition}

{\bf The $\ell=1$ basis.}
Given spherical coordinates $(\vth^1,\vth^2)$, we define the following standard $\ell=1$ basis:
\begin{align*}
J_0:= \sqrt{\frac{3}{4\pi}}\cos\vth^1, \quad
 J_+:=\sqrt{\frac{3}{4\pi}}\sin\vth^1\cos\vth^2, \quad J_-:=\sqrt{\frac{3}{4\pi}}\sin\vth^1\sin\vth^2.
\end{align*}
For any scalar field $\phi$ on the sphere, one can decompose uniquely 
\bea\lab{eq:decomposition-ell-geq-2}
\phi=\phi_{\ell\leq 1}+\phi_{\ell\geq 2},
\eea
where $\phi_{\ell\leq 1}$ is spanned by $\{1,J_0,J_+,J_-\}$, and is orthogonal to $\phi_{\ell\geq 2}$ with respect to the measure induced by $r^{-2}\gz=\gacr$.

{\bf Norms.} We have the following definition of the weighted Sobolev norms on spheres.
\begin{definition}\lab{def:L2-Hs-S_r}
For $S_r$-tangent covariant rank-$k$ tensors $U_{a_1\cdots a_k}$,
we denote by $L_{\ga}^2(S_r)$ the $L^2$ space associated with the metric $\ga$, and by $\H_{\ga}^s(S_r)$ the Sobolev spaces for positive integers $s$, defined through $r\nabh$ where $\nabh$ is the covariant derivative with respect to $\ga$, i.e.,
\begin{equation*}
\| U \|_{\H_\ga^s(S_r)}:= \sum_{i\leq s} \|(r\nabh)^i U \|_{L_\ga^2(S_r)}.
\end{equation*}
The $L_\ga^\infty(S_r)$ space is defined through the norm
\begin{equation*}
\| U \|_{L_\ga^\infty(S_r)}:={\mathrm{ess} \sup}_{S_r} |\langle U,U\rangle_{\ga}|^\frac 12.
\end{equation*}
We also denote for simplicity $\H_0^s:=\H_{\gz}^s$,  with  $\gz$  the standard  sphere (see\eqref{eq:standardsphere}).
\end{definition}
Note that the notation deviates slightly from that of \cite{CK25}, where the $\H^s$ norm corresponds to the $\H_0^s$ norm here.

\subsection{The structure and Bianchi equations}
We define the constraint quantities:
\bea
\lab{eq:def-CC-Mom}
 \CC_{Mom} (g,k)&:=&\div k-\nab\, \tr k ,\\
 \lab{eq:def-CC-Ham}
\CC_{Ham} (g,k)&:=&R_{g}+(\tr k)^2-|k|^2.
\eea
We have the following equations derived in \cite{CK25}:
\begin{proposition}[Unconditional equations]\lab{prop:Unconditional-equations-1}
The following equations hold true:
\begin{align}
\lab{eq:structure-constraint-HH}
\nabh_N \trth &= \divh \pp-\frac 12 |\thh|^2 -\frac 34(\trth)^2-|\pp|^2+K +\Pi\, \trt+\frac 14 (\trt)^2 \\
&\notag\quad -|\Xi|^2-\frac 12 |\kh|^2-\frac 12 \CC_{Ham}, \\
\lab{eq:unconditional-Codazzi}
\divh \thh&=\frac 12 \nabh\trth-Y,\\
\lab{eq:unconditional-Bianchi}
\nabh_N K &= -\divh Y-\trth\, K+2\pp\cdot Y
-\thh\cdot (\nabh\hot \pp-\pp\hot \pp) \\
&\nonumber \quad +\frac 12\trth\, (\divh \pp-|\pp|^2),\\
\nabh_N \trt&=\divh \Xi + \trth \Pi -\thh\cdot\kh-\frac 12 \trth\trt -2\pp\cdot \Xi-(\CC_{Mom})_N,
\lab{eq:structure-constraint-Phi-R}\\
\nabh_N \Xi &= -\divh \kh+\pp\cdot\k-\Pi \pp-\frac 32\trth\, \Xi-\thh\cdot \Xi+\frac 12 \nabh \trt +\nabh\, \Pi+ \slashed{\CC}_{Mom}.
\end{align}
\end{proposition}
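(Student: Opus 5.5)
The plan is to derive all five identities by direct frame computations in the orthonormal triad $\{N,e_1,e_2\}$ adapted to the $r$-foliation. I would use only the definitions \eqref{eq:Ricci-N}, \eqref{eq:def-curvature-components-Si}, Lemma \ref{lem:loga} (so that $\nab_N N=\pp^a e_a$ with $\pp=-\slashed d\log\ah$), and the definitions \eqref{eq:def-CC-Mom}, \eqref{eq:def-CC-Ham}. Throughout I write $\nab_a N=\th_{ab}e_b$ and $\nab_a e_b=\nabh_a e_b-\th_{ab}N$. Since the equations are unconditional, no constraint is ever assumed. The constraint quantities enter only when the ambient scalar curvature $R_g$ or $\div k$ is rewritten in terms of $(g,k)$-components.

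\textbf{Equations \eqref{eq:structure-constraint-HH} and \eqref{eq:unconditional-Codazzi}.} First, I would write the Ricci identity for $N$ in the form
$$R(N,e_a,N,e_b)=-\nabh_N\th_{ab}-\th_{ac}\th_{cb}+\nabh_a\pp_b-\pp_a\pp_b$$
(up to the paper's sign convention). Taking the trace gives $\nabh_N\trth=\divh\pp-|\pp|^2-|\th|^2-\trh\Rh$. Next, I would use the twice-contracted Gauss equation $R_g=2\,\mathrm{Ric}(N,N)+2K-(\trth)^2+|\th|^2$ together with $\trh\Rh=\mathrm{Ric}(N,N)$ to eliminate $\trh\Rh$ in favour of $K$ and $R_g$. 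I would then substitute $R_g=\CC_{Ham}-(\tr k)^2+|k|^2$ and decompose, using $\tr k=\Pi+\trt$ and $|k|^2=\Pi^2+2|\Xi|^2+|\kh|^2+\frac12(\trt)^2$. This gives exactly $\Pi\,\trt+\frac14(\trt)^2-|\Xi|^2-\frac12|\kh|^2-\frac12\CC_{Ham}$. Writing $|\th|^2=|\thh|^2+\frac12(\trth)^2$ then produces the coefficients $-\frac12|\thh|^2-\frac34(\trth)^2$. The Codazzi equation \eqref{eq:unconditional-Codazzi} is the contraction over $b,c$ of $\nabh_a\th_{bc}-\nabh_b\th_{ac}=\pm R(N,e_c,e_a,e_b)$, split into trace and traceless parts; the sign is fixed by the definition of $Y$.

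\textbf{Equations \eqref{eq:structure-constraint-Phi-R} and the $\Xi$ equation.} These are the $N$- and $e_a$-components of $\div k-\nab\tr k$. I would expand $\div k=(\nab_N k)(N,\cdot)+\sum_a(\nab_{e_a}k)(e_a,\cdot)$. Converting covariant derivatives of $k$ into frame derivatives of $\Pi,\Xi,\Th$ yields the connection terms $\pp\cdot\Xi$, $\trth\,\Pi$, $\thh\cdot\kh$, $\pp\cdot\Th$ and $\thh\cdot\Xi$. For the $N$-component, the terms $\nab_N\Pi$ cancel against $N(\tr k)$, which leaves the transport equation for $\trt$. For the tangential component, one solves for $\nabh_N\Xi$.

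\textbf{Equation \eqref{eq:unconditional-Bianchi}.} This is the main bookkeeping obstacle. I would start from the Gauss equation in the form $K=\frac14(\trth)^2-\frac12|\thh|^2+R_{1212}$, with $R_{1212}=\frac12R_g-\trh\Rh$, which in 3D is expressed through $\mathrm{Ric}$. Differentiating along $N$ produces $\nab_N\mathrm{Ric}$. To remove it, I would use the contracted second Bianchi identity $\div\mathrm{Ric}=\frac12\,dR_g$ in its $N$-component; this converts the $N$-derivative into $-\divh Y$ plus connection terms. The terms $\nabh_N\trth$ and $\nabh_N\thh$ are then eliminated using the trace and traceless parts of the Riccati identity above. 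The crucial check is that all appearances of $\Rhh$, $\trh\Rh$ and $R_g$ cancel. This cancellation must happen for the formula to be unconditional, and it leaves only $-\trth\,K+2\pp\cdot Y-\thh\cdot(\nabh\hot\pp-\pp\hot\pp)+\frac12\trth(\divh\pp-|\pp|^2)$.

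As a cross-check, I would verify this formula against the classical first-variation formula for the Gauss curvature under the normal flow $\ah N$, where the lapse terms appear as $\pp=-\slashed d\log\ah$. All of this is Lemma-2.x material already done in \cite{CK25}, so the proof may simply cite those computations.
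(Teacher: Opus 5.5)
\textbf{Verdict.} Your derivation is correct and takes a different route from the paper. The paper gives no argument for this proposition; it only imports the identities from \cite{CK25}. Your proposal replaces that citation with a self-contained frame computation, and each step goes through.

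\textbf{Check of the steps.} Use the convention $R(X,Y,Z,W)=g(X,\nab_Z\nab_W Y-\nab_W\nab_Z Y-\nab_{[Z,W]}Y)$, which is the one for which the displayed signs hold. Then your Riccati identity $\slashed{R}_{ab}=-\nabh_N\th_{ab}-\th_{ac}\th_{cb}+\nabh_a\pp_b-\pp_a\pp_b$ is exact with these signs. Together with $\trh\Rh=\mathrm{Ric}(N,N)$, the twice-contracted Gauss identity and your splitting of $|k|^2$, it yields \eqref{eq:structure-constraint-HH} with the stated coefficients. Contracting Codazzi gives $Y=\nabh\trth-\divh\th$, which is \eqref{eq:unconditional-Codazzi}. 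The $N$- and $e_a$-components of $\CC_{Mom}$ give the $\trt$ and $\Xi$ equations exactly.

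For \eqref{eq:unconditional-Bianchi}, use $\mathrm{Ric}(e_a,N)=-Y_a$ and $\mathrm{Ric}_{ab}=\slashed{R}_{ab}+R_{1212}\de_{ab}$. The $N$-component of the contracted Bianchi identity then gives $N R_{1212}=-\divh Y+2\pp\cdot Y-\trth R_{1212}-\thh\cdot\Rhh+\frac12\trth\,\trh\Rh$, and the $N R_g$ terms cancel. After inserting the trace and traceless Riccati equations, the $\Rhh$ and $\trh\Rh$ terms cancel. The remaining cubic terms in $\th$ combine with $R_{1212}$ into $-\trth K$, as you predict.

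\textbf{Comparison with the citation.} Your route costs some bookkeeping. In return it shows explicitly that nothing beyond the definitions of $\CC_{Ham},\CC_{Mom}$ is used, and it fixes all sign conventions. Your proposed cross-check is in fact the shorter proof of the $K$ equation. The first variation $\delta K=-\laph(f\trth)+\divh\divh(f\th)-f\,\trth\, K$ under a normal deformation of speed $f=\ah$, combined with Codazzi, gives the identity directly, without passing through $\mathrm{Ric}$.

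\textbf{Caveat on the last term.} Check this rather than assume it. The $\pp$-terms you get from $-\thh\cdot\nabh_N\thh$ (equivalently, from $\thh_{ab}\,\ah^{-1}\nabh_a\nabh_b\ah$ in the first-variation formula) are $-\thh_{ab}(\nabh_a\pp_b-\pp_a\pp_b)$. With $\hot$ as defined in Section \ref{sec:geometric-quantities} and $\cdot$ the full contraction, this equals $-\frac12\thh\cdot(\nabh\hot\pp-\pp\hot\pp)$, not $-\thh\cdot(\nabh\hot\pp-\pp\hot\pp)$.

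A quick test confirms the factor $\frac12$. Take cylinders of radius $\rho$ in $\mathbb{R}^3$ pushed out with normal speed $f(z)$. Then $\delta K=-f''/\rho$, whereas the displayed coefficient would give $-\frac32 f''/\rho$.

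So your closing claim, that the cancellation leaves exactly the displayed expression, is not right as written. Either the dot product on $\ss_2$ carries a factor $\frac12$ in the conventions of \cite{CK25}, or the coefficient in the statement is off by a factor $2$. This is a normalization issue in the statement, not a defect of your method.
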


\subsection{The gauge equations}
\lab{section:gauge-eq}
In this section, we derive some identities that hold for the metric of the form\footnote{We use the letters $a$, $b$, $\cdots$ for frame indices on $S_r$, and the letters $A$, $B$, $\cdots$ for the angular coordinate indices.}
\bea\lab{eq:metric-canonical-form}
g=\ah^2 dr^2+ r^2 e^{2u} \gacr_{AB} (d\vth^A +b^A dr)(d\vth^B+b^B dr),
\eea
with $u$  and $b$ satisfying
\begin{equation}\lab{eq:barycenter-1}
\int_{\mathbb{S}^2} e^{2u} x^i =0,\quad (e^{2u} \curlh b)_{\ell=1}=0.
\end{equation}
We define the geometric quantities with respect to the $r$-foliation. 
\begin{proposition}\lab{prop:basic-relation-with-shift}
For a metric of the form \eqref{eq:metric-canonical-form}, the following relations hold:
\begin{align}
\lab{eq:nab-hot-b}
\nabh\hot b &=-2 \ah\, \thh, \\
\lab{eq:pa-r-u}
2\pa_r u &= \ah\, \trth-2r^{-1} + \divh b,\\
\lab{eq:laph-u-K}
-\laph_{\mathring\gamma}u+1 & = (r^2K)e^{2u}.
\end{align}
\end{proposition}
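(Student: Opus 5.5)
The plan is to compute the second fundamental form $\th=\frac 12\Lie_N\ga$ directly in the coordinates $(r,\vth^A)$, using the explicit form \eqref{eq:metric-canonical-form}. From the metric we read off that $\ga_{AB}=r^2e^{2u}\gacr_{AB}$. The unit normal is $N=\ah^{-1}(\pa_r-b^A\pa_{\vth^A})$: indeed $dr(N)=\ah^{-1}$, the combination $d\vth^A+b^Adr$ annihilates $N$, and so $g(N,N)=1$. Moreover $N$ is orthogonal to each $\pa_{\vth^B}$, because $g(\pa_r,\pa_{\vth^B})=\ga_{BC}b^C$, which cancels against $-b^Ag(\pa_{\vth^A},\pa_{\vth^B})$.

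For the first two identities, set $\th_{AB}=\frac12(\Lie_N g)_{AB}$, restricted to tangent vectors. Since $\ah^{-1}$ is a scalar factor and the sphere-tangent components of $dr$ vanish, we get $2\ah\,\th_{AB}=(\Lie_{\pa_r-b}\,g)_{AB}$ restricted to $S_r$. The $\pa_r$ part gives $\pa_r\ga_{AB}=(2r^{-1}+2\pa_ru)\ga_{AB}$. The $-b$ part gives $-(\slashed{\Lie}_b\ga)_{AB}=-(\nabh_Ab_B+\nabh_Bb_A)$, where $b$ is tangent to the spheres. One has to check that the $dr$-cross terms do not contribute to tangential components; they do not, since we evaluate only on $\pa_{\vth^A},\pa_{\vth^B}$.

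This yields
\begin{align*}
2\ah\,\th_{AB}=(2r^{-1}+2\pa_r u)\ga_{AB}-(\nabh_Ab_B+\nabh_Bb_A).
\end{align*}
Taking the traceless part gives $2\ah\,\thh=-\nabh\hot b$, which is \eqref{eq:nab-hot-b}. Taking the $\ga$-trace gives $2\ah\,\trth=2(2r^{-1}+2\pa_ru)-2\divh b$. This should reduce to \eqref{eq:pa-r-u} after rearrangement, but here I must be careful with the sign convention for $b$. With $N=\ah^{-1}(\pa_r-b^A\pa_A)$ the sign of $\divh b$ above is opposite to that in \eqref{eq:pa-r-u} and \eqref{eq:nab-hot-b}. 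The two stated identities are mutually consistent with the sign coming from $\pa_r+b$, so I will recheck the orientation of $N$ against the form $d\vth^A+b^Adr$. Whichever sign is correct, fixing it consistently produces both relations simultaneously. This sign bookkeeping is the only real subtlety.

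The third identity \eqref{eq:laph-u-K} is the standard conformal change formula for Gauss curvature in two dimensions, applied on each $S_r$. Writing $\ga=e^{2\phi}\gacr$ with $\phi=u+\log r$ and $K_{\gacr}=1$, we have $K_\ga=e^{-2\phi}(1-\laph_{\gacr}\phi)$. Since $\log r$ is constant on $S_r$, this gives $r^2e^{2u}K=1-\laph_{\gacr}u$, which is exactly the claim. The conditions \eqref{eq:barycenter-1} play no role in these identities; they are only needed later to fix the $\ell=1$ kernels.
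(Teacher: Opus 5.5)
Your argument is the paper's argument: use $\ah N=\pa_r-b^A\pa_{\vth^A}$ and $\Lie_{\ah N}\ga=2\ah\,\th$ to get $\pa_r\ga_{AB}=2\ah\,\th_{AB}+(\Lie_b\ga)_{AB}$, split this into its traceless and trace parts, and apply the two-dimensional conformal formula for $K$. Your setup and the displayed identity $2\ah\,\th_{AB}=(2r^{-1}+2\pa_r u)\ga_{AB}-(\nabh_Ab_B+\nabh_Bb_A)$ are correct, and so is your Gauss curvature step.

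However, the sign problem you flag does not exist, and the paragraph about it should be deleted. The $\ga$-trace of your identity is $2\ah\,\trth=4r^{-1}+4\pa_r u-2\divh b$, that is, $2\pa_r u=\ah\,\trth-2r^{-1}+\divh b$, which is exactly \eqref{eq:pa-r-u}. Its traceless part, $2\ah\,\thh=-\nabh\hot b$, is exactly \eqref{eq:nab-hot-b}.

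Your statements that the stated identities correspond to $\pa_r+b$, and that ``whichever sign is correct'' produces them, are both false. With $\ah N=\pa_r+b$, both relations would acquire the opposite sign in front of $b$. That orientation is in any case ruled out by your own check that $d\vth^A+b^A\,dr$ annihilates $N$.
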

\begin{proof}

Note that the radial unit normal vector field can be written as $N=\ah^{-1} (\pa_r -b^A \pa_{\vth^A})$.
Since
$\mathcal L_{\ah N}\gamma = 2\ah\,\theta$,
we have
\[\pa_r\gamma_{AB}=
2\ah\,\th_{AB}+
(\mathcal L_b\ga)_{AB}.\]
Then, for the rescaled metric
$r^{-2}\ga_{AB}$,
we have
\bea\lab{eq:pa-r-ga-bar}
\pa_r(r^{-2}\ga)_{AB}=
2\ah \,r^{-2}\thh_{AB}+
\big(\ah\,\trth-2r^{-1}\big)r^{-2}\ga_{AB}+r^{-2}(\mathcal L_b\ga)_{AB}.
\eea
On the other hand, 
$r^{-2} \ga_{AB}=e^{2u}\mathring\ga_{AB}$,
which implies
\[
\pa_r(r^{-2}\ga_{AB})
=
2(\pa_r u)e^{2u}\mathring\ga_{AB}.
\]
Comparing this with \eqref{eq:pa-r-ga-bar}, we obtain
\begin{align}\lab{eq:identity-pa-r-u-Lie-b-ga}
2(\pa_r u)e^{2u}\mathring\ga_{AB}
=
2\ah\,r^{-2}\thh_{AB}
+
\big(\ah\,\trth-2r^{-1}\big)e^{2u}\mathring\ga_{AB}
+
r^{-2}(\mathcal L_b\ga)_{AB}.
\end{align}
Recall that $\Lie_b \ga_{AB}=\nabh_A b_B+\nabh_B b_A$.
Taking the traceless part of \eqref{eq:identity-pa-r-u-Lie-b-ga}, we obtain
\[
\nabh \hot b=-2\ah \, \thh.
\]
This proves \eqref{eq:nab-hot-b}.
Taking the trace of \eqref{eq:identity-pa-r-u-Lie-b-ga} instead gives
\[
2\pa_r u
=
\ah\, \trth-2r^{-1}
+
\divh b.
\]
This proves \eqref{eq:pa-r-u}.

We have $K_{r^{-2}\ga}=r^2 K$.
On the other hand, under a conformal change
$r^{-2}\ga=e^{2u}\mathring\ga$, in dimension $2$, the Gauss curvature satisfies
\[K_{r^{-2}\ga}= e^{-2u}\bigl(1-\laph_{\mathring\ga}u\bigr). \]
Hence $u$ satisfies the elliptic equation
\[
-\laph_{\mathring\ga}u+1=(r^2K)e^{2u}.
\]
This proves \eqref{eq:laph-u-K}.
\end{proof}

We introduce the following notation
\begin{equation}\lab{eq:def-b-flat-0}
b^{\flat,0}_A:=(\gz)_{AB}b^B.
\end{equation}
Under this convention, any $\ga$-horizontal operator are acting on $e^{2u} b^{\flat,0}$, e.g., $\divh b =\divh (e^{2u} b^{\flat,0})$.
Since the metric changes throughout the construction, we will systematically regard $b$ as the $1$-form $b^{\flat,0}$ obtained by lowering indices using the background metric $\gz$.
\begin{proposition}\lab{prop:b-flat}
The following relations hold:
\begin{align}
\lab{eq:nab-hot-b-flat}
\nabz\hot b^{\flat,0} &=-2 e^{-2u} \ah\, \thh, \\
\lab{eq:divz-b-flat-ell=1}
(\divz (e^{2u} b^{\flat,0}))_{\ell=1} & = -(e^{2u} (\ah\trth-2r^{-1}))_{\ell=1},\\
\lab{eq:curlz-b-flat-ell=1}
(\curlz (e^{2u} b^{\flat,0}))_{\ell=1} &=0.
\end{align}
The last two equations are implied by the conditions in \eqref{eq:barycenter-1}.
\end{proposition}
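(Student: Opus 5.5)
The plan is to derive all three identities from Proposition \ref{prop:basic-relation-with-shift} and the conditions \eqref{eq:barycenter-1}. The tool is the conformal relation $\ga=e^{2u}\gz$, where $\gz=r^2\gacr$, which holds on each fixed $S_r$. We regard $b$ as the vector field $b^A\pa_{\vth^A}$. Its $\ga$-dual $1$-form is $\ga_{AB}b^B=e^{2u}b^{\flat,0}_A$, which is the convention stated after \eqref{eq:def-b-flat-0}. Each identity then reduces to knowing how a single two-dimensional operator transforms under a conformal change of the metric.

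\textbf{Step 1: identity \eqref{eq:nab-hot-b-flat}.} For any vector field $X$ we have
\[
\Lie_X(e^{2u}\gz)=e^{2u}\bigl(\Lie_X\gz+2X(u)\,\gz\bigr).
\]
The extra term is pure trace, so removing the trace gives
\[
\widehat{\Lie_b\ga}=e^{2u}\,\widehat{\Lie_b\gz}.
\]
Tracelessness is the same condition for $\ga$ and $\gz$, since they are conformal. As covariant tensors, $\widehat{\Lie_b\ga}$ is exactly $\nabh\hot b$ (using the $\ga$-lowered form of $b$), and $\widehat{\Lie_b\gz}$ is exactly $\nabz\hot b^{\flat,0}$. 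Substituting into \eqref{eq:nab-hot-b} and dividing by $e^{2u}$ gives \eqref{eq:nab-hot-b-flat}.

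\textbf{Step 2: identity \eqref{eq:divz-b-flat-ell=1}.} The area forms are related by $d\vol_\ga=e^{2u}d\vol_{\gz}$. Hence the divergence of a vector field satisfies
\[
\divh b=e^{-2u}\,\divz\bigl(e^{2u}b^{\flat,0}\bigr).
\]
Here $\divz$ of the $\gz$-lowered form is the $\gz$-divergence of the vector field $b$, and constant rescalings such as $r^2$ do not affect it. Multiplying \eqref{eq:pa-r-u} by $e^{2u}$ and using $2e^{2u}\pa_r u=\pa_r(e^{2u})$, we obtain
\[
\pa_r(e^{2u})=e^{2u}\bigl(\ah\,\trth-2r^{-1}\bigr)+\divz\bigl(e^{2u}b^{\flat,0}\bigr).
\]
We project onto $\ell=1$ using the measure of $\gacr$, in which the basis $J_p$ is proportional to the coordinates $x^i$ and is independent of $r$. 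Therefore
\[
\int_{\mathbb S^2}\pa_r(e^{2u})J_p\,d\vol_{\gacr}=\pa_r\int_{\mathbb S^2}e^{2u}J_p\,d\vol_{\gacr}=0
\]
by the barycenter condition in \eqref{eq:barycenter-1}. This is the only point where that condition enters, and it yields \eqref{eq:divz-b-flat-ell=1}.

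\textbf{Step 3: identity \eqref{eq:curlz-b-flat-ell=1}.} The area tensors are related by $\in_\ga=e^{2u}\in_{\gz}$, so $\in_\ga^{AB}=e^{-2u}\in_{\gz}^{AB}$. For any $1$-form $\omega$, $\curlh\omega=\in^{AB}\pa_A\omega_B$ needs no Christoffel symbols, so $\curlh\omega=e^{-2u}\curlz\omega$. Taking $\omega=e^{2u}b^{\flat,0}$ gives
\[
e^{2u}\curlh b=\curlz\bigl(e^{2u}b^{\flat,0}\bigr).
\]
The second condition in \eqref{eq:barycenter-1} then gives \eqref{eq:curlz-b-flat-ell=1} immediately.

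The main point requiring care is bookkeeping rather than analysis. One must keep track of which metric lowers the index on $b$ and which conformal weight each operator carries: the divergence of a vector field, the curl of a $1$-form, and the traceless symmetric derivative. One must also check that the $\ell=1$ projection in \eqref{eq:decomposition-ell-geq-2} uses the fixed measure of $\gacr$, so that $\pa_r$ commutes with it. This commutation is what makes Step 2 work.
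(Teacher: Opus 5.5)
Your proof is correct and follows essentially the same route as the paper. The first identity comes from the traceless part of $\Lie_b(e^{2u}\gz)=e^{2u}(\Lie_b\gz+2b(u)\gz)$ combined with \eqref{eq:nab-hot-b}, and the other two come from the conformal laws $\divh b=e^{-2u}\divz(e^{2u}b^{\flat,0})$ and $\curlh b=e^{-2u}\curlz(e^{2u}b^{\flat,0})$ together with the $r$-derivative of the barycenter condition; the only cosmetic difference is that you derive these div/curl laws from the scaling of the volume and area forms, while the paper takes them from the Christoffel-symbol computation in Lemma \ref{lem:div-curl-hot-identities}.
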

\begin{proof}
Note that we can also write $\Lie_b \ga_{AB}=e^{2u}
\left(
\nabz_A b^{\flat,0}_B
+
\nabz_B b^{\flat,0}_A
+
2b(u)(\gz)_{AB}
\right)$.
Taking the traceless part and combining it with \eqref{eq:identity-pa-r-u-Lie-b-ga}, we obtain
\[
e^{2u}(\nabz\hot b^{\flat,0})_{AB}=-2\ah \, \thh_{AB}.
\]
This proves \eqref{eq:nab-hot-b-flat}.
To prove \eqref{eq:divz-b-flat-ell=1}-\eqref{eq:curlz-b-flat-ell=1}, we first recall the following relations, see Lemma \ref{lem:div-curl-hot-identities}:
\beaa
\divh b= e^{-2u} \divz (e^{2u} b^{\flat,0}), \quad \curlh b=
e^{-2u}\curlz (e^{2u}b^{\flat,0}).
\eeaa
This immediately proves \eqref{eq:curlz-b-flat-ell=1} in view of the second condition in \eqref{eq:barycenter-1}.
Differentiating the first condition \eqref{eq:barycenter-1} in $r$ gives
\beaa
0=\int_{\mathbb{S}^2} 2(\pa_r u)e^{2u}x^i.
\eeaa
Substituting \eqref{eq:pa-r-u}, we find
\begin{equation*}
0
=
\int_{\mathbb{S}^2}
e^{2u}x^i
\left(
\ah\tr\theta+\divh b-2r^{-1}
\right)
=\int_{\mathbb{S}^2}
\divz (e^{2u} b^{\flat,0})x^i+e^{2u}x^i
\left(
\ah\tr\theta-2r^{-1}
\right).
\end{equation*}
This proves \eqref{eq:divz-b-flat-ell=1}.
\end{proof}

\subsection{The equations in perturbative form}

We use the following schematic notation:\footnote{The checked quantities are defined in \eqref{eq:linearized-quantities}.}
\bea\lab{eq:schematic-quantities}
\Ga_0=\{\ao,u\},\quad \Ga_1=\{r^{-1}\Ga_0,\thc,\, \slashed{d} (\log\ah),\trt,\Xi \},\quad \Ga_2=\{r^{-1}\Ga_1,\nabh\Ga_1,\Kc\}.
\eea
We also introduce the following enlarged sets:
\bea\lab{eq:schematic-quantities-enlarged}
\widetilde\Ga_0=\{\ao,u,b\},\quad \widetilde\Ga_1=\{r^{-1}\widetilde \Ga_0,\thc,\, \slashed{d} (\log\ah),\Xi,\trt,\thh,\Thh,\Pi \},\quad \widetilde\Ga_2=\{r^{-1}\widetilde \Ga_1,\nabh \widetilde \Ga_1,\Kc,Y\}.
\eea
\begin{remark}
\lab{remark;small-large}
Our local existence result   allows  the quantities  $\widetilde{\Ga}\setminus \Ga$ to be large, while the  $\Ga$ 
 quantities are kept small,  see also Remark   \ref{rem:large-free-scalars}. 
To prove the main borderline decay result of this paper, we  can assume  that all quantities  are small, that is we do not need to make the distinction at all.

\end{remark}

\begin{proposition}
\lab{prop:linearized-eqns}
The following equations hold:
\begin{align}
\lab{eq:R-transport-kac}
N \thc &= \mu -2r^{-1}\thc-2 r^{-2}\ao +\widetilde\Ga_1\cdot \widetilde\Ga_1-\frac 12 \CC_{Ham}, \\
\lab{eq:R-transport-Kc}
N \Kc &= r^{-1} \mu- \divh Y-3r^{-1}\Kc -2 r^{-3} \ao+\Ga_1\cdot \Ga_2\\
&\notag \quad +2\, \slashed{d}(\log\ah)\cdot Y-\thh\cdot \left(\nabh\hot\, \slashed{d}(\log\ah)-\slashed{d}(\log\ah)\hot\, \slashed{d}(\log\ah)\right) ,\\
\lab{eq:R-transport-ao}
\laph \ao&= \Kc- r^{-1} \thc-\mu -\laph(\Ga_0\cdot \Ga_0)+\Ga_1\cdot\Ga_1, 
\end{align}
\begin{align}
\lab{eq:unconditional-Codazzi-1}
\d_1 \d_2 \thh&= (\frac 12 \laph\trth,0)-(\divh Y,\curlh Y),\\
\lab{eq:N-trt-linearized}
N \trt&= \divh \Xi + 2 r^{-1} \Pi  - r^{-1} \trt +\widetilde\Ga_1\cdot \widetilde\Ga_1 - (\CC_{Mom})_N,\\
\lab{eq:N-div-Xi}
\ah N \divh\Xi &= - \divh\divh (\ah\kh)-4 r^{-1} \, \divh\Xi+\laph(\Pi+\frac 12 \trt)+\divh (\ah\, \slashed\CC_{Mom}) \\
& \quad +\Ga_1\cdot \Ga_2+ \nabh\left((\Pi,\thh) \cdot (\slashed{d}(\log\ah),\Xi)\right)+Y\cdot \Xi,\notag \\
\lab{eq:N-curl-Xi}
\ah N \curlh \Xi &= - \curlh\divh (\ah\kh)-4 r^{-1} \, \curlh\Xi+\curlh (\ah\, \slashed\CC_{Mom})\\
&\notag \quad +\Ga_1\cdot \Ga_2+ \nabh\left((\Pi,\thh) \cdot (\slashed{d}(\log\ah),\Xi)\right)+Y\cdot \Xi, 
\end{align}
\begin{align}
\lab{eq:laph-u-K-pert}
-\laph_{\mathring\ga}u-2u & = r^2 \Kc+r^2\Kc (e^{2u}-1)+(e^{2u}-1-2u), \\
\lab{eq:u-ell=1-pert}
u_{\ell=1} &= -\frac 12 (e^{2u}-1-2u)_{\ell=1},\\
\lab{eq:nab-hot-b-pert}
\nabz\hot b^{\flat,0} &=-2 e^{-2u} \ah\, \thh, \\
\lab{eq:divz-b-flat-ell=1-pert}
(\divz (e^{2u} b^{\flat,0}))_{\ell=1} & = -(e^{2u} (\ah\, \trth-2r^{-1}))_{\ell=1},\\
\lab{eq:curlz-b-flat-ell=1-pert}
(\curlz (e^{2u} b^{\flat,0}))_{\ell=1} &=0.
\end{align}
Here, the terms with $\cdot$    are schematic  expressions of    various    tensorial products.
\end{proposition}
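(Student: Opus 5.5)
We obtain each identity by linearizing the exact equations already established, namely Proposition \ref{prop:Unconditional-equations-1}, Proposition \ref{prop:basic-relation-with-shift} and Proposition \ref{prop:b-flat}, around the Euclidean values $\trth=2r^{-1}$, $K=r^{-2}$, $\ah=1$, $u=b=0$. We then collect the quadratic remainders into the schematic classes \eqref{eq:schematic-quantities}--\eqref{eq:schematic-quantities-enlarged}.

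The identities \eqref{eq:nab-hot-b-pert}, \eqref{eq:divz-b-flat-ell=1-pert} and \eqref{eq:curlz-b-flat-ell=1-pert} are exactly Proposition \ref{prop:b-flat}.

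For \eqref{eq:laph-u-K-pert}, we write $r^2K=1+r^2\Kc$ in \eqref{eq:laph-u-K}. This gives
\[
-\laph_{\gacr}u=(1+r^2\Kc)e^{2u}-1 .
\]
We expand the right-hand side as $2u+r^2\Kc+r^2\Kc(e^{2u}-1)+(e^{2u}-1-2u)$ and move the term $2u$ to the left.

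For \eqref{eq:u-ell=1-pert}, we expand the barycenter condition \eqref{eq:barycenter-1} as $\int(1+2u+(e^{2u}-1-2u))x^i=0$. Since $\int x^i=0$ and the $x^i$ are proportional to the $J_p$, this yields $2u_{\ell=1}+(e^{2u}-1-2u)_{\ell=1}=0$.

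For \eqref{eq:unconditional-Codazzi-1}, we apply $\d_1$ to \eqref{eq:unconditional-Codazzi}. We use $\curlh\nabh\trth=0$, and the divergence of a gradient gives $\laph\trth$.

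For the transport equations, we substitute the linearized variables into the unconditional equations.

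In \eqref{eq:structure-constraint-HH}, we use Lemma \ref{lem:loga} to write $\divh\pp=-\laph\log\ah$. The definition \eqref{eq:def-mu} then gives $\divh\pp+K=\mu+\frac14(\trth)^2$. Hence
\[
N\trth=\mu-\tfrac12(\trth)^2-\tfrac12|\thh|^2+\dots
\]
Expanding $-\frac12(2r^{-1}+\thc)^2$ gives $-2r^{-2}-2r^{-1}\thc$ plus quadratic terms. For the left-hand side we have $N(2r^{-1})=-2r^{-2}\ah^{-1}=-2r^{-2}+2r^{-2}\ao+O(\ao^2)$, because $N(r)=\ah^{-1}$. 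The two $-2r^{-2}$ terms cancel, and the remaining linear terms give exactly $-2r^{-1}\thc-2r^{-2}\ao$. The terms $\Pi\trt$, $|\Xi|^2$ and $|\kh|^2$ are $\widetilde\Ga_1\cdot\widetilde\Ga_1$.

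For \eqref{eq:R-transport-Kc}, we use \eqref{eq:unconditional-Bianchi} in the same way. The term $\frac12\trth\divh\pp$ is rewritten through $\mu$ as $r^{-1}(\mu-K+\frac14(\trth)^2)$, and the term $-\trth K$ is linearized. The linear contributions in $\Kc$, $\thc$ and $\ao$ then come from:
\begin{itemize}
\item $-2r^{-1}\Kc$, from $-\trth K$;
\item $-r^{-1}\Kc$, from $-r^{-1}K$;
\item the expansion of $\frac14 r^{-1}(\trth)^2$;
\item $N(r^{-2})=-2r^{-3}\ah^{-1}$.
\end{itemize}
The $\thc$ terms cancel, namely $-r^{-2}\thc$ from $-\trth K$ against $+r^{-2}\thc$ from the square. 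This leaves $-3r^{-1}\Kc-2r^{-3}\ao$. The terms containing $\pp$ and $Y$ are kept as they are written in \eqref{eq:R-transport-Kc}.

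For \eqref{eq:R-transport-ao}, we rearrange the definition of $\mu$. We write $\log\ah=\ao-\frac12\ao^2+\dots$, so that $\laph\log\ah=\laph\ao+\laph(\Ga_0\cdot\Ga_0)$, and expand $\frac14(\trth)^2=r^{-2}+r^{-1}\thc+\Ga_1\cdot\Ga_1$.

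For \eqref{eq:N-trt-linearized}, we substitute $\trth=2r^{-1}+\thc$ in \eqref{eq:structure-constraint-Phi-R}.

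For \eqref{eq:N-div-Xi} and \eqref{eq:N-curl-Xi}, we apply $\divh$ and $\curlh$ to the $\Xi$ equation multiplied by $\ah$. We use the commutator $[\ah\nabh_N,\nabh]$, which produces $-\ah\th\cdot\nabh\Xi$ plus curvature terms $Y\cdot\Xi$, together with terms in $\nabh\ah$. The factor $\frac32\trth$ combines with $\ah\trth\approx 2r^{-1}$ from the commutator to give $-4r^{-1}$. Applying $\curlh$ kills $\nabh(\Pi+\frac12\trt)$.

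The main obstacle is this last commutator computation. We must make sure that every error term fits the stated schematic forms, with no uncontrolled $\widetilde\Ga\cdot\widetilde\Ga$ terms at the $\Ga_1\cdot\Ga_2$ slot. Terms involving $\thh$, $\Pi$ and $\kh$ must be kept explicit in the forms $\nabh\big((\Pi,\thh)\cdot(\slashed d\log\ah,\Xi)\big)$, $Y\cdot\Xi$ and $\divh\divh(\ah\kh)$.
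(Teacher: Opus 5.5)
Your proposal is correct and follows the same route as the paper: you linearize the unconditional equations of Proposition \ref{prop:Unconditional-equations-1} using Lemma \ref{lem:loga} and the definition of $\mu$, and you read off the gauge identities from Propositions \ref{prop:basic-relation-with-shift} and \ref{prop:b-flat}, with your explicit expansions supplying the bookkeeping that the paper delegates to \cite{CK25}. Two small points in the $\Xi$ step: first, the trace part of the commutator term $-\ah\,\th\cdot\nabh\Xi$ is $-\frac12\ah\trth\,\divh\Xi\approx -r^{-1}\divh\Xi$ (not $2r^{-1}$), and this together with $-\frac32\ah\trth\approx -3r^{-1}$ gives $-4r^{-1}$; second, as the paper remarks, the resulting $\laph(\ah\Pi)$ and $\frac12\ah\laph\trt$ must still be replaced by $\laph\Pi$ and $\frac12\laph\trt$, with the difference absorbed into the error terms (and the $\kh$-terms reduce to $-\divh\divh(\ah\kh)$ because the $\ah\pp\cdot\kh$ coming from $\ah\pp\cdot\k$ cancels the term produced by moving $\ah$ inside $\divh\kh$).
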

\begin{proof}
The equations \eqref{eq:R-transport-kac}-\eqref{eq:N-curl-Xi} follow from  Proposition \ref{prop:Unconditional-equations-1} in the same way as Proposition 2.25 in \cite{CK25} (by taking $\Up=1$ there) and Lemma \ref{lem:loga}. 
For \eqref{eq:N-div-Xi}, we replace $\frac 12\ah\laph\trt$ and $\laph(\ah \Pi)$ with $\frac 12 \laph\trt$ and $\laph\Pi$, respectively, since the difference can be absorbed into $\Ga_1\cdot \Ga_2$ in the equations (note that $\laph\log\ah\in \nabh\Ga_1\subseteq\Ga_2$). 
Moreover, in \eqref{eq:R-transport-Kc}, \eqref{eq:N-div-Xi}, and \eqref{eq:N-curl-Xi}, we keep track of the products involving $\thh,\Thh,\Pi,Y$ from Proposition \ref{prop:Unconditional-equations-1}, including those arising from the commutators of $N$ with $\divh$ and $\curlh$ according to Lemma \ref{lem:commutation-1}.
The equations \eqref{eq:laph-u-K-pert}-\eqref{eq:curlz-b-flat-ell=1-pert} follow directly from Propositions \ref{prop:basic-relation-with-shift} and \ref{prop:b-flat}.
\end{proof}
We now introduce some notations for error terms.
\begin{definition}\lab{def:error-terms}
We denote by $\RR_{tran}(\phi)$, schematically, the following expressions
\begin{equation*}
N(\phi_{\ell=0})-(N\phi)_{\ell=0},\quad N(\phi_{\ell= 1})-(N\phi)_{\ell= 1},\quad N(\phi_{\ell\geq 2})-(N\phi)_{\ell\geq 2}.
\end{equation*}
We denote by $\RR_{ellip}(\phi)$, schematically, the following expressions
\begin{equation*}
(r^{-2} \laph^{-1} \phi)_{\ell\geq 2}-r^{-2} \laph^{-1} (\phi_{\ell\geq 2}),\quad (r^{-2} \laph^{-1} \phi)_{\ell=1}-r^{-2} \laph^{-1} (\phi_{\ell=1}),\quad (r^{-2}\laph^{-1} \phi)_{\ell=0}.
\end{equation*}
We also use $\RR(\phi)$ to denote errors of either type.
\end{definition}

Using such notations, we write, according to \eqref{eq:R-transport-Kc},
\begin{equation}
\lab{eq:N(Kc)}
\bsplit
N \Kc_{\ell\neq 1} &= \left(r^{-1} \mu-\divh Y-3r^{-1}\Kc -2 r^{-3} \ao+\Ga_1\cdot \Ga_2 \right)_{\ell\neq 1} \\
&\quad +2\left(\, \slashed{d}(\log\ah)\cdot Y-\thh\cdot \left(\nabh\hot\, \slashed{d}(\log\ah)-\slashed{d}(\log\ah)\hot\, \slashed{d}(\log\ah)\right)\right)_{\ell\neq 1}+ \RR_{tran} (\Kc).
\end{split}
\end{equation}



\section{Local Existence Theorem}\lab{sec:local-existence}

\subsection{Precise statement of the local existence}

\begin{definition}\lab{def:sphere-data}
We refer to the following as a set of canonical sphere data of radius $r_0$:
\begin{itemize}
\item A Riemannian $2$-sphere $(S,\ga)$, with a canonical uniformization form of the metric $\ga=r_0^2 e^{2u} \gacr$ in some coordinates $(\vth^A)$;
\item The radial expansion $\trth$ and the expansion in the time direction $\trt$;
\item The $1$-form $\Xi$ on $(S,\ga)$ with $\divh \Xi=0$.
\end{itemize}
\end{definition}

\begin{remark}
The sphere data defined here is analogous to what is needed for a bifurcating characteristic initial value problem. Indeed, in that setting, one needs to have the outgoing and incoming null expansions, along with the torsion $1$-form $\zeta_a:=\frac 12 \g(\D_{e_a} e_4,e_3)$, where $\{e_3,e_4\}$ is a spacetime null pair. The requirement $\divh\Xi=0$, which would correspond to a condition on $\divh\zeta$ for the characteristic problem, is legitimate in view of the ambiguity of the conformal scaling $(e_3,e_4)\mapsto (\la^{-1} e_3,\la e_4)$. 
\end{remark}

\begin{definition}\lab{eq:almost-round-sphere-data}
We say a set of canonical sphere data $(S,u,\vth^A, \trth,\trt, \Xi)$ is $O_{i}(\eps_0)$-almost round with radius $r_0$, if the data satisfies the estimate
\beaa
r_0^{-1} \| u \|_{\H_0^{i}(S)} \lesssim \eps_0,\quad \| \trth-2r_0^{-1}, \trt,\Xi \|_{\H_0^{i-1}(S)} \lesssim \eps_0. 
\eeaa
Here, the $\H_0^i$ norms are defined using the round metric in the coordinates $(\vth^A)$ as in Definition \ref{def:L2-Hs-S_r}.
\end{definition}

\begin{theorem}[Local existence]\lab{thm:local-existence-precise}
Given $s\geq 3$, consider a set of $O_{s+2}(\eps_0)$-almost round canonical sphere data $(S,u,\vth^A,\trth,\trt,\Xi)$ of radius $r_0$, as defined in \eqref{eq:almost-round-sphere-data}, with smallness parameter $\eps_0 \ll 1$.
Then, given four scalars $(\BB,\BBd, \KK,\KKd)$ on $(r_0,\infty)\times \mathbb{S}^2$,\footnote{As the argument is entirely local in $r$, it would in fact suffice to prescribe them only on a sufficiently small interval near $r_0$.}      there exists
$\eps_{loc}>0$, depending on $r_0$ and $\sup_r \|\BB,\BBd, \KK,\KKd\|_{\H^s(S_r)}$, and a set of initial data $(g,k)$ on $(r_0,r_0+\eps_{loc})\times \mathbb{S}^2$ solving \eqref{ece}, such that the metric can be written as
\begin{equation*}
g=\ah^2 dr^2+ r^2 e^{2u} \gacr_{AB} (d\vth^A +b^A dr)(d\vth^B+b^B dr),
\end{equation*}
and the data satisfies the following:
\begin{itemize}
\item The gauge conditions
\begin{itemize}
\item The $\ell=0$ conditions
\begin{equation}
\lab{eq:ell=1-condition-u-solution1}
\overline{\ao}=0,\quad \overline{\Pi}=-\frac 12 \overline{\trt}. 
\end{equation}
\item The $\ell=1$ conditions 
\begin{equation}\lab{eq:ell=1-condition-u-solution}
\int_{\mathbb{S}^2} e^{2u} x^i =0,\quad (e^{2u} \curlh b)_{\ell=1}=0;
\end{equation}
\item The general conditions on gauge scalars
\begin{equation}
\lab{eq:ell=1-condition-u-solution3}
\mu_{\ell\geq 1}=\nu=0.
\end{equation}
\end{itemize}
\item The prescribed scalar conditions
\begin{equation}\lab{eq:prescribed-scalar-conditions}
\begin{gathered}
(\divh Y-\BB)_{\ell\geq 2}=0,\quad (\curlh Y-\BBd)_{\ell\geq 2}=0, \\
\Big(\laph \big( \Pi+\frac 12 \trt \big)-\KK\Big)_{\ell\geq 2}=0,\quad \big(r^{-4} \ah N(r^4 \curlh \Xi)-\KKd\big)_{\ell\geq 2}=0.
\end{gathered}
\end{equation}
\end{itemize}
Moreover, for any positive $\eps_{loc}'<\eps_{loc}$ and a set of initial data $(g',k')$ defined on $(r_0,r_0+\eps_{loc}')\times \mathbb{S}^2$ solving \eqref{ece} and verifying    the same conditions  as $(g, k)$, we have $(g,k)|_{(r_0,r_0+\eps_{loc}')\times \mathbb{S}^2}=(g',k')$.
\end{theorem}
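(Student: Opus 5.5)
We will follow the iteration scheme of \cite{CK25}, adapted to forward integration from $S_{r_0}$ and to the uniformization gauge of Section \ref{section:gauge-eq}. The unknowns are collected in $\widetilde\Ga$: the lapse $\ao$, the conformal factor $u$, the shift $b^{\flat,0}$, and the quantities $\thc,\thh,\Kc,Y,\trt,\Xi,\Pi,\kh$. All of them are viewed as tensors on $(r_0,r_0+\eps_{loc})\times\mathbb{S}^2$ in the fixed coordinates $(\vth^A)$. Given the $n$-th iterate, we form the metric $g\n$ in the canonical form \eqref{eq:metric-canonical-form}. We then define the $(n+1)$-th iterate by solving the linearized HCS relative to $g\n$, organized in its triangular structure.

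\textbf{Step 1: the transport equations.} Integrate forward from the sphere data at $r_0$:
\begin{itemize}
\item $\thc$ via \eqref{eq:R-transport-kac}, with $\mu=\overline{\mu}$;
\item $\Kc_{\ell\neq 1}$ via \eqref{eq:N(Kc)}, with $(\divh Y)_{\ell\geq 2}=\BB_{\ell\geq2}$;
\item $\trt$ via \eqref{eq:N-trt-linearized}, with $\divh\Xi=\nu=0$;
\item $\curlh\Xi$ via the $\KKd$ condition, i.e. $N(r^4\curlh\Xi)=r^4\ah^{-1}\KKd$ on $\ell\geq 2$, together with \eqref{eq:N-curl-Xi} on $\ell=1$.
\end{itemize}

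\textbf{Step 2: the elliptic equations on each $S_r$.} On each sphere we then solve:
\begin{itemize}
\item $\Xi$ from $\d_1\Xi=(0,\curlh\Xi)$;
\item $\Pi$ from $\laph(\Pi+\frac12\trt)=\KK$ on $\ell\geq2$, with $\overline\Pi=-\frac12\overline{\trt}$;
\item $\ao$ from \eqref{eq:R-transport-ao}, with $\overline{\ao}=0$;
\item $u$ from \eqref{eq:laph-u-K-pert}, where the $\ell=1$ part of $u$ is fixed by \eqref{eq:u-ell=1-pert} and the $\ell=1$ part of $\Kc$ is read off from this same equation (the kernel of $-\laph_{\gacr}-2$ is $\ell=1$, and the barycenter condition removes it);
\item $\thh$ from the Codazzi system \eqref{eq:unconditional-Codazzi-1}, with $Y$ determined by $\d_1 Y=(\BB,\BBd)$ on $\ell\geq2$ and its $\ell\leq1$ modes by the Codazzi/Bianchi relations;
\item $\kh$ from the momentum constraint and $\Xi$, via \eqref{eq:N-div-Xi} with $\nu=0$;
\item finally, $b^{\flat,0}$ from \eqref{eq:nab-hot-b-pert}, with its $\ell=1$ modes fixed by \eqref{eq:divz-b-flat-ell=1-pert}--\eqref{eq:curlz-b-flat-ell=1-pert}.
\end{itemize}
The elliptic estimates for $\d_1,\d_2$ and $\laph$ on near-round spheres lose no derivatives relative to the transport equations. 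Large quantities such as $\thh,\Thh,\Pi,Y$ enter the equations for $\Ga$ only linearly, through the terms isolated in Proposition \ref{prop:linearized-eqns}. Consequently the $\H^s$ bounds close on a short interval whose length $\eps_{loc}$ depends on the size of $(\BB,\BBd,\KK,\KKd)$.

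\textbf{Step 3: convergence of the iteration.} Prove that the iterates are bounded in $\H^s$ and contract in a lower norm (say $\H^{s-1}$) for $\eps_{loc}$ small. Because the time interval is short, the transport integrals contribute a factor $\eps_{loc}$, which gives smallness in the difference estimates. Error terms of the form $\RR_{tran}$ and $\RR_{ellip}$ arise from commuting mode projections with $N$ and $\laph^{-1}$ on a nonround sphere. These carry a factor of $\Ga$ and are treated perturbatively.

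\textbf{Step 4: limiting identification (the main obstacle).} The limit $(g,k)$ is built from $(\ah,u,b)$ and the $k$-components. One must show that its actual geometric quantities coincide with the limiting unknowns, and that $\CC_{Ham}=0$ and $\CC_{Mom}=0$. We will proceed as in \cite{CK25}.
\begin{itemize}
\item First, the metric quantities match by construction. By Proposition \ref{prop:basic-relation-with-shift}, $u,b$ reproduce $\thh$ and $\trth$, and $K$ is the Gauss curvature of $r^2e^{2u}\gacr$.
\item Next, the differences between the geometric $K$, $\trth$ and the unknowns satisfy a homogeneous linear transport--elliptic system. This is the content of the unconditional equations of Proposition \ref{prop:Unconditional-equations-1}. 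The system has zero data at $r_0$, so Gronwall forces the differences, and then $\CC_{Ham}$ and $(\CC_{Mom})_N$, to vanish.
\item It remains to show $\slashed\CC_{Mom}=0$, which we obtain from its $\divh$/$\curlh$ components.
\end{itemize}
The new difficulty compared with \cite{CK25} is the $\ell=1$ consistency. The barycenter condition must propagate in $r$, so we must check that \eqref{eq:divz-b-flat-ell=1-pert} is exactly its $r$-derivative, as in Proposition \ref{prop:b-flat}. We must also check that the $\ell=1$ modes of $K$ obtained elliptically agree with the Bianchi transport \eqref{eq:R-transport-Kc}. I would derive the latter by differentiating \eqref{eq:laph-u-K} in $r$, substituting \eqref{eq:pa-r-u} and \eqref{eq:nab-hot-b}, and comparing with \eqref{eq:unconditional-Bianchi}; the $\ell=1$ discrepancy then satisfies a closed homogeneous equation and vanishes.

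\textbf{Uniqueness.} Any other solution $(g',k')$ satisfying the same gauge conditions and prescribed-scalar conditions is a fixed point of the same map. The contraction estimate then gives $(g,k)=(g',k')$ on the common interval.
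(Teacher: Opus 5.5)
Your overall architecture (a triangular iteration relative to $g\n$, boundedness in $\H^s$, contraction in a weaker norm, identification of the limit, uniqueness from the difference estimate) matches the paper's. However, Step 4 has a genuine gap, at exactly the paper's main new point. In this gauge it is $K$ that is unambiguous by construction: the limit of \eqref{eq:iterate-laph-u-K} reads $-\laph_{\gacr}u\i+1=r^2K\i e^{2u\i}$, so $K\i=K(g\i)$ on all modes. It is $\trth$ that is \emph{not}. The iteration enforces only the traceless part of \eqref{eq:identity-pa-r-u-Lie-b-ga} (through \eqref{eq:iterate-nab-hot-b}) and the $\ell=1$ projection of its trace (through \eqref{eq:iterate-pa-r-u-ell=1}). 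It never enforces the $\ell\neq 1$ part of \eqref{eq:pa-r-u}. So the geometric $\trth(g\i)=(\ah\i)^{-1}(2\pa_r u\i+2r^{-1}-\divh b\i)$ has no a priori reason to coincide with the transported $\trth\i$, and your claim that $u,b$ reproduce $\trth$ by construction is false. The Gronwall argument you substitute also fails as stated, for two reasons. First, the only transport equation for $\trth$ is \eqref{eq:R-transport-kac}, which contains $-\frac12\CC_{Ham}$, so the difference system is not homogeneous until you already know the constraint holds. Second, there is no vanishing datum at $r_0$: $\trth(g\i)|_{r=r_0}$ involves $\pa_r u\i$ and $b\i$ there and is not fixed by the sphere data. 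Your $\ell=1$ concerns are also misdirected. The barycenter condition need not propagate, because \eqref{eq:iterate-ell=1-of-u} imposes it on every sphere. The $\ell=1$ part of the Bianchi equation needs no check, because \eqref{eq:unconditional-Bianchi} holds unconditionally for $g\i$ and $K\i=K(g\i)$.

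What is missing is a sphere-by-sphere elliptic identification. Compare the limit of \eqref{eq:R-transport-P2} with the unconditional equation \eqref{eq:N(Kc)} for $g\i$. The left-hand sides agree since $K$ is unambiguous, which leaves on $\ell\neq 1$ an algebraic relation between $\mu(g\i)-\mu\i=-\frac14\big((\trth(g\i))^2-(\trth\i)^2\big)$ and $(\divh Y(g\i))_{\ell\neq1}-\BB-\Bb_{\ell=0}\i$. Comparing the two Codazzi equations (using $\thh\i=\thh(g\i)$) replaces the latter by $\frac12(\laph\,\delta\trth)_{\ell\neq 1}$, where $\delta\trth:=\trth(g\i)-\trth\i$. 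You then obtain $(\laph+2r^{-2})\,\delta\trth=O(\eps)\big(r^{-2}\delta\trth,\,r^{-1}(Y(g\i)-Y\i)\big)$ on $\ell\neq1$. The kernel of $\laph+2r^{-2}$ is exactly $\ell=1$, and this is where \eqref{eq:divz-b-flat-ell=1-pert} really enters. The $r$-derivative of the barycenter condition holds geometrically for $g\i$ with $\trth(g\i)$ (Proposition \ref{prop:b-flat}), while \eqref{eq:iterate-pa-r-u-ell=1} imposes the same relation with $\trth\i$; comparing them gives $\delta\trth_{\ell=1}=O(\eps)\,\delta\trth$. The limit of \eqref{eq:iterate-Y} then gives $\d_1(Y\i-Y(g\i))=-\frac12(\laph\,\delta\trth,0)$, modulo a mean term that integrates to zero. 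Absorbing the $O(\eps)$ factors yields $Y\i=Y(g\i)$ and $\trth\i=\trth(g\i)$, and only after this do $\CC_{Ham}=0$ and $(\CC_{Mom})_N=0$ follow from the $\thc$ and $\trt$ transport equations.

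Two smaller points also need repair. Your $\Pi$ and $\kh$ equations leave the $\ell\le1$ modes unspecified and use only the divergence half of $\d_1\d_2(\ah\Thh)$. The paper solves both components with multipliers $(\Kk_{\ell\le1},\Kkd_{\ell\le1})$ fixed by solvability (Lemma \ref{lem:comparison}), feeds the same multipliers into \eqref{eq:iteration-Pi} and \eqref{eq:curl-Xi-iterate-1}, and shows in the limit that the leftover averages vanish; this is what yields $\slashed\CC_{Mom}=0$. Also, with large free scalars the Hodge systems gain no factor $\eps_{loc}$. Closing the estimates therefore requires subtracting the round-sphere reference solutions \eqref{eq:def-zero-quantities}, using the already improved $u\nn$ in the elliptic operators, and working in the $\langle M\rangle^{-1}$-weighted norm \eqref{eq:Psi-norms-gz-M}.
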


\begin{remark}\lab{rem:large-free-scalars}
Note that in the statement   \eqref{eq:prescribed-scalar-conditions}   of  our local  existence result, the scalars $(\BB,\BBd,\KK,\KKd)$ are allowed to be large.\footnote{The smallness assumptions on the data  are only needed  in the global result.}   In fact, in a forthcoming work  \cite{CK26-2}, we will use this important freedom  to study the spacelike short-pulse type data.
\end{remark}

\subsection{Heuristics for how to treat  the  large free scalars}\lab{sec:heuristics-large-scalars}
As already mentioned   earlier, in order to  include the regime of large free scalars,   we divide our main quantities  in the sets $\Ga$ and $ \widetilde{\Ga}$, as introduced 
 in \eqref{eq:schematic-quantities},  \eqref{eq:schematic-quantities-enlarged}   and   Remark \ref{remark;small-large}.  The small quantities in $\Ga$  have their size  inherited from the small sphere data, while the  
 size of the  large quantities in $\widetilde{\Ga}\setminus\Ga$ is induced by   the large free scalars.

There are two issues to address:
\begin{itemize} 
\item The first is to  decouple  the estimates for the small quantities  from 
the large free scalars. For the transport equations, this causes no difficulty, since all relevant estimates come with the small factor $\eps_{loc}$. Moreover, the elliptic equation for $\ao$ (due to the definition of $\mu$ in \eqref{eq:def-mu})  and the elliptic equation for $u$ (due  to \eqref{eq:laph-u-K-pert}-\eqref{eq:u-ell=1-pert})  do not involve any problematic large quantities. Therefore, in the iterative procedure of Section \ref{sec:iteration-system},   one can first control the metric coefficients $\ah\nn$ and $u\nn$, as well as the transported quantities $\thc\mkern-0.1mu\nn$, $\Kc\nn$, $\trt\nn$, and $\Xi\nn$.
In particular, we can formulate the remaining elliptic equations using $\d\nn$ rather than $\d\n$, due to the newly obtained improved control of $u\nn$.

\item The second issue is then to estimate the large quantities themselves. If $M$ denotes the largeness originating from the free scalars, applying $\d\nn$ to a large background term may produce contributions of size $u\nn\cdot M$. Thus, in the norm we use, see \eqref{eq:Psi-norms-gz-M}, the natural scale for the perturbation of large quantities is $M$ times the size of the small perturbation.

\vspace{2ex}
It remains to check that the nonlinear terms in the remaining elliptic equations can be controlled. The  point here   is that every potentially dangerous product involving a large quantity also contains one of the already improved small quantities $(\thc,\Xi,\slashed{d} (\log\ah))$. This is the structure that allows the boundedness and contraction argument to be closed in such a norm.
\end{itemize}

\subsection{The iteration system}\lab{sec:iteration-system}
We aim to prescribe scalars $(\BB,\BBd,\KK,\KKd)$, supported on $\ell\geq 2$, such that
\begin{equation*}
\begin{gathered}
\divh Y=\BB+\Bb_{\ell\leq 1},\quad \curlh Y=\BBd+\Bbd_{\ell\leq 1},\\ \laph(\Pi+\frac 12\trt)=\KK+\Kk_{\ell\leq 1},\quad r^{-4}\ah N(r^4\curlh \Xi)=\KKd+\Kkd_{\ell\leq 1}.
\end{gathered}
\end{equation*}
We define the iterate $\Psi\n$ as the collection of the following quantities
\begin{equation}
\begin{gathered}
\thc\mkern-0.1mu\n, \Kc\n, \ao\n, \thh\n, Y\n, u\n, b\n,\\
\trt\n, \kh\n, \Xi\n, \Pi\n, \\
(\Bb_{\ell\leq 1}\n,\Bbd_{\ell\leq 1}\n), (\Kk_{\ell\leq 1}\n,\Kkd_{\ell\leq 1}\n),
\end{gathered}
\end{equation}
We consider the following iteration system, with $N\n:=(\ah\n)^{-1} (\pa_r - b\n\cdot \nabz)$, where $b\n$ is constructed as a $1$-form:\footnote{In other words, we are seeking the limit $b\i$ as $b^{\flat,0}$ defined in \eqref{eq:def-b-flat-0}.} 
\begin{align}
\lab{eq:R-transport-kac2}
N\n \thc\mkern-0.1mu\nn &= \mu_{\ell=0}\n -2r^{-1}\thc\mkern-0.1mu\n-2 r^{-2}\ao\n +\widetilde\Ga_1\n\cdot \widetilde\Ga_1\n, \\
\lab{eq:R-transport-P2}
N\n \Kc\nn_{\ell\neq 1} &= r^{-1}\mu\n_{\ell=0} - \left(\BB+\Bb_{\ell\leq 1}\n\right)_{\ell\neq 1}-3r^{-1}\Kc\n_{\ell\neq 1} -2 r^{-3} \ao\n_{\ell\neq 1} \\
&\quad \nonumber +\left(\Ga_1\n\cdot \Ga_2\n+2\, \slashed{d}(\log\ah\n)\cdot Y\n\right)_{\ell\neq 1}+\RR_{tran}\n (\Kc\n) \\
&\quad \notag -\left(\thh\n\cdot \left(\nabh\n\hot\, \slashed{d}(\log\ah\n)-\slashed{d}(\log\ah\n)\hot\, \slashed{d}(\log\ah\n)\right)\right)_{\ell\neq 1},\\
\lab{eq:iterate-laph-ao}
\laph\n \ao\nn &= \Kc\nn-\overline{\Kc\nn}\n- r^{-1} \Big(\thc\mkern-0.1mu\nn-\overline{\thc\mkern-0.1mu\nn}\n\Big) \\
&\quad \nonumber -\laph\n(\Ga_0\n\cdot \Ga_0\n)+\Ga_1\n\cdot\Ga_1\n-\overline{\Ga_1\n\cdot\Ga_1\n}\n, \\
\lab{eq:spherical-mean-ao2}
\overline{\ao\nn}\n &= 0,\\
 \lab{eq:iterate-laph-u-K}
  -\laph_{\gacr} u\nn-2 u\nn &= r^2 \Kc\nn+r^2\Kc\n (e^{2u\n}-1)+(e^{2u\n}-1-2u\n),\\
  \lab{eq:iterate-ell=1-of-u}
 u\nn_{\ell=1}&=-\frac 12 (e^{2u\n}-1-2u\n)_{\ell=1},
 \end{align}
 \begin{align}
 \lab{eq:unconditional-Codazzi-1-2}
\d_1\nn \d_2\nn \thh\nn &= \Big(\frac 12 \laph\n\thc\mkern-0.1mu\nn,0\Big)-(\BB+\Bb_{\ell\leq 1}\nn,\BBd+\Bbd_{\ell\leq 1}\nn),\\
 \lab{eq:iterate-nab-hot-b}
 \nabz \hot b\nn & = -2e^{-2u\nn} \ah\nn \thh\nn, \\
 \lab{eq:iterate-pa-r-u-ell=1}
\left(\divz  b\nn \right)_{\ell=1}&= -\left(\divz ((e^{2u\nn}-1)b\n)\right)_{\ell=1}\\
&\notag \quad +\left(e^{2u\nn} (2r^{-1}\ao\nn+\thc\mkern-0.1mu\nn+\ao\n\cdot \thc\mkern-0.1mu\n)\right)_{\ell=1},\\
  \lab{eq:iterate-ell=1-of-curl-b}
 \left(\curlz b\nn\right)_{\ell=1} &=- \left(\curlz ((e^{2u\nn}-1)b\n)\right)_{\ell=1}, \\
\lab{eq:iterate-Y}
\d_1\nn Y\nn &=  (\BB, \BBd)+(\Bb_{\ell\leq 1}\nn,\Bbd_{\ell\leq 1}\nn)  -\overline{\displaystyle (\BB, \BBd)+(\Bb_{\ell\leq 1}\nn,\Bbd_{\ell\leq 1}\nn)}\nn,
\end{align}
\begin{align}
\lab{eq:N-trt-linearized-iterate}
N\n\trt\nn &= 2 r^{-1} \Pi\n  - r^{-1} \trt\n + \widetilde\Ga_1\n\cdot \widetilde\Ga_1\n ,\\
\lab{eq:nu-iterate}
\divh\n \Xi\nn &=0, \\
\lab{eq:curl-Xi-iterate-1}
\ah\n N\n c\nn  &=-4r^{-1} c\nn+ \KKd-\Kkd_{\ell\leq 1}\n,\\
\lab{eq:curl-Xi-iterate-2}
\curlh\n \Xi\nn &= c\nn-\overline{c\nn}\n,\\
\lab{eq:d1-d2-Thh-iterate}
\d_1\nn \d_2\nn (\ah\n \Thh\nn) &= (\KK, -\KKd)+(\Kk_{\ell\leq 1}\nn,\Kkd_{\ell\leq 1}\nn)+\Ga_1\n\cdot \Ga_2\n \\
&\notag\quad +\nabh\n\left( (\thh\n,\Thh\n,\Pi\n)\cdot (\slashed{d} (\log\ah\nn),\Xi\nn)\right)\\
&\notag\quad +Y\n\cdot \Xi\nn,\\
\lab{eq:iteration-Pi}
\laph\nn \left(\Pi\nn+ \frac 12 \trt\nn \right) &= \KK + \Kk_{\ell\leq 1}\nn-\overline{\KK+ \Kk_{\ell\leq 1}\nn}\nn,\\
\lab{eq:average-Pi}
\overline{\Pi\nn}\nn&= -\frac 12 \overline{\trt\nn}\nn. 
\end{align}
Note that for the transport equations \eqref{eq:R-transport-kac2}-\eqref{eq:R-transport-P2}, \eqref{eq:N-trt-linearized-iterate} and \eqref{eq:curl-Xi-iterate-1}, their initial data at $r=r_0$ is given by the canonical sphere data---for $c\n$, we let $c\n|_{r=r_0}=(\curlh \Xi)|_{r=r_0}$. For \eqref{eq:R-transport-P2}, we also note that $\RR_{tran}\n$ only depends on $(a\n,b\n,u\n)$, according to the proof of Lemma \ref{lem:RR-transport}.

Here, the horizontal operators label with $\n$ are defined with respect to the metric $\ga\n$ through the expression
\bea\lab{eq:iterate-ga}
\ga\n:= r^2 \left(e^{2u\n}\right) \gacr.
\eea
The $3$-dim metric at the $n$-th step is defined as
\bea\lab{eq:iterate-g}
g\n:= (\ah\n)^2 dr^2 +\ga\n_{AB} \left(d\vth^A +(b\n)^A dr\right) \left(d\vth^B+(b\n)^B dr\right).
\eea

\subsection{Boundedness estimates}

We deal with the case when
\beaa
M:=\sup_r r^{-1} \| \BB,\BBd,\KK,\KKd \|_{\H^s(S_r)}
\eeaa can be a large number.
As a result, $b$, $\thh$, $\Thh$, $\Pi$ and $Y$ are not necessarily small. We therefore consider the following checked quantities
\bea\lab{eq:def-widecheck-large-quantities}
\widecheck b:=b-b_0,\quad \widecheck{\thh}:= \thh-\thh_0,\quad \widecheck \Thh:= \Thh-\Thh_0,\quad \widecheck \Pi:=\Pi-\Pi_0,\quad Y:= Y-Y_0,
\eea
where $b_0$, $\thh_0$, $\Thh_0$, $\Pi_0$, and $Y_0$ are determined through the following equations:
\begin{equation}\lab{eq:def-zero-quantities}
\begin{gathered}
\d_1\0 \d_2\0 \thh_0=(\BB,\BBd),\quad \d_1\0\d_2\0 \Thh_0= (\KK,-\dual \KK),\quad \d_1\0 Y_0 = (\BB,\BBd),\\
 \lapz \Pi_0= \KK,\quad (\Pi_0)_{\ell=0}=0, \\
 \nabz\hot b_0=\thh_0,\quad (\divz b_0)_{\ell=1}=(\curlz b_0)_{\ell=1}=0.
\end{gathered}
\end{equation}
\begin{remark}\lab{rem:M-eps-0}
In the case when $M\lesssim \eps_0$,   we can proceed without subtracting $b_0$, $\thh_0$, $\Thh_0$, $\Pi_0$, and $Y_0$ in \eqref{eq:def-widecheck-large-quantities}.
\end{remark}

Recall the definition of $\Ga_i$ and $\widetilde \Ga_i$, $i=0,1,2$,  in \eqref{eq:schematic-quantities} and \eqref{eq:schematic-quantities-enlarged}.
We furthermore denote 
\begin{equation*}
\widecheck{\widetilde \Ga}_0= \{\ao,\widecheck b, u\},\quad \widecheck{\widetilde \Ga}_1=\{r^{-1} \widecheck{\widetilde \Ga}_0,\thc, \slashed{d} (\log\ah),\Xi,\trt,\widecheck \thh, \widecheck\Thh, \widecheck\Pi\},\quad \widecheck{\widetilde \Ga}_2= \{r^{-1}\widecheck{\widetilde \Ga}_1,\nabh\widecheck{\widetilde \Ga}_1,\Kc, \widecheck Y\}.
\end{equation*}
Define
\begin{equation}\lab{eq:Psi-norms-gz-M}
\bsplit
\| \Psi \|_{s,0,  I,M} &:=  \sup_{r\in I}\, r^{-1} \| \ao, \widecheck b, u \|_{\H^{s+2}_0(S_r)}+ \| \thc, \trt, \Xi, \langle M\rangle^{-1} \widecheck \thh, \langle M\rangle^{-1} \widecheck\Thh, \langle M\rangle^{-1} \widecheck\Pi \|_{\H^{s+1}_0(S_r)} \\
&\quad + r \| \Kc, \langle M\rangle^{-1} \widecheck Y \|_{\H_0^s(S_r)}+ \langle M\rangle^{-1} r^3 |(\Bb_{\ell\leq 1},\Bbd_{\ell\leq 1},\Kk_{\ell\leq 1},\Kkd_{\ell\leq 1})|. 
\end{split}
\end{equation}

We have the following boundedness result. 
\begin{proposition}\lab{prop:boundedness}
For the $O_{s+2}(\eps_0)$-almost round canonical sphere data $(S,u,\vth^A,\trth,\trt,\Xi)$ of radius $r_0$ and $(\BB,\BBd,\KK,\KKd)$ as considered in the statement of Theorem \ref{thm:local-existence-precise}, there exists $\eps_{loc}>0$, such that if 
  $\Psi\n$ satisfies,  for some $n$,
\begin{align*}
\| \Psi\n \|_{s,0,[r_0,r_0+\eps_{loc}],M} \leq \eps_0^{\frac 23},
\end{align*} 
then the system \eqref{eq:R-transport-kac2}-\eqref{eq:average-Pi}, together with the initial conditions on $\{r=r_0\}$ determined by the sphere data, admits a unique solution $\Psi\nn$ satisfying
\begin{align*}
\| \Psi\nn \|_{s,0,[r_0,r_0+\eps_{loc}],M} \leq \eps_0^{\frac 23}.
\end{align*}
\end{proposition}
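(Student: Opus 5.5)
We solve the iteration system \eqref{eq:R-transport-kac2}--\eqref{eq:average-Pi} for $\Psi\nn$ in the triangular order described in Section \ref{sec:heuristics-large-scalars}, and estimate each quantity in the norm \eqref{eq:Psi-norms-gz-M} as soon as it is produced. Throughout, the bootstrap bound $\|\Psi\n\|\le \eps_0^{2/3}$ gives that $\ga\n$ is $\eps_0^{2/3}$-close to $r^2\gz$ in $\H_0^{s+2}$. Hence the elliptic theory for $\d_1\n,\d_2\n,\laph\n$ on $S_r$ is uniformly equivalent to that of the round sphere, modulo errors of size $\eps_0^{2/3}$. The same closeness lets us treat $\RR_{tran}$ and $\RR_{ellip}$ (Definition \ref{def:error-terms}) as perturbative terms quadratic in $\Ga\n$.

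\emph{Step 1: transport quantities.} We first integrate \eqref{eq:R-transport-kac2}, \eqref{eq:R-transport-P2} for $\ell\neq1$, \eqref{eq:N-trt-linearized-iterate} and \eqref{eq:curl-Xi-iterate-1} along $N\n$, starting from the sphere data. Each right-hand side is bounded in $\H_0^{s+1}$ (respectively $\H_0^s$ for $\Kc$) by $C(r_0)(\eps_0^{2/3}+M\eps_0^{2/3}+M)$. Integrating over an interval of length $\eps_{loc}$ therefore gives
\[
\|\thc\nn,\trt\nn,r\Kc\nn_{\ell\neq1},c\nn\|\lesssim \eps_0+\eps_{loc}\,C(r_0,M).
\]
This is $\le\frac12\eps_0^{2/3}$ once $\eps_{loc}$ is chosen small depending on $r_0$ and $M$. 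The transport along $N\n$ involves the shift $b\n\cdot\nabz$, which loses one angular derivative. We handle it by commuting $(r\nabz)^i$ and performing the standard $L^2$ energy estimate: the top-order term $b\n\cdot\nabz$ is absorbed by integrating by parts, at the cost of $\|\divz b\n\|_{L^\infty}$. This cost is bounded because $b\n$ is controlled in $\H_0^{s+2}$.

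\emph{Step 2: small elliptic quantities.} Next we solve \eqref{eq:iterate-laph-ao}--\eqref{eq:spherical-mean-ao2} for $\ao\nn$. Solvability follows because the mean has been subtracted, and the estimate gains two derivatives. We then solve \eqref{eq:iterate-laph-u-K}--\eqref{eq:iterate-ell=1-of-u} for $u\nn$. The operator $-\lapz-2$ is invertible on $\ell\neq1$ with two-derivative gain, and the $\ell=1$ part is fixed algebraically and is quadratic in $u\n$. In particular, $\Kc\nn_{\ell=1}$ never has to be transported. Both right-hand sides involve only quantities in $\Ga$, so the resulting bounds have size $\eps_0+C\eps_{loc}$ and are independent of $M$. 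Together with Step 1, this gives $\Xi\nn$ through the div--curl system \eqref{eq:nu-iterate}--\eqref{eq:curl-Xi-iterate-2}, again independently of $M$ up to the factor $\eps_{loc}$.

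\emph{Step 3: large quantities with the new operators $\d\nn$.} Since $u\nn$ is now known, we solve the remaining Hodge systems using $\d\nn$: \eqref{eq:unconditional-Codazzi-1-2}, \eqref{eq:iterate-Y}, \eqref{eq:d1-d2-Thh-iterate} and \eqref{eq:iteration-Pi}--\eqref{eq:average-Pi}. The $\ell\le1$ constants $\Bb\nn_{\ell\le1},\dots$ are determined by solvability, namely orthogonality to the kernel/cokernel of $\d_1\nn\d_2\nn$, which is spanned by $\ell\le1$ modes up to $O(u\nn)$ corrections. We then subtract the background solutions \eqref{eq:def-zero-quantities}. For example,
\[
\d_1\nn\d_2\nn\widecheck{\thh}=(\tfrac12\laph\n\thc\mkern-0.1mu\nn,0)-(\Bb\nn_{\ell\le1},\Bbd\nn_{\ell\le1})-(\d_1\nn\d_2\nn-\d_1\0\d_2\0)\thh_0 .
\]
The commutator term is bounded by $\|u\nn\|\,M$. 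This bound is exactly why the norm carries the weight $\langle M\rangle^{-1}$ on the checked large quantities and on the $\ell\le1$ constants. Finally, we obtain $b\nn$ from \eqref{eq:iterate-nab-hot-b}, whose $\ell=1$ kernel is fixed by \eqref{eq:iterate-pa-r-u-ell=1}--\eqref{eq:iterate-ell=1-of-curl-b}, and subtract $b_0$.

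\emph{Main obstacle.} The delicate part is Step 3 for $\Thh\nn$. The right-hand side of \eqref{eq:d1-d2-Thh-iterate} contains products of large background quantities $(\thh\n,\Thh\n,\Pi\n,Y\n)$, each of size $M$, with derivatives of other quantities. For these to be admissible in the $\langle M\rangle^{-1}$-weighted norm, every such product must contain a small factor, and that factor must already be controlled at level $n+1$; otherwise the estimate would not close. The structure of \eqref{eq:d1-d2-Thh-iterate} is designed for this: the small factors are $\slashed d(\log\ah\nn)$ and $\Xi\nn$, which were obtained in Step 2. The remaining step is to verify this structure term by term, in particular that the product $Y\n\cdot\Xi\nn$ costs no derivatives. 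Once it is checked, the bound is $\langle M\rangle\,(\eps_0+C(M,r_0)\eps_{loc})$. Choosing $\eps_{loc}$ small enough, depending on $r_0$ and $M$, closes the bound $\eps_0^{2/3}$. Uniqueness of $\Psi\nn$ follows from the linearity of the system in the $(n+1)$-unknowns together with the fixed $\ell\le1$ normalizations.
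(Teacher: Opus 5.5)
Your proposal is correct and follows essentially the paper's proof: the same triangular order (transport of $\thc\nn,\Kc\nn_{\ell\neq1},\trt\nn,c\nn$ with $\eps_{loc}$-small increments, then the $M$-independent elliptic quantities $u\nn,\ao\nn,\Xi\nn$, then the large quantities by subtracting the background solutions \eqref{eq:def-zero-quantities}, using Lemma \ref{lem:comparison} for the $\ell\le1$ constants and the fact that every large factor in \eqref{eq:d1-d2-Thh-iterate} multiplies an already improved small one), leading to bounds $\langle M\rangle(\eps_0+\eps_{loc}C(M))$. One small ordering fix is needed: \eqref{eq:iterate-laph-ao} requires the full $\Kc\nn$, so you must first read off $\Kc\nn_{\ell=1}$ from the $\ell=1$ projection of \eqref{eq:iterate-laph-u-K}, which involves only level-$n$ quantities — this is why the paper treats $u\nn$ before $\ao\nn$.
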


\begin{proof}
According to the assumptions, the relevant sizes are 
\[
\Ga\n \sim \eps_0^\frac 23,\quad \widecheck{\widetilde \Ga}\mkern-0.1mu\n \sim \eps_0^\frac 23 M, \quad \widetilde \Ga\n \sim M.\] 
We proceed as follows:\footnote{
In the following, we repeatedly use standard product estimates, obtained by combining Hölder's inequality with Sobolev embedding to control one factor in $L^\infty$ with fewer derivatives.
}
\begin{enumerate}
\item\lab{step-boundedness-thc}
 Integrating \eqref{eq:R-transport-kac2}-\eqref{eq:R-transport-P2} and applying \eqref{eq:transport-estimate-gz} in Lemma \ref{lem:integration-lemma-1}, we obtain
\beaa
&& r^{-1} \| \thc\mkern-0.1mu\nn\|_{\H_0^{s+1}(S_r)} \lesssim r_0^{-1} \| \thc\mkern-0.1mu\nn\|_{\H_0^{s+1}(S_{r_0})} +\eps_{loc}\, M^2 \lesssim r_0^{-1} \eps_0+\eps_{loc}\, M^2 ,\\
&& r^{-1} \| \Kc\nn_{\ell\neq 1} \|_{\H_0^s(S_r)} \lesssim r_0^{-1} \| \Kc\nn_{\ell\neq 1} \|_{\H_0^s(S_{r_0})}+\eps_{loc}\, M^2 \lesssim r_0^{-2} \eps_0+\eps_{loc}\, M^2,
\eeaa
where we used that the sphere data at $r=r_0$ is $O(\eps_0)$.
Note that $r r_0^{-1}=r_0^{-1} (r_0+O(\eps_{loc}))=1+r_0^{-1} O(\eps_{loc})$, and similarly for $r^2 r_0^{-2}$.
Therefore, we deduce
\bea\lab{eq:improved-estimate-thc}
 \| \thc\mkern-0.1mu\nn\|_{\H_0^{s+1}(S_r)}, r \|\Kc_{\ell\neq 1}\nn \|_{\H_0^s(S_r)} \lesssim \eps_0+\eps_{loc}\, M^2 .
\eea
\item\lab{step-boundedness-u} We determine $\Kc_{\ell=1}$ by projecting \eqref{eq:iterate-laph-u-K} to $\ell=1$. This is standard since the Laplacian in \eqref{eq:iterate-laph-u-K} is with respect to the round metric:
\beaa
r^2 | \Kc_{\ell=1}\nn | \lesssim | r^2 (\Kc\n \cdot u\n)_{\ell=1} |+|(u\n\cdot u\n)_{\ell=1}| \lesssim  \eps_0^\frac 43 .
\eeaa
Combining this with the condition \eqref{eq:iterate-ell=1-of-u}, which reads schematically $u\nn_{\ell=1}=u\n\cdot u\n$, we also obtain
\bea\lab{eq:improved-estimate-u}
r^{-1} \| u\nn \|_{\H_0^{s+2}(S_r)} \lesssim \eps_0^\frac 43 + r \|\Kc\nn \|_{\H_0^s(S_r)} \lesssim \eps_0+\eps_{loc}\, M^2 .
\eea
Note that no quantities in $\widetilde \Ga \backslash \Ga$ are involved in the equations \eqref{eq:iterate-laph-u-K}-\eqref{eq:iterate-ell=1-of-u}.
\item\lab{step-boundedness-ah}
 We then determine $\ah\nn$ from \eqref{eq:iterate-laph-ao} and \eqref{eq:spherical-mean-ao2}, using the fact that $\thc\mkern-0.1mu\nn$ and $\Kc\nn$ have been obtained. 
 This yields
\bea\lab{eq:improved-bound-ah}
r^{-1}\| \ao\nn \|_{\H^{s+2}_0(S_r)} \lesssim \eps_0+\eps_{loc} M^2 .
\eea
\item 
The checked quantity $\widecheck{\thh}\mkern-0.1mu\nn$ satisfies the equation, according to \eqref{eq:unconditional-Codazzi-1-2} and \eqref{eq:def-zero-quantities},
\begin{equation}\lab{eq:d1d2-widecheck-thh}
\bsplit
\d_1\nn \d_2\nn \widecheck{\thh}\mkern-0.1mu\nn &= (\d_1\0\d_2\0-\d_1\nn\d_2\nn) \thh_0\\
&\quad +\Big(\frac 12 \laph\n\thc\mkern-0.1mu\nn,0\Big)-(\Bb_{\ell\leq 1}\nn, \Bbd_{\ell\leq 1}\nn). 
\end{split}
\end{equation}
According to Lemma \ref{lem:comparison}, there exists a unique pair $(\Bb_{\ell\leq 1}\nn,\Bbd_{\ell\leq 1}\nn)$ satisfying
\begin{equation}\lab{eq:estimate-Bd-Bbd-thc-ell=1}
\bsplit
& \quad \; r^3 | (\Bb_{\ell\leq 1}\nn,\Bbd_{\ell\leq 1}\nn)+ r^{-2} (\thc\mkern-0.1mu\nn_{\ell=1},0) |\\
&\lesssim  r^2 \|(\d_1\0\d_2\0-\d_1\nn\d_2\nn) \thh_0\|_{L_0^2(S_r)}+r|u\n \cdot \thc\mkern-0.1mu\n | \\
&\lesssim  (\eps_0+M^2\eps_{loc}) M+\eps_0^\frac 43,
\end{split}
\end{equation}
for which \eqref{eq:d1d2-widecheck-thh} admits a unique solution $\widecheck \thh\mkern-0.1mu\nn$. Note that we used the improved bound \eqref{eq:improved-estimate-u} for $u\nn$ to estimate the difference $\d_1\0\d_2\0-\d_1\nn\d_2\nn$. Combining \eqref{eq:estimate-Bd-Bbd-thc-ell=1} with \eqref{eq:improved-estimate-thc}, we obtain the same bound for $(\Bb_{\ell\leq 1}\nn,\Bbd_{\ell\leq 1}\nn)$. Moreover, by the Hodge estimate in Lemma \ref{lem:comparison}, we have
\beaa
\| \widecheck\thh\mkern-0.1mu\nn \|_{\H_0^{s+1}(S_r)} &\lesssim & \| \thc\mkern-0.1mu\nn \|_{\H_0^{s+1}(S_r)}+r^2 \|(\d_1\0\d_2\0-\d_1\nn\d_2\nn) \thh_0\|_{\H_0^{s-1}(S_r)} \\
&\lesssim & \eps_0+\eps_{loc} M^2+(\eps_0+\eps_{loc} M^2)M \lesssim (\eps_0+\eps_{loc} M^2)M,
\eeaa
where we again used the improved bound for $\thc\mkern-0.1mu\nn$ and $u\nn$ in \eqref{eq:improved-estimate-thc} and \eqref{eq:improved-estimate-u}. 
\item In view of \eqref{eq:iterate-pa-r-u-ell=1}, \eqref{eq:iterate-ell=1-of-curl-b} and the fact that $u\nn$, $\ah\nn$, and $\thc\mkern-0.1mu\nn$ have already been determined, we obtain the conditions for $(\divz b\nn)_{\ell=1}$ and $(\curlz b\nn)_{\ell=1}$. 
Thanks to \eqref{eq:def-zero-quantities}, the same conditions hold for $(\divz \widecheck b\nn)_{\ell=1}$ and $(\curlz \widecheck b\nn)_{\ell=1}$. We also have, using \eqref{eq:iterate-nab-hot-b} and \eqref{eq:def-zero-quantities},
\beaa
\nabz\hot \widecheck b\nn=(\ah\nn) \thh\nn-2(e^{2u\nn} \ah\nn-1)\cdot \thh_0.
\eeaa
We can then solve \eqref{eq:iterate-nab-hot-b} for $\widecheck b\nn$ using the standard Hodge theory on the round sphere, which yields, using the improved bounds for $\ao\nn$ and $u\nn$,
\beaa
r^{-1} \| \widecheck b\nn \|_{\H_0^{s+2}(S_r)} \lesssim \| \widecheck \thh\mkern-0.1mu\nn \|_{\H_0^{s+1}(S_r)} + (\eps_0+ M^2 \eps_{loc}) M\lesssim (\eps_0+ M^2 \eps_{loc}) M.
\eeaa
\item Using  \eqref{eq:iterate-Y} and \eqref{eq:def-zero-quantities}, the equation for $\widecheck Y\nn$ reads
\begin{align*}
\d_1\nn \widecheck Y\nn &= (\d_1\0-\d_1\nn) Y_0+( \Bb_{\ell\leq 1}\nn, \Bbd_{\ell\leq 1}\nn)\\
&\quad -\overline{\displaystyle(\BB,\BBd)+(\Bb_{\ell\leq 1}\nn, \Bbd_{\ell\leq 1}\nn)}\nn.
\end{align*}
In view of the improved estimate for $u\nn$ and $(\Bb_{\ell\leq 1}\nn, \Bbd_{\ell\leq 1}\nn)$, and the fact that $(\BB,\BBd)$ have zero average with respect to $\gz$, all terms on the right hand side are $\lesssim (\eps_0+M^2 \eps_{loc}) M$. Using the Hodge estimate in Lemma \ref{lem:comparison}, we deduce
\beaa
r \|\widecheck Y\nn \|_{\H_0^s(S_r)} &\lesssim & r^2 \| (\d_1\0-\d_1\nn) Y_0 \|_{\H_0^{s-1}(S_r)}+ (\eps_0+M^2 \eps_{loc}) M \\
&\lesssim & (\eps_0+M^2 \eps_{loc}) M.
\eeaa
\item Integrating \eqref{eq:N-trt-linearized-iterate} and \eqref{eq:curl-Xi-iterate-1} respectively in a similar way as in Step \ref{step-boundedness-thc}, 
we obtain 
 \bea\lab{eq:improved-bound-trTh}
 \| \trt\nn \|_{\H_0^{s+1}(S_r)}, r \| c\nn \|_{\H_0^s(S_r)} \lesssim \eps_0+ \eps_{loc}\, M^2. 
 \eea
\item\lab{step-boundedness-Xi}
In view of \eqref{eq:curl-Xi-iterate-2} and \eqref{eq:nu-iterate}, we obtain $\Xi\nn$ and, according to the Hodge estimates in Lemma \ref{lem:comparison}, we have the estimate
\bea\lab{eq:improved-bound-Xi}
\|\Xi\nn \|_{\H_0^{s+1}(S_r)} \lesssim \eps_0+\eps_{loc}\, M^2. 
\eea
\item 
Using \eqref{eq:d1-d2-Thh-iterate} and \eqref{eq:def-zero-quantities}, we write
\begin{equation}\lab{eq:Codazzi-widecheck-Thh}
\bsplit
&\quad  \d_1\nn\d_2\nn (\ah\n \Thh\nn-\Thh_0) \\
 &= \nabh\n\left( (\thh\n,\Thh\n,\Pi\n)\cdot (\slashed{d} (\log\ah\nn),\Xi\nn)\right)\\
&\quad +Y\n\cdot \Xi\nn+\Ga_1\n\cdot \Ga_2\n \\
&\quad + (\d_1\0\d_2\0-\d_1\nn\d_2\nn)\Thh_0 +(\Kk_{\ell\leq 1}\nn, \Kkd_{\ell\leq 1}\nn).
\end{split}
\end{equation}
Note that $u\nn$, $\ah\nn$ and $\Xi\nn$ have been obtained in Steps \ref{step-boundedness-u}, \ref{step-boundedness-ah} and \ref{step-boundedness-Xi}, and we can use their improved bounds \eqref{eq:improved-estimate-u}, \eqref{eq:improved-bound-ah} and \eqref{eq:improved-bound-Xi}.
To solve \eqref{eq:Codazzi-widecheck-Thh} for $\ah\n \Thh\nn-\Thh_0$, we again apply Lemma \ref{lem:comparison} to obtain the unique pair $(\Kk_{\ell\leq 1}\nn,\Kkd_{\ell\leq 1}\nn)$ satisfying the estimate
\beaa
&& r^3 |(\Kk_{\ell\leq 1}\nn,\Kkd_{\ell\leq 1}\nn)| \\
&\lesssim & M \| \slashed{d} (\log\ah\nn),\Xi\nn \|_{\H^1_0(S_r)}+|(r\nabz)^{\leq 1}u\nn|\cdot |(r\nabz)^{\leq 2}\Thh_0|+\eps_0^\frac 43 \\
&\lesssim & (\eps_0+M^2 \eps_{loc}) M+\eps_0^\frac 43.
\eeaa
The corresponding unique solution $\ah\n \Thh\nn-\Thh_0$ to \eqref{eq:Codazzi-widecheck-Thh} then satisfies, by Lemma \ref{lem:comparison},
\beaa
&&\| \ah\n \Thh\nn-\Thh_0 \|_{\H_0^{s+1}(S_r)} \\
& \lesssim & M \| \slashed{d} (\log\ah\nn),\Xi\nn \|_{\H^s_0(S_r)}+ r^{-1}\|u\nn\|_{\H^s_0(S_r)} \| \Thh_0 \|_{\H_0^{s+1}(S_r)}+\eps_0^\frac 43\\
&\lesssim & (\eps_0+M^2 \eps_{loc}) M+\eps_0^\frac 43.
\eeaa
Then, using the assumed bound for $\ao\n=\ah\n-1$, we obtain the same estimate with $\ah\n \Thh\nn-\Thh_0$ replaced by $\widecheck\Thh\mkern-0.1mu\nn$.

\item 
We now have the following elliptic equation, according to \eqref{eq:iteration-Pi}-\eqref{eq:average-Pi} together with \eqref{eq:def-zero-quantities},
 \beaa
 \laph\nn (\widecheck \Pi\nn+\frac 12\trt\nn) &=& (\laph\nn-\lapz)\Pi_0 ,\\
 \quad \overline{\displaystyle \widecheck \Pi\nn+\frac 12\trt\nn}\nn &=& 0.
 \eeaa
Similar to the previous step, using the improved estimate for $u\nn$, we deduce
 \beaa
\| \widecheck \Pi\nn+\frac 12 \trt\nn \|_{\H_0^{s+1}(S_r)} \lesssim r^{-1} M \| u\nn \|_{\H_0^s(S_r)} \lesssim (\eps_0+M^2 \eps_{loc}) M,
 \eeaa 
 which then also yields $\|\widecheck \Pi\nn \|_{\H_0^{s+1}(S_r)} \lesssim (\eps_0+M^2 \eps_{loc}) M$ in view of the estimate for $\trt\nn$ in \eqref{eq:improved-bound-trTh}.
\end{enumerate}
Therefore, according to the definition of the norm in \eqref{eq:Psi-norms-gz-M}, we have deduced \[\| \Psi\nn \|_{s,0,[r_0,r_0+\eps_{loc}],M} \lesssim \eps_0+M^2 \eps_{loc} \ll \eps_0^{\frac 23},\] provided $\eps_{loc}>0$ is sufficiently small. This concludes the proof.
\end{proof}

\subsection{Contraction estimates}
In this section, we denote $\de\Psi\nn:=\Psi\nn-\Psi\n$, and $F\n[\phi]:= N\n \phi\nn$, i.e., the right-hand side of the transport equation of a quantity $\phi\nn$ in the iteration system \eqref{eq:R-transport-kac2}-\eqref{eq:iterate-pa-r-u-ell=1}. Note that $F\n$ indeed only contains quantities labeled with $\n$.  
\begin{lemma}\lab{lemma:difference-contraction}
For an equation $N\n \phi\nn=F\n$, we have
\begin{equation}\lab{eq:N-contraction}
\bsplit
N\nn (\de \phi\nnn) &= \de F\nn - ((\ah\nn)^{-1} - (\ah\n)^{-1} ) \ah\n F\n \\
&\quad +(\ah\nn)^{-1} (b\nn - b\n) \cdot \nabz \phi\nn.
\end{split}
\end{equation}
\end{lemma}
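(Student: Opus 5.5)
The identity \eqref{eq:N-contraction} is purely algebraic. The plan is to express $N\nn$ in terms of $N\n$, using the explicit form of the transport vector fields $N\n=(\ah\n)^{-1}(\pa_r-b\n\cdot\nabz)$, and then subtract the two consecutive iterate equations. No estimates are needed. Here $\de\phi\nnn=\phi\nnn-\phi\nn$ and $\de F\nn=F\nn-F\n$.

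\textbf{Step 1: the equation for $\phi\nnn$.} By definition of the iteration, $N\nn\phi\nnn=F\nn$.

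\textbf{Step 2: apply $N\nn$ to the previous iterate $\phi\nn$.} Write
\begin{align*}
N\nn\phi\nn&=(\ah\nn)^{-1}\big(\pa_r-b\n\cdot\nabz\big)\phi\nn-(\ah\nn)^{-1}(b\nn-b\n)\cdot\nabz\phi\nn\\
&=(\ah\nn)^{-1}\ah\n\,N\n\phi\nn-(\ah\nn)^{-1}(b\nn-b\n)\cdot\nabz\phi\nn.
\end{align*}
Then substitute $N\n\phi\nn=F\n$. This gives
\[
N\nn\phi\nn=(\ah\nn)^{-1}\ah\n F\n-(\ah\nn)^{-1}(b\nn-b\n)\cdot\nabz\phi\nn.
\]

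\textbf{Step 3: subtract.} Subtracting Step 2 from Step 1 gives
\[
N\nn\de\phi\nnn=F\nn-(\ah\nn)^{-1}\ah\n F\n+(\ah\nn)^{-1}(b\nn-b\n)\cdot\nabz\phi\nn.
\]
To match the stated form, split the middle terms as follows, using $(\ah\n)^{-1}\ah\n=1$:
\[
F\nn-(\ah\nn)^{-1}\ah\n F\n=\de F\nn-\big((\ah\nn)^{-1}-(\ah\n)^{-1}\big)\ah\n F\n.
\]
Inserting this yields \eqref{eq:N-contraction}.

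\textbf{Main point to check.} The only step requiring care is the convention on $b$. The term $b\cdot\nabz$ uses $b$ regarded as a vector field, that is, with its index raised by $\gz$ consistently with \eqref{eq:def-b-flat-0}. Under this convention the operator splits linearly in $b$, which is what Step 2 uses. Nothing else is expected to be an obstacle.
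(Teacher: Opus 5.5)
Your proof is correct and is essentially the paper's argument: both use the explicit form $N\n=(\ah\n)^{-1}(\pa_r-b\n\cdot\nabz)$ to relate the two transport fields and then substitute $N\n\phi\nn=F\n$. The paper expands $\de F\nn=N\nn\phi\nnn-N\n\phi\nn$ and then replaces $\pa_r\phi\nn=\ah\n N\n\phi\nn+b\n\cdot\nabz\phi\nn$, whereas you write $N\nn\phi\nn$ directly in terms of $N\n\phi\nn$; this is the same computation, organized slightly more economically.
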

\begin{proof}
We have
\beaa
&& \de F\nn= N\nn \phi\nnn -N\n \phi\nn \\
&=&  (\ah\nn)^{-1} (\pa_r -  b\nn\cdot \nabz)\phi\nnn-(\ah\n)^{-1} (\pa_r - b\n\cdot \nabz)\phi\nn \\
&=& (\ah\nn)^{-1} \pa_r (\phi\nnn-\phi\nn)- (\ah\nn)^{-1} b\nn\cdot\nabz (\phi\nnn-\phi\nn) \\
&& +((\ah\nn)^{-1} - (\ah\n)^{-1} ) \pa_r \phi\nn- ((\ah\nn)^{-1} b\nn-(\ah\n)^{-1} b\n)\cdot \nabz \phi\nn \\
&=& N\nn (\de \phi\nnn) +((\ah\nn)^{-1} - (\ah\n)^{-1} ) \pa_r \phi\nn \\
&& -((\ah\nn)^{-1} b\nn-(\ah\n)^{-1} b\n)\cdot \nabz \phi\nn.
\eeaa
Now, using $\pa_r \phi\nn= \ah\n N\n \phi\nn+ b\n \cdot \nabz \phi\nn$, we obtain
\beaa
N\nn (\de \phi\nnn) &=& \de F\nn - ((\ah\nn)^{-1} - (\ah\n)^{-1} ) \ah\n N\n \phi\nn \\
&& - ((\ah\nn)^{-1} - (\ah\n)^{-1} ) b\n \cdot \nabz \phi\nn \\
&&+((\ah\nn)^{-1}b\nn-(\ah\n)^{-1} b\n)\cdot \nabz \phi\nn \\
&=& \de F\nn - ((\ah\nn)^{-1} - (\ah\n)^{-1} ) \ah\n N\n \phi\nn \\
&& +(\ah\nn)^{-1} (b\nn - b\n) \cdot \nabz \phi\nn.
\eeaa
Using again $N\n \phi\nn=F\n$, we obtain the desired identity.
\end{proof}
We also have, by straightforward verification, for any sequence of horizontal operators $L\n$,
\begin{equation}\lab{eq:general-formula-L-difference}
\bsplit
&\quad L\nn[\phi\nnn]-L\n[\phi\nn] \\
&= L\nn[{\de\phi\nnn}] + (L\nn-L\n)[\phi\nn],
\end{split}
\end{equation}
\begin{equation}\lab{eq:general-formula-L-difference-2}
\bsplit
&\quad L\nnn[\phi\nnn]-L\nn[\phi\nn] \\
&= L\nnn[{\de\phi\nnn}] + (L\nnn-L\nn)[\phi\nn].
\end{split}
\end{equation}
This applies for $L\n=\laph\n$, $\d_1\n$, $\d_2\n$, as well as the spherical mean operator $\phi\mapsto \overline{\phi}\n$. 

\begin{proposition}
There exists a universal constant $C>0$ such that 
\beaa
\| \de\Psi\nnn \|_{s-1} \leq C (\eps_{loc}M^2 +\eps_0^\frac 23) \| \de\Psi\nn \|_{s-1}.
\eeaa
Here, we denote for simplicity $\| \de\Psi \|_{s-1}:= \| \de\Psi \|_{s-1,0,[r_0,r_0+\eps_{loc}],M}$ defined in \eqref{eq:Psi-norms-gz-M}.
\end{proposition}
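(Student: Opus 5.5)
The plan is to take differences of the iteration system \eqref{eq:R-transport-kac2}--\eqref{eq:average-Pi} at steps $n+1$ and $n$, and then retrace the boundedness proof of Proposition \ref{prop:boundedness} in the weaker norm $\|\cdot\|_{s-1}$. The loss of one derivative is forced by the transport operators. By Lemma \ref{lemma:difference-contraction}, the equation for $\de\phi\nnn$ contains the term $(\ah\nn)^{-1}(b\nn-b\n)\cdot\nabz\phi\nn$, which costs one angular derivative of $\phi\nn$. Likewise, by \eqref{eq:general-formula-L-difference}--\eqref{eq:general-formula-L-difference-2}, the operator differences $(L\nn-L\n)[\phi\nn]$ for $L=\laph,\d_1,\d_2,\overline{\,\cdot\,}$ cost derivatives of $\phi\nn$ and are linear in $\de u\nn$. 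In every such term the factor carrying extra derivatives is an iterate bounded in the top norm $\|\cdot\|_{s}$ by Proposition \ref{prop:boundedness}. The remaining factor is a difference measured in $\|\cdot\|_{s-1}$. Standard product estimates then bound each such term by (size of the background) $\times\ \|\de\Psi\nn\|_{s-1}$.

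I would follow the same triangular order as in the boundedness proof.

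\textbf{Step 1: transported quantities.} Apply Lemma \ref{lemma:difference-contraction} to \eqref{eq:R-transport-kac2}, \eqref{eq:R-transport-P2}, \eqref{eq:N-trt-linearized-iterate} and \eqref{eq:curl-Xi-iterate-1}. The differences have zero data at $r=r_0$. Integrating over an interval of length $\eps_{loc}$ via Lemma \ref{lem:integration-lemma-1} gives bounds on $\de\thc$, $\de\Kc_{\ell\neq1}$, $\de\trt$ and $\de c$ of the form $\eps_{loc}\langle M\rangle^2\|\de\Psi\nn\|_{s-1}$. The quadratic terms $\widetilde\Ga_1\cdot\widetilde\Ga_1$ produce the factor $M^2$.

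\textbf{Step 2: $u$ and the lapse.} Next come $\de\Kc_{\ell=1}$ and $\de u$, from \eqref{eq:iterate-laph-u-K}--\eqref{eq:iterate-ell=1-of-u} on the round sphere. Their right-hand sides are at least quadratic in small quantities, so they contribute $\eps_0^{2/3}\|\de\Psi\nn\|_{s-1}$, plus the Step 1 bound for $\de\Kc$. Then $\de\ao$ follows from \eqref{eq:iterate-laph-ao}, \eqref{eq:spherical-mean-ao2}.

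\textbf{Step 3: Hodge systems.} With the new differences $\de u\nnn$, $\de\ao\nnn$, $\de\thc$, $\de\Xi$ in hand, treat the Hodge systems for $\de\widecheck\thh$, $\de\widecheck b$, $\de\widecheck Y$, $\de\widecheck\Thh$, $\de\widecheck\Pi$. For each I would use Lemma \ref{lem:comparison} to determine the $\ell\le1$ corrections $\de\Bb_{\ell\leq1}$ and $\de\Kk_{\ell\leq1}$. Large terms involving $\thh_0,\Thh_0,\Pi_0,Y_0$ always appear multiplied by differences of the small quantities $u,\ah,\Xi$, which were already controlled in Steps 1--2. Consequently the weight $\langle M\rangle^{-1}$ in \eqref{eq:Psi-norms-gz-M} absorbs exactly one factor of $M$.

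\textbf{Main obstacle.} I expect the difficulty to be the bookkeeping for the large quantities. One must check that no product of two large factors multiplies a difference without a compensating $\eps_{loc}$ or $\eps_0^{2/3}$. Such a product would yield a constant of order $M$ rather than $\eps_{loc}M^2+\eps_0^{2/3}$.

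\begin{itemize}
\item In the transport step this is harmless, since $\eps_{loc}$ is always present.
\item In the elliptic steps I would rely on the triangular structure of Section \ref{sec:heuristics-large-scalars}. Every large--large interaction in \eqref{eq:d1-d2-Thh-iterate} is paired with $\slashed d(\log\ah\nn)$ or $\Xi\nn$, whose differences were already estimated at step $n+1$.
\item The $\ell=1$ equations \eqref{eq:iterate-pa-r-u-ell=1}--\eqref{eq:iterate-ell=1-of-curl-b} for $b$ also contain $(e^{2u\nn}-1)b\n$. Their differences are bounded by $\eps_0^{2/3}\langle M\rangle\|\de\Psi\nn\|_{s-1}$, which is acceptable after normalizing by $\langle M\rangle^{-1}$.
\end{itemize}

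Summing all the estimates gives the stated contraction with a universal $C$.
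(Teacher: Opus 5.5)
Your plan is essentially the paper's proof. It follows the same triangular order: first the transport differences via Lemma \ref{lemma:difference-contraction}, gaining $\eps_{loc}M^2$; then $\de\Kc_{\ell=1}$, $\de u$ and $\de\ao$ from the round-sphere equations; then the Hodge systems for the large quantities via Lemma \ref{lem:comparison}, with the $\langle M\rangle^{-1}$ weights absorbing one factor of $M$. One caveat, which the paper's own estimate shares: because \eqref{eq:d1-d2-Thh-iterate} carries the factor $\ah\n$, the equation for $\de\Thh\nnn$ contains $\de\ah\nn\cdot\d_1\nn\d_2\nn\Thh\nn$, a large factor times a difference that your Steps 1--2 did \emph{not} improve, so your claim that large terms only multiply already-controlled differences fails there (this term only yields an $O(1)$ constant after the $\langle M\rangle^{-1}$ normalization), and it is repaired either by writing that equation with $\ah\nn$, as the heuristics of Section \ref{sec:heuristics-large-scalars} suggest, or by a two-step contraction using $\|\de\ao\nn\|\lesssim(\eps_{loc}M^2+\eps_0^{\frac 23})\|\de\Psi\n\|$.
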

\begin{proof}
We proceed as follows, where we use the identities \eqref{eq:N-contraction}, \eqref{eq:general-formula-L-difference}, and \eqref{eq:general-formula-L-difference-2} whenever needed:
\begin{enumerate}
\item\lab{item:step-contraction-thc}
 The equation of $\de\thc\mkern-0.1mu\nnn$ reads, schematically,
\beaa
N\nn (\de\thc\mkern-0.1mu\nnn)=\nabz \thc\mkern-0.1mu\nn\cdot \de b\nn+ \de\ah\nn\cdot F\n[\thc]+\de F\nn[\thc],
\eeaa
with vanishing initial value. Note that in view of \eqref{eq:R-transport-kac2}, the term $\de F\nn[\thc]$ consists, at worst, of terms of the form $\widetilde\Ga\nn\cdot \de \widecheck{\widetilde\Ga}\mkern-0.1mu\nn$.
Therefore, using the boundedness estimate,
we obtain\footnote{We use $\|\de \widecheck{\widetilde\Ga}\mkern-0.1mu\nn\|_{s-1}$ to denote the portion of $\|\de\Psi\nn\|_{s-1}$ associated with the quantities in $\widecheck{\widetilde\Ga}\mkern-0.1mu\nn$.}
\begin{equation}\lab{eq:de-thc-contraction}
\| \de\thc\mkern-0.1mu\nnn \|_{\H_0^s(S_r)} \lesssim \eps_{loc} M \|\de \widecheck{\widetilde\Ga}\mkern-0.1mu\nn\|_{s-1} \lesssim  \eps_{loc} M^2 \| \de\Psi\nn \|_{s-1}.
\end{equation}
\item The equation of $\de\Kc\nnn_{\ell\neq 1}$ reads, schematically,
\beaa
N\nn (\de\Kc\nnn_{\ell\neq 1})=\nabz \Kc_{\ell\neq 1}\nn\cdot \de b\nn+\de \ah\nn \cdot F\n[\Kc_{\ell\neq 1}]+ \de F\nn[\Kc_{\ell\neq 1}],
\eeaa
with vanishing initial value, and we obtain, similarly to Step \ref{item:step-contraction-thc},
\bea\lab{eq:estimate-de-Kc-ell-neq-1}
\| \de\Kc\nnn_{\ell\neq 1} \|_{\H_0^{s-1}(S_r)} \lesssim \eps_{loc} M^2 \| \de\Psi\nn \|_{s-1}.
\eea
\item The equation of $\de u\nnn$ reads
\bea\lab{eq:elliptic-equation-de-u}
-\laph_{\gacr} \de u\nnn -2\de u\nnn=r^2 \de\Kc\nnn+r^2 (u\nn\cdot \Kc\nn - u\n\cdot \Kc\n),
\eea
where $u\nn\cdot \Kc\nn$ and  $u\n\cdot \Kc\n$ have the same schematic form.
Projecting \eqref{eq:elliptic-equation-de-u} to $\ell=1$, we deduce the estimate
\bea\lab{eq:estimate-de-Kc-ell=1}
r^2 |  \de\Kc\nnn_{\ell=1} | \lesssim \eps_0^\frac 23 \| \de\Psi\nn \|_s.
\eea
In view of $u\nnn_{\ell=1}=u\nn \cdot u\nn$ and $u\nn_{\ell=1}=u\n\cdot u\n$ with the same schematic form, we have the control of $\de u\nnn_{\ell=1}$ using $\de\Psi\nn$. Then, applying standard elliptic estimates to \eqref{eq:elliptic-equation-de-u}, we deduce, using the bounds \eqref{eq:estimate-de-Kc-ell-neq-1} and \eqref{eq:estimate-de-Kc-ell=1} we obtained,
\bea\lab{eq:estimate-de-u}
\| \de u\nnn \|_{\H_0^{s+1}(S_r)} \lesssim (\eps_0^\frac 23+\eps_{loc} M^2) \| \de\Psi\nn \|_{s-1}.
\eea
\item We now turn to the equation for $\de\thh\nn$, derived from \eqref{eq:unconditional-Codazzi-1-2}, which is symmetric traceless since $\ga\nnn$ and $\ga\nn$ are conformally equivalent,
\begin{equation}\lab{eq:Codazz-de-thh}
\bsplit
\d_1\nnn\d_2\nnn \de\thh\nnn &= \frac 12 (\laph\nn \de\thc\mkern-0.1mu\nnn,0)-\de(\Bb_{\ell\leq 1}\nnn,\Bbd_{\ell\leq 1}\nnn) \\
& \quad -(\d_1\nnn \d_2\nnn-\d_1\nn \d_2\nn) \thh\nn \\
&\quad +\frac 12((\laph\nn-\laph\n)\thc\mkern-0.1mu\nn,0).
\end{split}
\end{equation}
In view of Lemma \ref{lem:comparison}, the fact that \eqref{eq:Codazz-de-thh} is solvable as a Hodge system for $\de\thh\nnn$ yields the bound 
\beaa
&& r^3 | \de(\Bb_{\ell\leq 1}\nnn,\Bbd_{\ell\leq 1}\nnn) | \\
&\lesssim &  \| \de\thc\mkern-0.1mu\nnn \|_{L_0^2(S_r)}+ r^{-1} M \| (r\nabh)^{\leq 1} \de u\nnn\|_{L_0^2(S_r)}+\eps_0^\frac 23 \| \de\Psi\nn\|_{s-1} \\
 & \lesssim & M (\eps_{loc}M^2 +\eps_0^\frac 23) \| \de\Psi\nn\|_{s-1},
\eeaa
where we use \eqref{eq:de-thc-contraction} and \eqref{eq:estimate-de-u} for the last inequality.
Moreover, using the Hodge estimate in Lemma \ref{lem:comparison}, we obtain in a similar way that
\beaa
&& \| \de\thh\nnn \|_{\H_0^{s}(S_r)} \\
& \lesssim & \| \de\thc\mkern-0.1mu\nnn\|_{\H_0^{s}(S_r)}+r^{-1} M \| \de u\nnn\|_{\H_0^{s-1}(S_r)}+\eps_0^\frac 23 \| \de\Psi\nn\|_{s-1} \\
&\lesssim & M (\eps_{loc}M^2+\eps_0^\frac 23) \| \de\Psi\nn\|_{s-1}.
\eeaa
\item 
In view of \eqref{eq:iterate-nab-hot-b}-\eqref{eq:iterate-ell=1-of-curl-b}, we can derive the equation for $\nabz\hot (\de b\nnn)$ along with the conditions of $(\divz \de b\nnn)_{\ell=1}$ and $(\curlz \de b\nnn)_{\ell=1}$, in terms of $\de u\nnn$, $\de \ah\nnn$, and $\de\thh\nnn$, which have all been estimated in previous steps. Therefore, we can proceed, similarly to the estimate for $\de\thh\nnn$, to obtain
\beaa
r^{-1} \| \de b\nnn \|_{\H_0^{s+1}(S_r)} \lesssim M (\eps_{loc}M^2 +\eps_0^\frac 23) \| \de \Psi\nn\|_{s-1}.
\eeaa 
\item Since $\de\trt\nnn$ and $\de c\nnn$, in view of \eqref{eq:N-trt-linearized-iterate} and \eqref{eq:curl-Xi-iterate-1}, satisfy transport equations, their estimates are almost identical to the one for $\de\thc\mkern-0.1mu\nnn$, and we deduce
\bea\lab{eq:estimate-de-trt-de-c}
\| \de \trt\nnn \|_{\H_0^s(S_r)}, r \| \de c\nnn \|_{\H_0^{s-1}(S_r)} \lesssim \eps_{loc} M^2 \| \de\Psi\nn \|_{s-1}.
\eea
Now, in view of the equations for $\curlh\nn \de \Xi\nnn$ and $\divh\nn \de\Xi\nnn$, derived from \eqref{eq:nu-iterate} and \eqref{eq:curl-Xi-iterate-2},
\begin{align*}
\curlh\nn \de \Xi \nnn &= -(\curlh\nn-\curlh\n)\Xi\nnn+ \de c\nnn-\overline{\de c\nnn}\nn \\
&\quad +(\overline{c\nn}\nn-\overline{c\nn}\n),\\
 \divh\nn \de\Xi\nnn &= -(\divh\nn-\divh\n)\Xi\nnn,
\end{align*}
we deduce, using \eqref{eq:estimate-de-trt-de-c}, \eqref{eq:estimate-de-u} and the boundedness estimates,
\beaa
\| \de \Xi\nnn \|_{\H^s_0(S_r)} \lesssim (\eps_0^\frac 23+\eps_{loc} M^2) \| \de\Psi\nn \|_{s-1}.
\eeaa
\item 
Using \eqref{eq:d1-d2-Thh-iterate}, we derive the equation for $\de\Thh\nnn$, which reads, schematically,
\beaa
 && \d_1\nnn\d_2\nnn (\de\Thh\nnn) \\
 &=&  \de\ah\nn \cdot \d_1\nn\d_2\nn \Thh\nn+  \de(\Kk_{\ell\leq 1}\nnn,\Kkd_{\ell\leq 1}\nnn)\\
 &&-(\d_1\nnn \d_2\nnn-\d_1\nn \d_2\nn)\Thh\nn \\
 &&+\nabh\nn ( (\thh\nn,\Thh\nn,\Pi\nn)\cdot \de (\slashed{d}(\log\ah\nnn), \de \Xi\nnn))+ Y\nn\cdot \de  \Xi\nnn\\
 &&+\nabh\nn ((\de \thh\nn\,\de \Thh\nn,\de \Pi\nn)\cdot ( \slashed{d}\log\ah\nnn,  \Xi\nnn))+\de Y\nn \cdot \Xi\nnn
\eeaa
Note that we have obtained the estimate for $\de u\nnn$, $\de\ah\nnn$ and $\de\Xi\nnn$. Hence, we deduce,
\beaa
 \| \de\Thh\nnn \|_{\H_0^{s}(S_r)} 
&\lesssim & \eps_0^\frac 23 \| \de\Psi\nn\|_{s-1}+r^{-1} M \| \de(\slashed{d} \log\ah)\nnn, \de\Xi\nnn\|_{\H_0^{s-1}(S_r)}\\
&& +\langle M \rangle \| \de \Psi\nn\|_{s-1} (\| \slashed{d}\log\ah\nnn,\Xi\nnn\|_{\H_0^{s-1}(S_r)})  \\
& \lesssim &  M (\eps_{loc}M^2 +\eps_0^\frac 23) \| \de\Psi\nn\|_{s-1}.
\eeaa
\item We now turn to the equation for $\de \Pi\nnn+\frac 12 \de\trt\nnn$, which, in view of \eqref{eq:iteration-Pi}-\eqref{eq:average-Pi}, reads
\begin{align*}
\laph\nnn (\de (\Pi+\frac 12\trt)\nnn) & = -(\laph\nnn-\laph\nn)(\de (\Pi+\frac 12 \trt)\nn)\\
&\quad + \de \Kk_{\ell\leq 1}\nnn-\overline{\de \Kk_{\ell\leq 1}\nnn}\nn +(\overline{\Kk_{\ell\leq 1}\nn}\nn-\overline{\Kk_{\ell\leq 1}\nn}\n),\\
\overline{\de (\Pi+\frac 12\trt)\nnn}\nnn &= - \overline{(\Pi+\frac 12\trt)\nnn}\nnn+\overline{(\Pi+\frac 12\trt)\nnn}\nn.
\end{align*}
This gives the estimate $\de \Pi\nnn+\frac 12 \de\trt\nnn$. Combining this with \eqref{eq:estimate-de-trt-de-c}, we obtain the estimate
\beaa
\| \de \Pi\nnn \|_{\H_0^s(S_r)} \lesssim M (\eps_{loc}M^2 +\eps_0^\frac 23) \| \de\Psi\nn\|_{s-1}.
\eeaa
\end{enumerate}
Therefore, in view of the definition of the norm in \eqref{eq:Psi-norms-gz-M}, in particular those with the factor $\langle M\rangle^{-1}$, we deduce the contraction estimate as stated.
\end{proof}
\begin{remark}
The uniqueness of the solution, as stated in Theorem \ref{thm:local-existence-precise}, follows similarly. Indeed, suppose we have two solutions $\Psi$, $\Phi$ on $(r_0,r_0+\eps_{loc}')\times \mathbb{S}^2$ with $\eps_{loc'}<\eps_{loc}$, we can consider their difference, and use a similar estimate to show that $\| \Psi-\Phi \|_{s-1}\leq C (\eps_{loc}'+\eps_0^\frac 23) \|\Psi-\Phi \|_{s-1}$, which yields $\Psi=\Phi$.
\end{remark}

\subsection{Identifying the limit}\lab{sec:identifying-the-limit}

In what follows, we refer to a quantity $\Psi$ as being unambiguous if the associated limiting quantity $\Psi\i$ obtained through the iteration is the same as the corresponding geometric quantity $\Psi(g\i)$ associated with $g\i$.
Also, we say an equation is unconditional if it holds regardless of whether $(g,k)$ solves the Einstein constraint equation. In particular, equations in Proposition \ref{prop:linearized-eqns} are all unconditional.
 
We first note that we define $g\i$ using $\ah\i$, $b\i$, and $u\i$ in view of \eqref{eq:iterate-ga}, \eqref{eq:iterate-g}:
 \bea\lab{eq:limiting-metric-g}
g\i:= (\ah\i)^2 dr^2 +  r^2 \left(e^{2u\i}\right) \gacr_{AB}\left(d\vth^A +(b\i)^A dr\right) \left(d\vth^B+(b\i)^B dr\right).
 \eea
 Therefore, by definition, $\ah\i$, $b\i$, and $u\i$ are unambiguous.
 The metric $g\i$ also determines the outward unit normal $N\i$ to the underlying $r$-spheres. We then define $k\i$ through its components in an orthonormal frame $\{N\i,e_a\i\}_{a=1,2}$:
\begin{equation}\lab{eq:def-k-infty}
\begin{gathered}
k\i(e_a\i,e_b\i) :=\Thh\i (e_a\i,e_b\i)+\frac 12 \trt\i \de_{ab},\\
k\i(N\i,e_a\i) := \Xi\i (e_a\i),\\
k\i(N\i,N\i) :=\Pi\i.
\end{gathered}
\end{equation}
Hence, the components of $k\i$ are also unambiguous.

\begin{proposition}\lab{prop:identifying-the-limits}
The constraint quantities $\CC_{Ham}$ and $\CC_{Mom}$ for $(g\i,k\i)$, as defined in \eqref{eq:def-CC-Mom}-\eqref{eq:def-CC-Ham}, vanish, i.e., $(g\i,k\i)$ solves the Einstein constraint equation \eqref{ece}.
\end{proposition}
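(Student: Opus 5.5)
We follow the limiting identification strategy of \cite{CK25}, adapted to the new gauge. Passing to the limit $n\to\infty$ in the iteration system \eqref{eq:R-transport-kac2}--\eqref{eq:average-Pi} gives a closed system for the limiting quantities $\Psi\i$, in which every horizontal operator and $N$ is computed with respect to $g\i$. We already know that $\ah\i$, $b\i$, $u\i$ and the components of $k\i$ are unambiguous. The plan has three stages: (i) show that the remaining metric quantities $\thh\i$, $\thc\i$, $\Kc\i$, $Y\i$ are unambiguous; (ii) show that $\mu(g\i)_{\ell\geq1}=0$; (iii) deduce that $\CC_{Ham}$ and $\CC_{Mom}$ satisfy a homogeneous, well-posed system with vanishing data at $r=r_0$, so that they vanish.

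\textbf{Stage (i): unambiguity of the metric quantities.} Since $g\i$ has the canonical form \eqref{eq:metric-canonical-form}, Proposition \ref{prop:basic-relation-with-shift} applies to $g\i$.
\begin{itemize}
\item The Gauss curvature $K(g\i)$ satisfies \eqref{eq:laph-u-K} with $u=u\i$. The limiting equation \eqref{eq:iterate-laph-u-K} is the same relation written for $\Kc\i$, so $\Kc\i=\Kc(g\i)$.
\item By \eqref{eq:nab-hot-b}, the traceless part of the radial second fundamental form satisfies $\nabh\hot b\i=-2\ah\i\,\thh(g\i)$. The limit of \eqref{eq:iterate-nab-hot-b} gives the same identity with $\thh\i$. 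Hence $\thh\i=\thh(g\i)$, because $\nabh\hot b\i$ determines the right-hand side pointwise.
\item By \eqref{eq:pa-r-u}, $\ah\i\trth(g\i)=2\pa_r u\i+2r^{-1}-\divh b\i$. To match $\thc\i$ with this, I plan to use the transport equation \eqref{eq:R-transport-kac2} in the limit. I compare it with the unconditional equation \eqref{eq:R-transport-kac} for $\thc(g\i)$, which carries the term $-\frac12\CC_{Ham}$. This coupling with $\CC_{Ham}$ and with $\mu$ is the main obstacle.
\end{itemize}

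I would handle this coupling by forming the differences
\[
\de\thc:=\thc\i-\thc(g\i),\qquad \de\Kc,\qquad \de Y:=Y\i-Y(g\i),\qquad \de\mu:=\mu(g\i)_{\ell\geq1}.
\]
Using the Codazzi equation \eqref{eq:unconditional-Codazzi-1}, the limiting equations \eqref{eq:unconditional-Codazzi-1-2} and \eqref{eq:iterate-Y}, the Bianchi identity \eqref{eq:R-transport-Kc} against \eqref{eq:R-transport-P2}, and the definition \eqref{eq:def-mu} against \eqref{eq:iterate-laph-ao}, I would derive a linear system in which:
\begin{itemize}
\item the Codazzi comparison yields $\d_1\d_2$ of a vanishing quantity, up to $\frac12\laph\de\thc$ and $\de Y$;
\item $\de Y$ is controlled elliptically by $\de\thc$ through \eqref{eq:iterate-Y};
\item $\de\mu$ is expressed through $\laph\de\ao=0$, together with $\de\Kc$ and $r^{-1}\de\thc$;
\item the $\ell=1$ parts are fixed by the gauge identities \eqref{eq:divz-b-flat-ell=1}, \eqref{eq:laph-u-K-pert} and \eqref{eq:u-ell=1-pert}, where the effective uniformization replaces the previously problematic $\ell=1$ transport of $K$;
\item the $\ell=0$ parts are fixed by \eqref{eq:ell=1-condition-u-solution1}.
\end{itemize}

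\textbf{Stage (iii): vanishing of the constraints.} Next I would bring in $\CC_{Ham}$ and $\CC_{Mom}$.
\begin{itemize}
\item The contracted Bianchi identity $\div\big(\mathrm{Ric}-\frac12 R g\big)=0$ on $(\Si,g\i)$, combined with the definitions \eqref{eq:def-CC-Mom}--\eqref{eq:def-CC-Ham}, gives transport equations along $N$ for $\CC_{Ham}$ and for $(\CC_{Mom})_N$. These are coupled to horizontal divergences of $\slashed{\CC}_{Mom}$, exactly as in \cite{CK25}.
\item The angular component $\slashed{\CC}_{Mom}$ is read off from \eqref{eq:N-div-Xi}--\eqref{eq:N-curl-Xi}. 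Since the limiting system imposes $\nu=0$ through \eqref{eq:nu-iterate} and the prescribed $\curlh\Xi$ transport \eqref{eq:curl-Xi-iterate-1}, this yields $\divh(\ah\slashed\CC_{Mom})$ and $\curlh(\ah\slashed\CC_{Mom})$ as linear expressions in the other differences. The $\ell\leq1$ pieces are absorbed by the Lagrange-multiplier terms $\Kk_{\ell\leq1}$ and $\Kkd_{\ell\leq1}$.
\item Similarly, $(\CC_{Mom})_N$ is read off from \eqref{eq:N-trt-linearized}.
\end{itemize}

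Collecting everything, the vector
\[
\big(\de\thc,\ \de\Kc,\ \de Y,\ \de\mu,\ \CC_{Ham},\ \CC_{Mom}\big)
\]
satisfies a linear homogeneous system. It consists of transport equations along $N\i$ coupled with elliptic Hodge systems on $S_r$, whose coefficients are bounded by the boundedness estimate of Proposition \ref{prop:boundedness}. All these quantities vanish at $r=r_0$ because the canonical sphere data are consistent. A Gr\"onwall argument in the norm \eqref{eq:Psi-norms-gz-M}, with one derivative less, on $[r_0,r_0+\eps_{loc}]$ then forces the vector to vanish, exactly as in the contraction step.

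The delicate points are the following.
\begin{itemize}
\item The Lagrange multipliers $\Bb_{\ell\leq1}$, $\Kk_{\ell\leq1}$ and their duals must vanish in the limit. I expect to obtain this from integrability: the $\ell\leq1$ projection of $\d_1\d_2$ of a symmetric traceless tensor vanishes modulo conformal-factor terms, by Lemma \ref{lem:comparison}. This yields that $\Bb\i_{\ell\leq1}$ equals an expression in the differences, hence zero.
\item The shift $b\i$ enters the commutators $\RR_{tran}$. It must be checked that these commutators agree with the geometric commutation of $N$ with the $\ell$-projections, which holds by Lemma \ref{lem:RR-transport}.
\end{itemize}
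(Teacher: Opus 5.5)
Your identifications $\Kc\i=\Kc(g\i)$ and $\thh\i=\thh(g\i)$ are correct and match the paper. The rest of the plan has a genuine gap: the Gr\"onwall step cannot be set up as you describe it.

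(1) The contracted Bianchi identity on $(\Si,g\i)$ is an identity for the $3$-metric alone. It gives no transport equation for $\CC_{Mom}$ or for the $k\i$-terms of $\CC_{Ham}$. The constraint-propagation system you have in mind comes from the spacetime Bianchi identity together with evolution equations, and neither is available here.

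(2) Nothing makes your unknowns vanish at $r=r_0$. The sphere data fix $\thc\i|_{r_0}$, but by \eqref{eq:pa-r-u} the value $\thc(g\i)|_{r_0}$ is determined by $\pa_r u\i$ and $b\i$ at $r_0$. Likewise $\CC_{Ham}|_{r_0}$ involves radial derivatives of $g\i$. Their agreement (respectively vanishing) at $r_0$ is exactly what has to be proved.

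(3) The multipliers do not vanish in the limit. Projecting Codazzi to $\ell=1$ gives $\Bb_{\ell=1}\approx -r^{-2}\thc_{\ell=1}$, which is generically nonzero. Only their spherical means vanish. Otherwise the same $\Bb\i_{\ell\le1}$ (respectively $\Kk\i_{\ell\le1}$) appears in both limiting equations being compared, and simply cancels.

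The missing idea is that $\trth$ and $Y$ can be identified sphere by sphere using only constraint-free equations. Then $\CC_{Ham}$ never enters this stage and no initial data are needed.

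Since $K\i=K(g\i)$ on every $S_r$, the quantity $N\i\Kc\i_{\ell\neq1}$ is given both by the limit of \eqref{eq:R-transport-P2} and by the unconditional Bianchi equation \eqref{eq:R-transport-Kc}, which contains no constraint term. Subtract the two. Then combine:
the relation $\mu\i_{\ell\ge1}=0$ (the limit of \eqref{eq:iterate-laph-ao});
the fact that $\mu(g\i)-\mu\i$ only involves $(\trth(g\i))^2-(\trth\i)^2$;
and the comparison of \eqref{eq:unconditional-Codazzi-1} with the limit of \eqref{eq:unconditional-Codazzi-1-2}.
This yields $(\laph\i+2r^{-2})(\trth(g\i)-\trth\i)_{\ell\neq1}=O(\eps)\big(r^{-1}(Y(g\i)-Y\i),\,r^{-2}(\trth(g\i)-\trth\i)\big)$, and this operator is invertible away from $\ell=1$.

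The $\ell=1$ part follows by comparing \eqref{eq:divz-b-flat-ell=1-pert}, which holds for $g\i$ thanks to the barycenter condition, with the limit of \eqref{eq:iterate-pa-r-u-ell=1}. Finally, $\d_1\i(Y\i-Y(g\i))=-\frac12\laph\i(\trth(g\i)-\trth\i)$ up to a mean term that integrates to zero. These small-coefficient elliptic bounds force $\trth\i=\trth(g\i)$ and $Y\i=Y(g\i)$.

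Only then are the constraints read off.
The comparison of \eqref{eq:R-transport-kac2} with \eqref{eq:R-transport-kac}, which you wanted to use to identify $\thc$, now simply gives $\CC_{Ham}=0$.
In the same way, \eqref{eq:N-trt-linearized} gives $(\CC_{Mom})_N=0$.
Inserting the limiting $\Thh$, $\Pi$, $\divh\Xi$ and $\curlh\Xi$ equations into \eqref{eq:N-div-Xi}--\eqref{eq:N-curl-Xi} shows that $\divh(\ah\slashed\CC_{Mom})$ and $\curlh(\ah\slashed\CC_{Mom})$ depend on $r$ only. Integrating over $S_r$ makes both zero, hence $\slashed\CC_{Mom}=0$.
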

\begin{proof}
Recall that in \cite{CK25}, the spherical metric $\ga\i$ is determined by integrating $\th\i$, and hence there is no ambiguity in $\th\i$. The main issue,  therefore, is to show that $K$ is unambiguous. In contrast, in the present setting, $K\i$ is directly tied to $u\i$, and the main issue is instead to show that $\trth$ is unambiguous.

 We proceed as follows:
 \begin{enumerate}
\item In view of the limiting version of \eqref{eq:iterate-laph-u-K}, i.e., $-\laph_{\gacr} u\i+1=(r^2\Kc\i) e^{2u\i}$, we deduce that $K\i=K(g\i)$ since, by \eqref{eq:laph-u-K-pert}, the latter is also determined by the same formula in terms of $u\i$. Moreover, the limiting version of \eqref{eq:iterate-ell=1-of-u} implies, on every $S_r$,
\begin{equation}\lab{eq:u-i-barycenter}
\int_{\mathbb{S}^2} e^{2u\i} x^i=0.
\end{equation}

\item The form of $g\i$ in \eqref{eq:limiting-metric-g} verifies the condition for \eqref{eq:nab-hot-b-flat} to hold. Therefore,\footnote{Recall again that we construct $b\i$ as a $1$-form.}
\beaa 
\nabz\hot b\i =-2 e^{-2u\i} \ah\i \, \thh(g\i).
\eeaa
On the other hand, the limiting version of \eqref{eq:iterate-nab-hot-b} implies $\nabz\hot b\i =-2 e^{-2u\i} \ah\i \, \thh\i$. Since $e^{-2u\i} \ah\i$ is close to $1$, this yields $\thh\i=\thh(g\i)$. 

\item Note the unconditional Codazzi equation \eqref{eq:unconditional-Codazzi-1} and the limiting version of \eqref{eq:unconditional-Codazzi-1-2} read, respectively,
\begin{align*}
\d_1\i \d_2\i \thh(g\i) &= \frac 12 \left(\laph\i \trth(g\i),0\right) -\d_1\i Y(g\i),\\
\d_1\i \d_2\i \thh\i &= \frac 12 \left(\laph\i \trth\i,0\right) - (\BB,\BBd) - \big(\Bb_{\ell\leq 1}\i,\Bbd_{\ell\leq 1}\i\big),
\end{align*}
Comparing these two equations, using $\thh\i=\thh(g\i)$ from the previous step, we obtain
\bea\lab{eq:relation-Y-ambiguity-trth-ambiguity}
-\d_1\i Y(g\i)+(\BB,\BBd)+\big(\Bb_{\ell\leq 1}\i,\Bbd_{\ell\leq 1}\i\big)=-\frac 12 \left(\laph\i \big(\trth(g\i)-\trth\i\big),0\right).
\eea

\item We use the notation
\beaa
&& \mu\i:= -\laph\i \left(\log\ah\i\right)+K\i -\frac 14\left(\trth\i \right)^2,\\
&& \mu(g\i):= -\laph\i \left(\log\ah\i\right) +K\i -\frac 14 \left(\trth(g\i)\right)^2.
\eeaa
Note that now $K\i$ and $\ah\i$ are unambiguous.
Taking the limit of \eqref{eq:iterate-laph-ao} (projected to $\ell\geq 1$), we deduce
\begin{equation}\lab{eq:mu-i-ell=0-mu-i}
\mu\i_{\ell\geq 1}=0,\quad \text{hence in particular $\mu\i_{\ell=0}=\mu\i_{\ell\neq 1}$}.
\end{equation}
The limiting version of \eqref{eq:R-transport-P2} then reads, using \eqref{eq:mu-i-ell=0-mu-i}, 
\begin{equation}\lab{eq:N-i-K-i}
\bsplit
N\i \Kc_{\ell\neq 1}\i&=  
 r^{-1} \mu\i_{\ell\neq 1} - 3r^{-1} \Kc\i_{\ell\neq 1}
-\BB-\Bb_{\ell=0}\i
-2r^{-3}\ao\i_{\ell\neq 1} \\
&\quad  +\left(\Ga_1\i\cdot \Ga_2\i+2\, \slashed{d}(\log\ah\i)\cdot Y\i\right)_{\ell\neq 1}+\RR_{tran}\i (\Kc\i) \\
&\quad  -\left(\thh\i\cdot \left(\nabh\hot\, \slashed{d}(\log\ah\i)-\slashed{d}(\log\ah\i)\hot\, \slashed{d}(\log\ah\i)\right)\right)_{\ell\neq 1},
\end{split}
\end{equation}
Comparing this with the unconditional equation \eqref{eq:N(Kc)}, noting again that $K\i=K(g\i)$, $\thh\i=\thh(g\i)$:
\begin{equation}\lab{eq:N-i-K-i-unconditional}
\bsplit
N\i \Kc_{\ell\neq 1}\i &= -3r^{-1} \Kc_{\ell\neq 1}\i-(\divh\i Y(g\i))_{\ell\neq 1} + r^{-1} \mu(g\i)_{\ell\neq 1} -2r^{-3} \ao\i_{\ell\neq 1}\\
&\quad  +\left(\Ga_1(g\i)\cdot \Ga_2(g\i)+2\, \slashed{d}(\log\ah\i)\cdot Y(g\i) \right)_{\ell\neq 1}+\RR_{tran}\i (\Kc\i) \\
&\quad  -\left(\thh\i \cdot \left(\nabh\hot\, \slashed{d}(\log\ah\i)-\slashed{d}(\log\ah\i)\hot\, \slashed{d}(\log\ah\i)\right)\right)_{\ell\neq 1},
\end{split}
\end{equation}
we deduce\footnote{Note that the coefficient of the $\RR_{tran}\i$ terms only depend on $\ah\i$, $b\i$, and $u\i$, which are all unambiguous. Therefore, the $\RR_{tran}\i(\Kc\i)$ in \eqref{eq:N-i-K-i} and \eqref{eq:N-i-K-i-unconditional} cancel.}
\begin{equation}\lab{eq:mu-ambiguity}
-r^{-1} \mu(g\i)_{\ell\neq 1}+ r^{-1} \mu\i_{\ell\neq 1} = \BB+\Bb_{\ell=0}\i-(\divh\i Y(g\i))_{\ell\neq 1} +\mathrm{Diff}_{NL},
\end{equation}
where $\mathrm{Diff}_{NL}$ denotes the following difference
\begin{equation*}
\mathrm{Diff}_{NL}:=\big(\Ga_1(g\i) \cdot \Ga_2(g\i)\big)_{\ell\neq 1} - \big(\Ga_1\i\cdot \Ga_2\i\big)_{\ell\neq 1}+2\slashed{d} (\log\ah\i)\cdot (Y(g\i)-Y\i).
\end{equation*}
Therefore, since now the remaining ambiguous quantity in $\Ga_i\i$ is $\thc\i$, the difference $\mathrm{Diff}_{NL}$ only consists of the following two schematic expressions 
\beaa
O(\eps) r^{-1} (Y(g\i)-Y\i),\quad O(\eps) r^{-2} (\trth(g\i)-\trth\i),
\eeaa
where $\eps:= \eps_0^\frac 23$ is the boundedness parameter for quantities in $\Ga_i\i$ obtained in Proposition \ref{prop:boundedness} (also recall the definition of the norm \eqref{eq:Psi-norms-gz-M}).

Now, recall the first component of 
\eqref{eq:relation-Y-ambiguity-trth-ambiguity} projected to $\ell\neq 1$:
\beaa
-(\divh \i Y(g\i))_{\ell\neq 1}+\BB + \Bb_{\ell=0}\i =-\frac 12 \left(\laph\i (\trth(g\i)-\trth\i) \right)_{\ell\neq 1}.
\eeaa
Plugging this to \eqref{eq:mu-ambiguity}, we obtain
\beaa
\frac 12 \left(\laph\i (\trth(g\i)-\trth\i) \right)_{\ell\neq 1}-r^{-1} (\mu(g\i)-\mu\i)_{\ell\neq 1}=\mathrm{Diff}_{NL}.
\eeaa
In view of the expressions of $\mu(g\i)$ and $\mu\i$, we see that this can be written as
\beaa
\left(\laph\i (\trth(g\i)-\trth\i) \right)_{\ell\neq 1}+2 r^{-2} (\trth(g\i)-\trth\i)_{\ell\neq 1}=\widetilde{\mathrm{Diff}}_{NL},
\eeaa
where  the error terms $\widetilde{\mathrm{Diff}}_{NL}$ have  the same schematic expression as $\mathrm{Diff}_{NL}$.
Since we are considering the $\ell\neq 1$ part, the elliptic estimates imply 
\bea\lab{eq:ambiguity-trth-Y}
\left \| \thc(g\i)_{\ell\neq 1}-\thc\i_{\ell\neq 1} \right\|_{\H_0^2(S_r)}\lesssim \eps \big\| r (Y(g\i)-Y\i) \big\|_{L_0^2(S_r)}.
\eea

\item Since \eqref{eq:u-i-barycenter} holds, we can apply \eqref{eq:divz-b-flat-ell=1-pert} to $g\i$ and obtain the relation
\beaa
\big(\divz (e^{2u\i} b\i)\big)_{\ell=1} & = -\left(e^{2u\i} (\ah\i\trth(g\i)-2r^{-1})\right)_{\ell=1}.
\eeaa
On the other hand, the limiting version of \eqref{eq:iterate-pa-r-u-ell=1} implies
\beaa
\big(\divz (e^{2u\i} b\i)\big)_{\ell=1} & = -\left(e^{2u\i} (\ah\i\trth\i-2r^{-1})\right)_{\ell=1}.
\eeaa
Therefore, comparing them yields $\thc\i_{\ell=1}- \thc(g\i)_{\ell=1}=O(\eps) (\thc\i-\thc(g\i))$. Combining this with \eqref{eq:ambiguity-trth-Y}, we obtain
\bea\lab{eq:ambiguity-trth-Y-with-ell=1}
\Big \| \thc(g\i)-\thc\i\Big\|_{\H_0^2(S_r)}\lesssim \eps \| r (Y(g\i)-Y\i)\|_{L_0^2(S_r)}.
\eea

\item 
The limiting version of \eqref{eq:iterate-Y} reads
\begin{align*}
\d_1\i Y\i =  (\BB, \BBd)+(\Bb_{\ell\leq 1}\i,\Bbd_{\ell\leq 1}\i)  -\overline{(\BB, \BBd)+(\Bb_{\ell\leq 1}\i,\Bbd_{\ell\leq 1}\i)}\i.
\end{align*}
Combining this with \eqref{eq:relation-Y-ambiguity-trth-ambiguity}, we deduce
\begin{align*}
\d_1\i (Y\i - Y(g\i)) = -\frac 12 \laph\i \left(\trth(g\i)-\trth\i\right) -\overline{(\BB, \BBd)+(\Bb_{\ell\leq 1}\i,\Bbd_{\ell\leq 1}\i)}\i.
\end{align*}
Integrating the equation over $S_r$ with respect to $\ga\i$, we see that the last term is zero. Then, plugging in the estimate \eqref{eq:ambiguity-trth-Y-with-ell=1}, we infer $\|Y\i-Y(g\i) \|_{\H^1(S_r)} \lesssim \eps \|Y\i-Y(g\i)\|_{L^2(S_r)}$. Hence, $Y(g\i)=Y\i$, which also implies $\trth(g\i)=\trth\i$.

\item We have now inferred that all quantities are unambiguous. Comparing the limiting version of \eqref{eq:R-transport-kac2} with the unconditional equation \eqref{eq:R-transport-kac} that contains $\CC_{Ham}(g\i,k\i)$, we obtain $\CC_{Ham}(g\i,k\i)=0$. Similarly, we compare \eqref{eq:N-trt-linearized} with the corresponding unconditional equations to deduce that $(\CC_{Mom}(g\i,k\i))_{N\i}=0$.

\item 
It remains to show that $\slashed\CC_{Mom}\i=0$.
Taking the limit of equations \eqref{eq:d1-d2-Thh-iterate}, \eqref{eq:nu-iterate}, \eqref{eq:curl-Xi-iterate-1}, \eqref{eq:curl-Xi-iterate-2}, and \eqref{eq:iteration-Pi}, we deduce the following relations:
\begin{align}
\lab{eq:d1-d2-Thh-iterate-limit}
\d_1\i \d_2\i (\ah\i \Thh\i) &= (\KK, -\KKd)+(\Kk_{\ell\leq 1}\i,\Kkd_{\ell\leq 1}\i)\\
&\notag\quad+\widetilde \Ga_1\i\cdot \widetilde \Ga_2\i,\\
\lab{eq:div-Xi-limit}
\divh\i\Xi\i &=0,\\
r^{-4} \ah\i N\i (r^4 (\curlh\i \Xi\i+\overline{c\i}\i)) &= \KKd-\Kkd_{\ell\leq 1}\i, \\
\laph\i (\Pi\i+\frac 12 \trt\i) &=\KK+\Kk_{\ell\leq 1}\i -\overline{\KK+\Kk_{\ell\leq 1}\i}\i.
\end{align}
Since $\ah\i N\i (r)=1$, we see that $r^{-4} \ah\i N\i(r^4 \overline{c\i}\i)$ is only dependent on $r$. Denoting it by $F(r)$, we obtain
\bea\lab{eq:curl-Xi-limit-correction}
r^{-4} \ah\i N\i (r^4 (\curlh\i \Xi\i))= \KKd-\Kkd_{\ell\leq 1}\nn-F(r).
\eea
Meanwhile, the unconditional equations \eqref{eq:N-div-Xi}-\eqref{eq:N-curl-Xi}, applied to $(g\i,k\i)$, read
\begin{equation}\lab{eq:N-div-Xi-uncond-i}
\bsplit
(\ah\i N\i +4r^{-1}) \divh\i\Xi \i &= -\divh\i\divh\i (\ah\i \kh\i) \\
&\quad + \laph\i (\frac 12 \trt\i+\Pi\i) +\widetilde \Ga_1\i\cdot \widetilde \Ga_2\i\\
&\quad +\divh\i (\ah\i \slashed\CC\i_{Mom}),
\end{split}
\end{equation}
and
\begin{equation}
\lab{eq:N-curl-Xi-uncond-i}
\bsplit
r^{-4} \ah\i N\i (r^4 \curlh\i \Xi\i) &= - \curlh\i\divh\i (\ah\i \kh\i)+\widetilde \Ga_1\i\cdot \widetilde \Ga_2\i\\
&\quad +\curlh\i (\ah\i \slashed\CC_{Mom}\i).
\end{split}
\end{equation}
Therefore, plugging \eqref{eq:d1-d2-Thh-iterate-limit} to the right-hand sides, and \eqref{eq:div-Xi-limit}, \eqref{eq:curl-Xi-limit-correction} to the left-hand sides of \eqref{eq:N-div-Xi-uncond-i}, \eqref{eq:N-curl-Xi-uncond-i}, we obtain
\begin{align}
0 &= -\overline{\KK+\Kk_{\ell\leq 1}\i}\i +\divh\i (\ah\i \slashed\CC_{Mom}\i),\\
-F(r) &= \curlh\i (\ah\i \slashed\CC_{Mom}\i).
\end{align}
Integrating over spheres with respect to $\ga\i$, we deduce $\overline{\KK+\Kk_{\ell\leq 1}\i}\i=F(r)=0$. Hence $\divh\i (\ah\i \slashed\CC_{Mom}\i)=\curlh\i (\ah\i \slashed\CC_{Mom}\i)=0$, which yields $\slashed\CC_{Mom}\i=0$. This finishes the proof of the fact that $(g\i,k\i)$ solves the Einstein constraint equation \eqref{ece}. 
\end{enumerate}
We have shown in the proof that $\mu_{\ell\geq 1}\i=\nu\i=0$, $\displaystyle \int_{\mathbb{S}^2} e^{2u\i} x^i=0$, and
\begin{align*}
\d_1\i Y\i & =(\BB,\BBd)+(\Bb_{\ell\leq 1}\i, \Bbd_{\ell\leq 1}\i),\\
r^{-4} \ah\i N\i (r^4 (\curlh\i \Xi\i)) &= \KKd-\Kkd_{\ell\leq 1}\i, \\
\laph\i (\Pi\i+\frac 12 \trt\i) &=\KK+\Kk_{\ell\leq 1}\i .
\end{align*}
It is also straightforward to verify that the remaining gauge conditions hold, by taking the limit of equations \eqref{eq:spherical-mean-ao2}, \eqref{eq:average-Pi}, and \eqref{eq:iterate-ell=1-of-curl-b}:
\beaa
\overline{\ao\i}\i=0, 
\quad \overline{\Pi\i}\i=-\frac 12\overline{\trt\i}\i,\quad (e^{2u\i}\curlh\i b\i)_{\ell=1}=0.
\eeaa
This completes the verification of all properties listed in Theorem \ref{thm:local-existence-precise}.
\end{proof}

\section{Precise Statement and Proof of the Global Result}\lab{sec:global-result}

\subsection{Scalar model}

We first consider the scalar model
\begin{equation}\lab{eq:scalar-model}
\pa_r \phi+\la r^{-1}\phi=F+\phi^2,\qquad \phi|_{r=1}\sim \eps,
\end{equation}
where $|F|\lesssim \eps r^{-2}$. Our target is   the bound $|\phi|\lesssim \eps r^{-1}$.
Passing to $|\phi|$ and multiplying by $r^\la$, we get
\beaa
\pa_r(r^\la |\phi|)\lesssim r^\la |F|+r^\la |\phi|^2.
\eeaa
If we assume $|\phi|\lesssim \eps r^{-1}$, then both terms on the right are $\lesssim \eps r^{\la-2}$, and hence
\beaa
r^\la |\phi|(r)\lesssim \eps+\eps\int_1^r r'^{\la-2}\,dr'.
\eeaa
Now the issue is whether dividing by $r^\la$ recovers the desired $r^{-1}$ decay. If $\la<1$, the integral is bounded, but one only gets $|\phi|\lesssim \eps r^{-\la}$. If $\la=1$, one suffers from a logarithmic loss $|\phi|\lesssim \eps r^{-1}\log r$. If $\la>1$, the integral is $\lesssim r^{\la-1}$, which gives exactly $|\phi|\lesssim \eps r^{-1}$. Thus the $r^{-1}$ bound is propagated precisely when $\la>1$.

\begin{remark}\lab{rem:la'-la}
The same conclusion can be deduced if one uses any lower weight $r^{\la'}$ with $1<\la'<\la$. Indeed,
\beaa
\pa_r(r^{\la'}|\phi|)+(\la-\la')r^{\la'-1}|\phi|
\lesssim r^{\la'}|F|+r^{\la'}|\phi|^2,
\eeaa
so the extra term has a favorable sign. Under the same bound $|\phi|\lesssim \eps r^{-1}$, the right-hand side is $\lesssim \eps r^{\la'-2}$, and integrating again yields $r^{\la'}|\phi|\lesssim \eps r^{\la'-1}$, hence $|\phi|\lesssim \eps r^{-1}$. This observation will be useful later in the matrix setting.
\end{remark}

\begin{remark}
[The behavior of $\trt$]
According to \eqref{eq:N-trt-linearized}, the transport equation of $\trt$ reads
\beaa
\nabh_N \trt= \divh \Xi +2r^{-1} \Pi -r^{-1} \trt+\cdots.
\eeaa
In \cite{CK25}, the free scalar $\KK$ was chosen to be $\laph \Pi$ (at the linear $\ell\geq 2$ level). In the current situation, however, this would result in integrating\footnote{This corresponds  to the case  $\la=1$, which  we want to avoid in \eqref{eq:scalar-model}.} $\nabh_N \trt =-r^{-1} \trt +\cdots$. This is why we had to  modify the condition for $\KK$ in \eqref{eq:prescribed-scalar-conditions}, compared with \eqref{eq:prescribed-scalar-CK25} imposed in \cite{CK25}. 
Similar modification is made for $\ell=0$ conditions in \eqref{eq:ell=1-condition-u-solution1}. 
\end{remark}

\subsection{Precise statement of   Theorem \ref{thm:main-intro}}

In this section, we denote for simplicity $\H^s:= \H^s_\ga$, where $\ga$ is the induced metric of the solution metric $g$ on $S_r$. Moreover, from now on, we identify $\widetilde\Ga$ with $\Ga$.

\begin{theorem}[Precise statement of the main theorem]
\lab{main-thm}
Fix a positive integer $s\geq 3$ and a small constant $\eps_0>0$. For given $O_{s+2}(\eps_0)$-almost round canonical sphere data $(S,u,\vth^A,\trth,\trt,\Xi)$ of radius $r_0$, as defined in Definition \ref{eq:almost-round-sphere-data}, and prescribed scalars $(\BB,\BBd, \KK,\KKd)$ on $(r_0,\infty)\times \mathbb{S}^2$ satisfying the bounds
\beaa
\sup_{r\in (r_0,\infty)} r^2 \| \BB,\BBd,\KK,\KKd \|_{\H^s(S_r)} \leq \eps_0 ,
\eeaa
there exists a unique solution to \eqref{ece} on $(r_0,\infty)\times \mathbb{S}^2$ satisfying the conditions listed in  the local existence result of Theorem \ref{thm:local-existence-precise}. Moreover, for quantities defined in Section \ref{sec:geometric-quantities}, we have the estimate
\beaa
|\ao, b, u|\lesssim \eps_0,\quad |\thh, \thc, \trt, \Thh, \Xi, \Pi |\lesssim \eps_0 r^{-1},\quad |\Kc, Y|\lesssim \eps_0 r^{-2}.
\eeaa
\end{theorem}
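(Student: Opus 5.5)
We propose a continuity (bootstrap) argument in $r$, built on Theorem \ref{thm:local-existence-precise}. Define the bootstrap norm
\[
\mathcal{N}(r_1):=\sup_{r\in[r_0,r_1]}\Big( r^{-1}\|\ao,b,u\|_{\H^{s+2}(S_r)}+\|\thc,\thh,\trt,\Thh,\Xi,\Pi\|_{\H^{s+1}(S_r)}+r\|\Kc,Y\|_{\H^s(S_r)}\Big).
\]
Since the $\H^s$ norms are taken with respect to $\ga$, with volume $\sim r^2$, this corresponds to the sup-norm bounds in the statement. Let $r_*$ be the supremum of radii $r_1$ such that a solution with the stated gauge and free-data conditions exists on $(r_0,r_1)$ and satisfies $\mathcal{N}(r_1)\le \eps_0^{2/3}$. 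Local existence with $M\lesssim\eps_0$ (see Remark \ref{rem:M-eps-0}) gives $r_*>r_0$.

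The core step is to improve the bound to $\mathcal{N}\lesssim\eps_0$. The quantities split into transported ones and elliptic ones.

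\textbf{Transported quantities.} Under the bootstrap assumption, each of $\thc$, $\Kc_{\ell\neq1}$, $\trt$ and $c=\curlh\Xi$ satisfies a scalar-model equation \eqref{eq:scalar-model} with quadratic error $O(\eps_0^{4/3}r^{-2})$ at the appropriate weight.

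\begin{itemize}
\item For $\Kc_{\ell\ne1}$, equation \eqref{eq:R-transport-Kc} has $\la=3$ against a target decay of $r^{-2}$, so it is supercritical and closes by Remark \ref{rem:la'-la}. The source $\divh Y=\BB+\Bb$ is $O(\eps_0 r^{-3})$ and is admissible.
\item For $c=\curlh\Xi$, equation \eqref{eq:curl-Xi-iterate-1} has $\la=4$, which is also fine.
\item For $\thc$ and $\trt$ there is coupling. The term $-2r^{-2}\ao$ in \eqref{eq:R-transport-kac} involves $\ao$, which is obtained elliptically from $\Kc$ and $\thc$ via \eqref{eq:R-transport-ao}; $\mu_{\ell=0}$ is expressed through $\overline{\Kc}$ and $\overline{\thc}$. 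In \eqref{eq:N-trt-linearized}, the term $2r^{-1}\Pi$ is rewritten using $\Pi=-\frac12\trt+(\Pi+\frac12\trt)$, where the bracket is determined elliptically by $\KK$ through \eqref{eq:prescribed-scalar-conditions} and is $O(\eps_0 r^{-2})$ in $\H^{s+1}$ after inverting $\laph$. This yields $N\trt=-2r^{-1}\trt+\dots$, i.e. $\la=2>1$, which is precisely why the modified free data was introduced.
\end{itemize}

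For the coupled $(\thc,\ao,\Kc_{\ell=0})$ system, we would substitute the elliptic solution for $\ao_{\ell\ge2}$ and analyse the resulting $2\times2$ matrix transport system mode by mode. The plan is to diagonalise it (or find a positive-definite weighted energy) and verify that all eigenvalues exceed $1$ at the target weights. This is where Remark \ref{rem:la'-la} is used in matrix form, and we expect it to be the main obstacle. The $\ell=0$ and $\ell=1$ modes need separate treatment. For $\ell=0$, the gauge condition $\overline{\ao}=0$ applies, and $\overline{\Kc}$ is controlled through Gauss--Bonnet, i.e. the area radius. For $\ell=1$, $\Kc_{\ell=1}$ is obtained from \eqref{eq:laph-u-K-pert} and is quadratic, so it is not transported at all; avoiding that transport is the whole point of the uniformization gauge.

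\textbf{Elliptic quantities.} These follow from Hodge estimates on $(S_r,\ga)$, which is uniformly close to round because $\|u\|\lesssim\eps_0^{2/3}$.

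\begin{itemize}
\item $u$ is obtained from \eqref{eq:laph-u-K-pert}--\eqref{eq:u-ell=1-pert}, with $\|u\|\lesssim r^2\|\Kc\|$. This is $O(\eps_0)$ but does not decay, consistent with the borderline statement.
\item $\ao$ is obtained from \eqref{eq:R-transport-ao}.
\item $\thh$ is obtained from \eqref{eq:unconditional-Codazzi-1}, with the $\ell\le1$ corrections $\Bb_{\ell\le1}$ estimated as in Lemma \ref{lem:comparison}.
\item $Y$ follows from $\d_1 Y$.
\item $b$ is obtained from \eqref{eq:nab-hot-b-flat} together with the $\ell=1$ conditions \eqref{eq:divz-b-flat-ell=1}--\eqref{eq:curlz-b-flat-ell=1}. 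This gives $r^{-1}\|b\|\lesssim\|\thh\|+\|\thc\|+r^{-1}\|\ao\|$, so $|b|\lesssim\eps_0$ with no loss. No $r$-integration is involved, which is exactly what the uniformization gauge was designed to achieve.
\item $\Xi$, $\Thh$ and $\Pi$ follow from the div–curl systems and the $\KK$ equation.
\end{itemize}

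All nonlinear terms are quadratic in bootstrap quantities and so contribute $\eps_0^{4/3}$.

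\textbf{Conclusion.} The improvement gives $\mathcal{N}\lesssim\eps_0<\eps_0^{2/3}$ on $[r_0,r_*)$. Applying Theorem \ref{thm:local-existence-precise} at $S_{r_*-\delta}$ extends the solution beyond $r_*$. The data on this sphere is almost round after rescaling; its canonical form is inherited because the gauge conditions hold on every sphere. The extension interval $\eps_{loc}$ is uniform because the $r$-scaled norms are uniform. Uniqueness then follows from the uniqueness part of the local theorem together with continuation. Hence $r_*=\infty$, and the stated pointwise bounds follow from Sobolev embedding on $S_r$ with $s\ge3$.
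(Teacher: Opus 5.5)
\textbf{There is a genuine gap: the borderline coupled $(\thc,\Kc)$ analysis is left open, and the two claims you make around it do not hold.}

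Your overall architecture matches the paper's: a bootstrap on the same $\H^s_\ga$-norm, $\Kc_{\ell=1}$ quadratic from the uniformization equation, $b$ obtained elliptically with no $r$-integration, $\la=2$ for $\trt$ after splitting off $\Pi+\frac12\trt$, $\la=4$ for $\curlh\Xi$, and Hodge estimates for $\thh, Y, \Thh, \Xi, \Pi$. The problem is the one genuinely borderline step.

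First, $\Kc_{\ell\neq1}$ does not close by itself with $\la=3$. The term $-2r^{-3}\ao$ in its equation is linear, and $\ao$ only has the bootstrap bound $|\ao|\lesssim\eps$. Used as a source, it returns $|\Kc|\lesssim\eps r^{-2}$, which is no gain. Nor can $\ao$ be improved first: by $\mu_{\ell\geq1}=0$, up to quadratic terms it equals $\laph^{-1}(\Kc-r^{-1}\thc)$, so it is of the same order as the unknowns. Substituting this into both transport equations gives, for $(\thc_{\ell\geq2},r\Kc_{\ell\geq2})$, the operator $r^{-1}(-2I+Q)$ with $Q=\begin{pmatrix}-L&L\\-L&L\end{pmatrix}$ and $L=-2(r^2\laph)^{-1}$.

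Since $Q^2=0$, every mode is a Jordan block with double eigenvalue $-2$. Your eigenvalue test passes, but diagonalisation is impossible, and at the sharp weight no inner product gives $Q$ a sign. The missing idea is to take a weight strictly below the double eigenvalue. With weights $r^{2-\de}$ and $r^{3-\de}$ the operator becomes $A_\de[L]=-\de I+Q$. Lemma \ref{lem:matrix-operator-sign} makes this negative for the fixed form $\begin{pmatrix}1&-1+\eta^2\\-1+\eta^2&1\end{pmatrix}$ with $\eta=\min\{1,\de/2C\}$. This holds uniformly in $\ell$ and $r$ because $\|(r^2\laph)^{-1}\|\le C$. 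The matrix version of Remark \ref{rem:la'-la} with $\la'=2-\de>1$ then gives $\thc\sim r^{-1}$ and $\Kc\sim r^{-2}$.

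Second, the $\ell=0$ mode is not controlled by Gauss--Bonnet. Integrating the $u$-equation over $\mathbb{S}^2$ simply reproduces Gauss--Bonnet. So it ties $\overline{\Kc}$ to $u_{\ell=0}$ (equivalently, to the area) without determining either. The area evolves through $\thc_{\ell=0}$. With $\overline{\ao}=0$ and $\mu_{\ell=0}\approx\Kc_{\ell=0}-r^{-1}\thc_{\ell=0}$, one gets
$N\thc_{\ell=0}=\Kc_{\ell=0}-3r^{-1}\thc_{\ell=0}$ and $N\Kc_{\ell=0}=-r^{-2}\thc_{\ell=0}-2r^{-1}\Kc_{\ell=0}$.
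For $(r^{2-\de}\thc_{\ell=0},r^{3-\de}\Kc_{\ell=0})$ this is $A_\de[1]=\begin{pmatrix}-1-\de&1\\-1&1-\de\end{pmatrix}$. This again has a double eigenvalue $-\de$ with a nontrivial Jordan block, and is treated the same way.

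Only $\thc_{\ell=1}$ genuinely decouples, with $\la=3$, once $\Kc_{\ell=1}$ is known to be quadratic. Without the coupled $(\thc,\Kc)$ analysis for $\ell\geq2$ and $\ell=0$, the improvement $\mathcal N\lesssim\eps_0$ is not established. Everything you derive from it ($u$, $\ao$, $\thh$, $b$) is then unsupported.
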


\begin{remark}
It is straightforward to adapt the proof to obtain the minimal decay case \eqref{eq:minimal-decay-intro}, namely the case with $O(\eps_0 r^{-\de})$ decay at the metric level, for some $0<\de<1$, instead of the borderline $O(\eps_0)$ case considered here. Indeed, it suffices to impose the corresponding condition for the scalars $(\BB,\BBd,\KK,\KKd)$, and this only makes the nonlinear terms easier to control.
\end{remark}

\begin{remark}
Theorem \ref{main-thm}, combined with the local existence result in Theorem \ref{thm:local-existence-precise} with large free data, can also be applied to significantly generalize the result of Li--Yu \cite{LiYu} on the construction of complete, short pulse,  Cauchy data, see \cite{CK26-2}.
\end{remark}

\subsection{Bootstrap argument}
We introduce the abstract notation
\begin{align*}
\Psi=\left(\thc, \Kc, \ao, \thh, Y, u, b,
\trt, \kh, \Xi, \Pi, 
(\Bb_{\ell\leq 1},\Bbd_{\ell\leq 1}), (\Kk_{\ell\leq 1},\Kkd_{\ell\leq 1})\right),
\end{align*}
and define the following norm:
\begin{equation}
\bsplit
\| \Psi \|_{s,\ga,  I} & := \sup_{r\in I}\, r^{-1} \| \ao, b, u \|_{\H^{s+2}_\ga(S_r)}+ \| \thc, \thh, \trt, \Thh, \Xi, \Pi \|_{\H^{s+1}_\ga(S_r)} + r \| \Kc, Y \|_{\H_\ga^s(S_r)} \\
&\quad + r^3 |\Bb_{\ell\leq 1},\Bbd_{\ell\leq 1}, \Kk_{\ell\leq 1},\Kkd_{\ell\leq 1} |,
\end{split}
\end{equation}
where   the $\H_\ga^s$ norms on the  $S_r$ sphere have  been introduced in  Definition \ref{def:L2-Hs-S_r}.

We run the following bootstrap argument. 

{\bf BA.}  Assume  that $r^*$ is a finite maximal value of $r$ such that the following estimates hold on $I=[r_0,r]$:
\beaa
\| \Psi \|_{s, \ga,I} \leq \eps,\qquad \eps:= \eps_0^\frac 23.
\eeaa
Note that this in particular implies, using the local existence (with $r_0$ replaced by $r_*$), that the solution can be extended beyond $r=r^*$.
We show in this section that this implies
\beaa
\| \Psi \|_{s,\ga, I} \leq \frac 12 \eps,
\eeaa
and hence, by continuity of the norm, contradicts the maximality of $r^*$. Since the local existence applied to $r_0$ already ensures the existence of $r^*$, this yields $r^*=\infty$.

In view of  the local existence statement in Theorem \ref{thm:local-existence-precise},  the HCS  system  of Proposition \ref{prop:linearized-eqns} can be recast by using the   gauge conditions \eqref{eq:ell=1-condition-u-solution1}-\eqref{eq:ell=1-condition-u-solution3}, the  free scalars conditions \eqref{eq:prescribed-scalar-conditions}, and the fact that  $\CC_{Ham}=0$, $\CC_{Mom}=0$. Thus, in particular, we will make use of the following\footnote{The remaining equation will be referred to whenever needed.}
\begin{align}
\lab{eq:N-thc-estimate-part}
N \thc &= \mu_{\ell=0} -2r^{-1}\thc-2 r^{-2}\ao +\Ga_1\cdot \Ga_1, \\
\lab{eq:N-K-ell-neq-1-estimate-part}
N \Kc_{\ell\neq 1} &= r^{-1} \mu_{\ell=0}- (\divh Y)_{\ell\neq 1} -3r^{-1}\Kc_{\ell\neq 1} -2 r^{-3} \ao_{\ell\neq 1} \\
&\notag\quad +\Ga_1\cdot \Ga_2+\RR_{tran}(\Kc),\\
\lab{eq:laph-ao-estimate-part}
(\laph (\log\ah))_{\ell\geq 1}&= (\Kc- r^{-1} \thc +\Ga_1\cdot\Ga_1)_{\ell\geq 1},\\
\lab{eq:ao-ell=0-estimate-part}
\overline{\ao} &=0,\\
\lab{eq:laph-u-estimate-part}
-\laph_{\gacr} u-2u &=r^2 \Kc+r^2\Kc \cdot u,\\
\lab{eq:u-ell=1-estimate-part}
u_{\ell=1} &=u\cdot u.
\end{align}
Note that the system \eqref{eq:N-thc-estimate-part}-\eqref{eq:u-ell=1-estimate-part}   for $(\thc, \Kc, \ao,u)$ is self-contained at the linear level, in view of the fact that  $(\divh Y)_{\ell\neq 1}$ is  prescribed   by $\BB$, according to  \eqref{eq:prescribed-scalar-conditions}.  

\subsubsection{The main coupled  system  \eqref{eq:N-thc-estimate-part}-\eqref{eq:u-ell=1-estimate-part}}
We first control the $\ell=1$ part of the Gauss curvature.
\begin{proposition}\lab{prop:estimate-K-ell=1}
We have
\begin{align*}
r^2 | \Kc_{\ell=1} | \lesssim \eps^2,\quad r^{-1} \| u \|_{\H^{s+2}(S_r)} \lesssim  \eps^2+ r\| \Kc_{\ell\neq 1} \|_{\H^s(S_r)}.
\end{align*}
\end{proposition}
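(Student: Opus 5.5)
We will work directly with the uniformization equation \eqref{eq:laph-u-estimate-part}, in its exact form \eqref{eq:laph-u-K-pert}, together with the barycenter condition \eqref{eq:u-ell=1-estimate-part}, whose exact form is \eqref{eq:u-ell=1-pert}. The $\ell=1$ projections are always taken with respect to the round measure of $\gacr$, and on the kernel of $-\laph_{\gacr}-2$ the operator vanishes identically. We will first estimate $\Kc_{\ell=1}$, then obtain $u$ by elliptic inversion.

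\textbf{Step 1: $\ell=1$ projection.} Projecting \eqref{eq:laph-u-K-pert} onto $\ell=1$ makes the left-hand side vanish, since $(-\laph_{\gacr}-2)J_p=0$ and $-\laph_{\gacr}-2$ is symmetric for the round measure. This leaves
\[
r^2\Kc_{\ell=1}=-\big(r^2\Kc(e^{2u}-1)\big)_{\ell=1}-(e^{2u}-1-2u)_{\ell=1}.
\]
Under \textbf{BA} we have $r^{-1}\|u\|_{\H^{s+2}}\le\eps$, and $r\|\Kc\|_{\H^s}\le\eps$. The $\H^s$ norms are with respect to $\ga=r^2e^{2u}\gacr$, and since $|u|\lesssim\eps$ by Sobolev embedding they are comparable to the $\H^s_0$ norms up to factors $1+O(\eps)$. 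Note that $\|\cdot\|_{L^2(S_r)}\sim r\,\|\cdot\|_{L^2(\gacr)}$, so $r^{-1}\|u\|_{\H^{s+2}}\lesssim\eps$ means $u$ has size $\eps$ in $H^{s+2}(\gacr)$. Similarly $r^2\Kc$ has size $\eps$ in $H^s(\gacr)$.

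Each term on the right is then a quadratic product. Pairing with $J_p$ and using Hölder with $L^\infty$ bounds on $u$ gives $r^2|\Kc_{\ell=1}|\lesssim\eps^2$.

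\textbf{Step 2: full estimate for $u$.} We decompose $u=u_{\ell=1}+u_{\ell\ne1}$. By \eqref{eq:u-ell=1-pert}, $|u_{\ell=1}|\lesssim\|u\|_{L^\infty}\|u\|_{L^2(\gacr)}\lesssim\eps^2$.

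For $u_{\ell\ne1}$, the operator $-\laph_{\gacr}-2$ acts on spherical harmonics of degree $\ell$ by the eigenvalue $\ell(\ell+1)-2$. This is $-2$ for $\ell=0$ and at least $4$ for $\ell\ge2$, so the operator is invertible with a two-derivative gain on $(\ker)^\perp$. Round-sphere elliptic estimates therefore give
\[
\|u_{\ell\ne1}\|_{H^{s+2}(\gacr)}\lesssim\|r^2\Kc_{\ell\ne1}\|_{H^s(\gacr)}+\|r^2\Kc(e^{2u}-1)\|_{H^s(\gacr)}+\|e^{2u}-1-2u\|_{H^s(\gacr)}.
\]
The last two terms are $\lesssim\eps^2$ by the algebra property of $H^s$ for $s\ge2$. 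The first term is $\lesssim r\|\Kc_{\ell\ne1}\|_{\H^s}$ plus an $O(\eps^2)$ error. That error comes from the fact that the $\ell$-decomposition is defined with respect to $\gacr$ rather than $\ga$, and from the conversion between norms. Combining this with Step 1 and converting back to $\H^{s+2}_\ga$ norms yields the stated bound.

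\textbf{Main obstacle.} The only delicate point is the bookkeeping between the $\gacr$-based $\ell$-projections and the $\ga$-based norms $\H^s$. In particular, we must check that commuting $\ell$-projections with the metric-comparison factors $e^{2u}$ produces only quadratic $O(\eps^2)$ errors and no linear loss. Since $\ga$ is conformal to $\gacr$ with factor $r^2e^{2u}$, covariant derivatives differ by terms of the form $\nabz u\cdot(\cdot)$. These are quadratic and are controlled by the bootstrap bound on $u$ in $H^{s+2}$, so this step should close straightforwardly.
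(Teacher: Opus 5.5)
Your proposal is correct and follows the same route as the paper. You project the uniformization equation onto $\ell=1$, where $-\laph_{\gacr}-2$ annihilates the $J_p$, to get $r^2|\Kc_{\ell=1}|\lesssim\eps^2$. You then combine the barycenter condition $u_{\ell=1}=u\cdot u$ with round-sphere elliptic estimates for $-\laph_{\gacr}-2$ on the $\ell\neq 1$ modes, and your bookkeeping between $\gacr$-based projections and $\ga$-based norms only spells out details the paper leaves implicit.
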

\begin{proof}
Projecting \eqref{eq:laph-u-estimate-part} to $\ell=1$ modes, we obtain the estimate $|r^2\Kc_{\ell=1} |\lesssim \eps^2$. Then, applying the standard elliptic estimate to \eqref{eq:laph-u-estimate-part} with the condition \eqref{eq:u-ell=1-estimate-part} on $u_{\ell=1}$, we obtain the bound for $u$ as stated.
\end{proof}

We now derive the main coupled system for $(\thc_{\ell\geq 2},\Kc_{\ell\geq 2})$.
\begin{lemma}
The following system holds:
\begin{equation}\lab{eq:ell-geq-2-system-thc-Kc}
\bsplit
N \begin{pmatrix} \thc_{\ell\geq 2} \\ r\Kc_{\ell\geq 2} \end{pmatrix} &= r^{-1} \begin{pmatrix} -2+2(r^2 \laph)^{-1} & -2 (r^2\laph)^{-1} \\ 2 (r^2\laph)^{-1} & -2-2(r^2\laph)^{-1} \end{pmatrix} \begin{pmatrix} \thc_{\ell\geq 2} \\ r\Kc_{\ell\geq 2} \end{pmatrix}+\begin{pmatrix} 0 \\ r\BB \end{pmatrix}\\
&\quad + \Ga_1\cdot r\Ga_2+\RR(\thc,r\Kc),
\end{split}
\end{equation}
where the $\RR$ notation is introduced in Definition \ref{def:error-terms}.
\end{lemma}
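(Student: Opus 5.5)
Starting from \eqref{eq:N-thc-estimate-part} and \eqref{eq:N-K-ell-neq-1-estimate-part}, we project onto the modes $\ell\geq 2$ and use \eqref{eq:laph-ao-estimate-part} to eliminate $\ao_{\ell\geq 2}$ in favor of $\thc$ and $\Kc$. Because $\mu_{\ell=0}$ is constant on each sphere, it does not contribute to the $\ell\geq 2$ projection. Every commutation between $N$ and the projection $\phi\mapsto\phi_{\ell\geq 2}$ produces a term of the form $\RR_{tran}$ (Definition \ref{def:error-terms}). We therefore write $N(\thc_{\ell\geq 2})=(N\thc)_{\ell\geq 2}+\RR(\thc)$, and handle $\Kc$ in the same way.

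The first step is to invert \eqref{eq:laph-ao-estimate-part}. Since $\ao=\log\ah+O(\ao^2)$, we have $\laph\ao=\laph\log\ah+\laph(\Ga_0\cdot\Ga_0)$, and the quadratic part is of the form $\Ga_1\cdot r\Ga_2$ after the weights are accounted for. Restricting to $\ell\geq 2$ and applying $\laph^{-1}$ then gives
\[
\ao_{\ell\geq 2}=\laph^{-1}\big(\Kc-r^{-1}\thc\big)_{\ell\geq 2}+r\,\Ga_1\cdot\Ga_1+\RR_{ellip},
\]
where the $\RR_{ellip}$ terms come from the fact that $\laph$ is taken with respect to $\ga$ while the projection uses $\gacr$. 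Substituting this into \eqref{eq:N-thc-estimate-part} produces the coefficient
\[
-2r^{-2}\ao_{\ell\geq 2}=r^{-1}\Big(2(r^2\laph)^{-1}\thc_{\ell\geq2}-2(r^2\laph)^{-1}(r\Kc_{\ell\geq 2})\Big),
\]
which, together with $-2r^{-1}\thc$, gives the first row of the matrix.

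For the second row, we multiply \eqref{eq:N-K-ell-neq-1-estimate-part} by $r$ and use $N(r\Kc)=rN\Kc+(N r)\Kc$ with $Nr=\ah^{-1}=1-\ao+\dots$. This converts $-3r^{-1}\Kc$ into $-2r^{-1}(r\Kc)$, and the remaining terms $\ao\Kc$ are quadratic, of type $\Ga_1\cdot r\Ga_2$. The term $-2r^{-2}\ao_{\ell\geq 2}$ becomes, after the same substitution, $r^{-1}\big(2(r^2\laph)^{-1}\thc-2(r^2\laph)^{-1}(r\Kc)\big)$, which is the second row. By \eqref{eq:prescribed-scalar-conditions}, $(\divh Y)_{\ell\geq 2}=\BB$, so $-(r\divh Y)_{\ell\geq 2}$ yields the source $-r\BB$. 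That sign differs from the displayed $+r\BB$, but the discrepancy only reflects the sign convention for $\BB$ and does not matter for estimates; it should still be checked against the conventions used here.

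The main bookkeeping obstacle is to verify that all residual terms fall into the two classes $\Ga_1\cdot r\Ga_2$ and $\RR(\thc,r\Kc)$. One source of such terms is the commutator $[N,(\cdot)_{\ell\geq 2}]$, which involves $b$, $\ao$ and $\pa_r u$ through the $r$-dependence of the projection weights. Another is the difference between $\laph^{-1}$ and $(r^2\laph_{\gacr})^{-1}$ on $\ell\geq 2$, which involves $u$. The remaining terms are the quadratic terms coming from $\ao$ versus $\log\ah$ and from $Nr$. We will check that each of these is either genuinely quadratic, carrying one small $\Ga$ factor, or exactly of the form recorded in Definition \ref{def:error-terms}. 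The other steps are algebraic substitutions.
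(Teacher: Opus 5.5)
Your proposal is correct and follows essentially the paper's argument. You invert \eqref{eq:laph-ao-estimate-part} for $\log\ah$ (hence for $\ao$ up to quadratic terms), substitute into the $\ell\geq 2$ projections of the $\thc$ and $\Kc$ transport equations using $(\divh Y)_{\ell\geq 2}=\BB$, collect the commutators into $\RR_{tran}$ and $\RR_{ellip}$, and then weight $\Kc$ by $r$ via $Nr=\ah^{-1}$, which turns $-3$ into $-2$.

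Two small points, neither of which affects the conclusion:

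Your sign $-r\BB$ is the right one. The paper itself writes $-\BB$ in the unweighted $\Kc_{\ell\geq 2}$ equation, so the displayed $+r\BB$ is a typo rather than a convention; it is immaterial because only $\|r\BB\|$ is used later.

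The quadratic remainder in $\ao_{\ell\geq 2}$ has size $r^2\Ga_1\cdot\Ga_1$, not $r\,\Ga_1\cdot\Ga_1$, since it contains $\ao^2$ and $\laph^{-1}(\Ga_1\cdot\Ga_1)$. After the factor $-2r^{-2}$ it still gives $\Ga_1\cdot\Ga_1\subset\Ga_1\cdot r\Ga_2$.
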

\begin{proof}

Inverting the Laplacian in \eqref{eq:laph-ao-estimate-part}, we obtain
\bea\lab{eq:log-ah-substitute}
\log\ah-\overline{\log\ah}=\laph^{-1} (\Kc-r^{-1} \thc+\Ga_1\cdot \Ga_1),
\eea
where we extend the definition $\laph^{-1}$ to any function $\phi$ by defining $\laph^{-1} \phi:= \laph^{-1} (\phi-\overline{\phi})$.
Plugging \eqref{eq:log-ah-substitute} into \eqref{eq:N-thc-estimate-part} and \eqref{eq:N-K-ell-neq-1-estimate-part}, using $(\divh Y)_{\ell\geq 2}=\BB$, we deduce
\begin{align}
\lab{eq:N-thc-estimate-part-substituted}
N \thc_{\ell\geq 2} &=  -2r^{-1}\thc_{\ell\geq 2}-2 r^{-2}\left(\laph^{-1} (\Kc-r^{-1} \thc+\Ga_1\cdot \Ga_1) \right)_{\ell\geq 2}+\Ga_1\cdot \Ga_1+\RR_{tran}(\thc), \\
N \Kc_{\ell\geq 2} &= - \BB -3r^{-1}\Kc_{\ell\geq 2} -2 r^{-3} \left(\laph^{-1} (\Kc-r^{-1} \thc+\Ga_1\cdot \Ga_1)\right)_{\ell\geq 2} +\Ga_1\cdot \Ga_2+\RR_{tran}(\Kc),
\end{align}
where the $\RR_{tran}$ notation is introduced in Definition \ref{def:error-terms}, according to which we also write
\beaa
(\laph^{-1} \phi)_{\ell\geq 2}=\laph^{-1} (\phi_{\ell\geq 2}) + \RR_{ellip}(\phi).
\eeaa
Therefore, we obtain
\beaa
N \begin{pmatrix} \thc_{\ell\geq 2} \\ \Kc_{\ell\geq 2} \end{pmatrix} = r^{-1} \begin{pmatrix} -2+2r^{-2} \laph^{-1} & -2 r^{-1} \laph^{-1} \\ 2r^{-3}\laph^{-1} & -3-2r^{-2}\laph^{-1} \end{pmatrix} \begin{pmatrix} \thc_{\ell\geq 2} \\ \Kc_{\ell\geq 2} \end{pmatrix} +\begin{pmatrix} 0 \\ r\BB \end{pmatrix}+\begin{pmatrix} \Ga_1\cdot \Ga_1 +\RR(\thc) \\ \Ga_1\cdot \Ga_2+\RR(\Kc) \end{pmatrix},
\eeaa
where $\RR$ includes both $\RR_{tran}$ and $\RR_{ellip}$, which can be controlled using Lemma \ref{lem:RR-transport} and Lemma \ref{lem:comparison}.
Then, incorporating $\Kc_{\ell\geq 2}$ with an $r$-weight, we obtain \eqref{eq:ell-geq-2-system-thc-Kc} as required.
\end{proof}

As in the case of  the scalar model, to integrate \eqref{eq:ell-geq-2-system-thc-Kc} forward in $r$, we need to incorporate suitable $r$-weights. We  rely on the following lemma.
\begin{lemma}\lab{lem:matrix-operator-sign}
Suppose $L\colon H\to H$ is a self-adjoint bounded linear operator with $\|L\|\le C$ on a Hilbert space $H$. For any given $\de>0$, the matrix operator
\beaa
A_\de[L]:=
\begin{pmatrix}
-\de-L & L\\
-L & -\de+L
\end{pmatrix}
\eeaa
is negative for some non-degenerate inner product $\langle\cdot,\cdot\rangle$ on $H\times H$, i.e., for any $(x,y)\in H\times H$, we have
\begin{equation*}
\Big\langle A_\de[L] \binom{x}{y}, \binom{x}{y} \Big\rangle + \Big\langle \binom{x}{y}, A_\de[L] \binom{x}{y} \Big\rangle \leq 0.
\end{equation*}
\end{lemma}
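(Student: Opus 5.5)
The plan is to split $A_\de[L]=-\de\,\mathrm{Id}+B$, where
\[
B:=\begin{pmatrix} -L & L\\ -L & L\end{pmatrix}.
\]
The first thing to note is that $B$ is nilpotent. Indeed $B\binom{x}{y}=\binom{-L(x-y)}{-L(x-y)}$, and both components coincide, so $B^2=0$. The $-\de\,\mathrm{Id}$ part is strictly negative in any inner product. The idea is therefore to choose an inner product in which the nilpotent part $B$ is small relative to $\de$. For a nilpotent (Jordan-type) block this is always possible, by rescaling the coordinate in which $B$ acts.

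Concretely, I will change variables to $w:=x-y$ and $z:=y$. In these coordinates $B$ sends $(w,z)\mapsto(0,-Lw)$, i.e.
\[
B\simeq\begin{pmatrix}0&0\\-L&0\end{pmatrix}.
\]
For a parameter $\la>0$ to be fixed, I define on $H\times H$ the bilinear form
\[
\Big\langle \binom{x}{y},\binom{x'}{y'}\Big\rangle_\la:=\la^2\langle x-y,x'-y'\rangle_H+\langle y,y'\rangle_H .
\]
It is symmetric and positive definite, hence non-degenerate: if it vanishes on the diagonal then $y=0$ and $x=y=0$. It is also equivalent to the product inner product, with constants depending on $\la$.

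Next I compute the quadratic form, using self-adjointness of $A_\de[L]$ with respect to nothing in particular; I only need the symmetrized expression in the statement. For $v=\binom{x}{y}$, with $w=x-y$ and $z=y$, the vector $A_\de[L]v$ has $w$-component $-\de w$ and $z$-component $-\de z-Lw$. Hence
\[
\langle A_\de[L]v,v\rangle_\la+\langle v,A_\de[L]v\rangle_\la=-2\de\big(\la^2\|w\|^2+\|z\|^2\big)-2\langle Lw,z\rangle_H .
\]
Using $\|L\|\le C$ and Cauchy--Schwarz, $|2\langle Lw,z\rangle|\le 2C\|w\|\|z\|$. By AM--GM, $2\de(\la^2\|w\|^2+\|z\|^2)\ge 4\de\la\|w\|\|z\|$. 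So the whole expression is $\le 0$ as soon as $\la\ge C/(2\de)$, and I fix $\la:=C/(2\de)$ (or anything larger).

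There is no real obstacle in this argument. The only point needing care is finding the right coordinates: one has to recognize the nilpotent structure of $B$, which is what makes the weight $\la$ on the $x-y$ component sufficient. Self-adjointness of $L$ is not even needed beyond boundedness. A natural remark, useful for the later application, is that the new inner product is uniformly equivalent to the standard one with constants depending only on $C/\de$. This is what will allow using it, together with Remark \ref{rem:la'-la} with $\la'$ slightly less than the decay rate, to integrate \eqref{eq:ell-geq-2-system-thc-Kc}.
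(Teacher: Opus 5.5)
Your proof is correct and is essentially the paper's argument. The paper also splits $A_\de[L]=-\de I+Q$ with the same nilpotent $Q$ (your $B$). It then uses the inner product $G_\de$, whose eigenvalues are $2-\eta^2$ in the $x-y$ direction and $\eta^2$ in the $x+y$ direction, so it weights $x-y$ heavily just as your $\la^2\sim (C/\de)^2$ does, and it closes with the same Cauchy--Schwarz/AM--GM bound on the single cross term in $L(x-y)$. Your triangular coordinates $(x-y,y)$ shortcut the paper's conjugation by $U$ and show that self-adjointness of $L$ is unnecessary. One small fix: take $\la:=\max\{1,C/(2\de)\}$ rather than $C/(2\de)$, so the form stays non-degenerate when $C=0$.
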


\begin{proof}
Write $A_\de[L]=-\de I+Q$, with
$Q:=
\begin{pmatrix}
-L&L\\
-L&L
\end{pmatrix}$. Consider
\beaa
G_\de:=\begin{pmatrix}
1&-1+\eta^2\\
-1+\eta^2&1
\end{pmatrix}
\eeaa
for some $\eta\in (0,1)$ to be determined.
Since the eigenvalues of $G_\de$ are $\eta^2$ and $2-\eta^2$, this defines an inner product on $H\times H$.
Using that $L$ is self-adjoint, one computes
\beaa
G_\de Q+Q^*G_\de=
\begin{pmatrix}
-2\eta^2L&0\\
0&2\eta^2L
\end{pmatrix}.
\eeaa
Hence
\beaa
M_\de:= \frac12\big(G_\de A_\de[L]+A_\de[L]^*G_\de\big)=
\begin{pmatrix}
-\de-\eta^2L&\de(1-\eta^2)\\
\de(1-\eta^2)&-\de+\eta^2L
\end{pmatrix}.
\eeaa
It suffices to show $M_\de<0$.
Now conjugate by
\beaa
U:=\frac1{\sqrt2}
\begin{pmatrix}
I&I\\
I&-I
\end{pmatrix}.
\eeaa
A direct computation gives
\beaa
U^*M_\de U
=
\begin{pmatrix}
-\de\eta^2I&-\eta^2L\\
-\eta^2L&-\de(2-\eta^2)I
\end{pmatrix}.
\eeaa
Thus it suffices to show that the above operator is negative.
Let $(x,y)\in H\times H$. Then
\beaa
\Big\langle U^*M_\de U \binom{x}{y}, \binom{x}{y}\Big\rangle =
-\de\eta^2\|x\|^2-\de(2-\eta^2)\|y\|^2-2\eta^2\langle Lx,y\rangle.
\eeaa
Using $\|L\|\le C$, it is straightforward to show that this is negative for $\eta:=\min\Big\{1,\frac{\de}{2C}\Big\}$.
\end{proof}

\begin{proposition}
We have the improved estimate
\bea\lab{eq:improved-estimate-thc-Kc}
\| \thc \|_{\H^{s+1}(S_r)} \lesssim \eps_0+ \eps^2,\quad r\|\Kc \|_{\H^s(S_r)} \lesssim \eps_0+ \eps^2.
\eea
As a direct consequence in view of Proposition \ref{prop:estimate-K-ell=1}, we also obtain the improved estimate for the conformal factor $r^{-1} \| u \|_{\H^{s+2}(S_r)} \lesssim \eps_0+ \eps^2$.
\end{proposition}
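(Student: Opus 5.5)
We will split $\thc$ and $\Kc$ into modes, $\ell\geq 2$, $\ell=1$ and $\ell=0$, and treat each separately. The $\ell\geq 2$ part is the coupled system \eqref{eq:ell-geq-2-system-thc-Kc}, and it carries the main difficulty. On $\ell\geq 2$ modes, $L:=-2(r^2\laph)^{-1}$ is self-adjoint and positive, with $\|L\|\leq 1/3$ since the first nonzero eigenvalue is $6$. The linear matrix in \eqref{eq:ell-geq-2-system-thc-Kc} is then $r^{-1}(-2I+A_0[L])$, where $A_0[L]$ has the structure of Lemma \ref{lem:matrix-operator-sign}.

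For the weighted estimate we write $-2I+A_0[L]=-\la' I + A_{\de}[L]$ with $\la'=2-\de>1$ and $\de$ small (say $\de=1/2$). Lemma \ref{lem:matrix-operator-sign} supplies an inner product $G_\de$ on $H\times H$, built from $L^2(S_r)$, for which $A_\de[L]$ is negative. We set
\[
E(r):=\Big\langle G_\de \binom{\thc_{\ell\geq 2}}{r\Kc_{\ell\geq 2}},\binom{\thc_{\ell\geq 2}}{r\Kc_{\ell\geq 2}}\Big\rangle_{L^2(S_r)},
\]
with $L^2$ taken with respect to $\gacr$ so that the commutation of $\ah N$ with the sphere integral is controlled by $\Ga_1$ and $b$-errors. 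The same energy is applied to $(r\nabh)^{i}$ derivatives, $i\leq s$, and one more angular derivative is added for $\thc$ via the Codazzi equation; the commutators are absorbed as $\RR$-type errors. As in Remark \ref{rem:la'-la}, we differentiate $r^{2\la'}E$ along $\ah N$. The negativity of $A_\de[L]$ kills the linear part. The forcing $r\BB$ is bounded by $\eps_0 r^{-1}$ from the assumption $r^2\|\BB\|_{\H^s}\leq\eps_0$. The nonlinear terms $\Ga_1\cdot r\Ga_2$ are bounded by $\eps^2 r^{-1}$ under \textbf{BA}. We also have to bound the error terms $\RR(\thc,r\Kc)$ using Lemma \ref{lem:RR-transport} and Lemma \ref{lem:comparison}; they are quadratic, hence $\lesssim\eps^2 r^{-1}$. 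The shift $\ah=1+O(\eps)$ only produces quadratic errors. Integrating from $r_0$ gives $r^{\la'}E^{1/2}\lesssim \eps_0+(\eps_0+\eps^2)r^{\la'-1}$, hence $E^{1/2}\lesssim(\eps_0+\eps^2)$, which is exactly the scale-invariant form of \eqref{eq:improved-estimate-thc-Kc}. Because the commutators with $(r\nabh)^i$ and the error terms must be checked to be genuinely quadratic and $r^{-1}$-integrable against the weight $r^{\la'-1}$, with no loss of derivatives, this energy step is the main obstacle. In particular, the top-order $\thc$ bound has to come from the Codazzi equation \eqref{eq:unconditional-Codazzi} and the elliptic estimate for $\thh$.

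The $\ell=1$ part of $\Kc$ is controlled by Proposition \ref{prop:estimate-K-ell=1}. For $\thc_{\ell=1}$ we project \eqref{eq:N-thc-estimate-part}. By \eqref{eq:laph-ao-estimate-part} we have $\ao_{\ell=1}=r^2(\Kc-r^{-1}\thc)_{\ell=1}/(-2)+\ldots$, so that $-2r^{-2}\ao_{\ell=1}=r^{-1}\thc_{\ell=1}$ up to terms $\lesssim \eps^2 r^{-2}$. This gives a scalar transport equation $N\thc_{\ell=1}=-r^{-1}\thc_{\ell=1}+O(\eps^2r^{-2})$, which is the borderline case $\la=1$ of the scalar model \eqref{eq:scalar-model}. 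We therefore avoid integrating it and instead recover $\thc_{\ell=1}$ from the gauge relation \eqref{eq:divz-b-flat-ell=1} combined with the Codazzi equation. An alternative would be to recheck the sign, which might give $\la=3$ once the $\Kc_{\ell=1}$ contribution is included correctly.

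The $\ell=0$ parts are handled through the Hawking-mass-type quantity. Since $\mu_{\ell\geq 1}=0$, we have $\mu_{\ell=0}=\overline{K}-\frac14\overline{(\trth)^2}$ up to quadratic terms. By Gauss--Bonnet, $\overline{K}$ is determined by the area radius, so $\Kc_{\ell=0}$ is quadratic. The equation for $\overline{\thc}$ from \eqref{eq:N-thc-estimate-part} (using $\overline{\ao}=0$) becomes $N\overline{\thc}=-r^{-1}\overline{\thc}\cdot$const$+O(\eps^2r^{-2})$ after substituting $\mu_{\ell=0}$. Here $\la=2$, so the scalar model propagates the bound. Finally, the bound on $u$ follows immediately from Proposition \ref{prop:estimate-K-ell=1} once $\Kc_{\ell\neq1}$ is controlled.
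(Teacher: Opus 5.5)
Your $\ell\geq 2$ argument is the same as the paper's Step 1. The low modes, however, do not close as you propose.

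\textbf{The $\ell=1$ mode.} You made a sign slip. Since $\laph=-2r^{-2}$ on $\ell=1$ modes, \eqref{eq:laph-ao-estimate-part} gives $-2r^{-2}\ao_{\ell=1}=(\Kc-r^{-1}\thc)_{\ell=1}+\ldots$. So the lapse term contributes $-r^{-1}\thc_{\ell=1}$, not $+r^{-1}\thc_{\ell=1}$, and the projected equation is $N\thc_{\ell=1}=-3r^{-1}\thc_{\ell=1}+\Kc_{\ell=1}+\ldots$. With $r^2|\Kc_{\ell=1}|\lesssim\eps^2$ from Proposition \ref{prop:estimate-K-ell=1}, this is the favorable case $\la=3$ of the scalar model, which is exactly how the paper concludes. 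The gain comes from the sign of the $\thc$ term, not from $\Kc_{\ell=1}$, which is quadratic.

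The route you actually propose for $\thc_{\ell=1}$ cannot work:
\begin{itemize}
\item The operator $\nabz\hot$ in \eqref{eq:nab-hot-b-flat} annihilates the conformal Killing $1$-forms, which are exactly the $\ell=1$ gradient and curl modes. So $(\divz b)_{\ell=1}$ is known only through \eqref{eq:divz-b-flat-ell=1} itself, and that relation is its definition in terms of $\thc_{\ell=1}$ and $\ao_{\ell=1}$.
\item $\divh\divh\thh$ has no $\ell=1$ component at leading order. Hence the $\ell=1$ projection of \eqref{eq:unconditional-Codazzi-1-2-estimate} only says $\Bb_{\ell=1}\approx-r^{-2}\thc_{\ell=1}$ (cf. \eqref{eq:estimate-Bd-Bbd-thc-ell=1}). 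It determines $(\divh Y)_{\ell=1}$ from $\thc_{\ell=1}$, not the converse.
\end{itemize}
The paper uses both relations in that direction, after $\thc_{\ell=1}$ has been estimated by transport.

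\textbf{The $\ell=0$ mode and the top-order bound.} Your claim that $\Kc_{\ell=0}$ is quadratic is false. Gauss--Bonnet gives $\overline{K}^{\ga}=4\pi/\mathrm{Area}(S_r)$, but $\mathrm{Area}(S_r)=r^2\int_{\mathbb{S}^2}e^{2u}$, and no gauge condition controls $u_{\ell=0}$. The $\ell=0$ gauge is $\overline{\ao}=0$, not the area-radius condition (see the footnote in Step 2 of the paper's proof). Thus $\Kc_{\ell=0}\approx-2r^{-2}u_{\ell=0}$ is linear, and already of size $\eps_0 r^{-2}$ at $r_0$ for generic almost-round data. If you treat it as a source bounded only by the bootstrap, you get no improvement.

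Since $\mu_{\ell=0}=\Kc_{\ell=0}-r^{-1}\thc_{\ell=0}+\ldots$, one has $N\thc_{\ell=0}=\Kc_{\ell=0}-3r^{-1}\thc_{\ell=0}+\ldots$, coupled to $N\Kc_{\ell=0}=-r^{-2}\thc_{\ell=0}-2r^{-1}\Kc_{\ell=0}+\ldots$. The paper integrates this $2\times 2$ system in the variables $(r^{2-\de}\thc_{\ell=0},r^{3-\de}\Kc_{\ell=0})$. Its matrix is $-\de I$ plus a nilpotent matrix, i.e.\ $A_\de[1]$ in the notation of Lemma \ref{lem:matrix-operator-sign}.

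Finally, the top-order bound for $\thc$ cannot come from Codazzi. The bound on $\|\thh\|_{\H^{s+1}}$ is itself derived from $\|\thc\|_{\H^{s+1}}$ through that system, and under the bootstrap $\thh$ is only bounded by $\eps$, so the argument is circular. The paper instead proceeds in two steps:
\begin{itemize}
\item From the $\H^s$ bounds and \eqref{eq:laph-ao-estimate-part}--\eqref{eq:ao-ell=0-estimate-part}, it first obtains $r^{-1}\|\ao\|_{\H^{s+2}}\lesssim\eps_0+\eps^2$.
\item It then commutes \eqref{eq:N-thc-estimate-part} with $(r\nabh)^{s+1}$ and integrates with $\la=2$. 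Here $r^{-2}\ao$ is a known source and there is no loss of derivatives.
\end{itemize}
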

\begin{proof} We proceed  in steps as follows:

{\bf Step 1.}
 Consider the coupled system for $\ell\geq 2$ in view of \eqref{eq:ell-geq-2-system-thc-Kc}:
\begin{equation}\lab{eq:N-de-weight-coupled}
\bsplit
N \begin{pmatrix} r^{2-\de} \thc_{\ell\geq 2} \\ r^{3-\de} \Kc_{\ell\geq 2}\end{pmatrix} &= r^{-1} \begin{pmatrix} -\de+2(r^2 \laph)^{-1} & -2 (r^2\laph)^{-1} \\ 2 (r^2\laph)^{-1} & -\de-2(r^2\laph)^{-1} \end{pmatrix} \begin{pmatrix} r^{2-\de} \thc_{\ell\geq 2} \\ r^{3-\de} \Kc_{\ell\geq 2} \end{pmatrix} \\
&\quad +\begin{pmatrix} 0 \\ r^{3-\de} \BB \end{pmatrix}+ r^{2-\de} \Ga_1\cdot r \Ga_2+r^{2-\de} \RR(\thc,r\Kc).
\end{split}
\end{equation}
The matrix operator on the right is exactly $A_\de[(r^2\laph)^{-1}]$ as considered in Lemma \ref{lem:matrix-operator-sign}, with $L=r^2 \laph^{-1}$ uniformly bounded for all $r$. Applying Lemma \ref{lem:matrix-operator-sign}, we see that the operator is negative with respect to the inner product $\H^s(S_r) \otimes G$, where $G$ is a positive definite inner product on $\mathbb{R}^2$ independent of $r$. This allows us to integrate in the forward direction using the second estimate in Lemma \ref{lem:integration-lemma-1} with $\la=0$, which yields
\begin{align*}
r^{-1} \| r^{2-\de} \thc_{\ell\geq 2}, r^{3-\de} \Kc_{\ell\geq 2} \|_{\H^s(S_r)} &\lesssim \eps_0+ \int_{r_0}^r r'^{2-\de} (r'^{-1} \|\Ga_1\cdot r\Ga_2\|_{\H^s(S_{r'})})+ r'^{3-\de}\cdot r'^{-1} \|\BB\|_{\H^s(S_{r'})}\, dr' \\
&\lesssim \eps_0+ \int_{r_0}^r r'^{2-\de} \cdot (\eps^2+\eps_0) r'^{-2} dr' \lesssim \eps_0+(\eps^2+\eps_0) r^{1-\de},
\end{align*}
i.e., 
\beaa
\|  \thc_{\ell\geq 2} \|_{\H^s(S_r)} \lesssim \eps_0 + \eps^2,\quad r\|\Kc_{\ell\geq 2}\|_{\H^s(S_r)} \lesssim \eps_0+ \eps^2.
\eeaa
Moreover, combining this with the estimate of $\Kc_{\ell=1}$ obtained in Proposition \ref{prop:estimate-K-ell=1}, we deduce the estimate for $\Kc_{\ell\geq 1}$.

{\bf Step 2.}
In view of the condition\footnote{
The $\ell=0$ part of $\ao$ corresponds to the freedom of relabeling a given sphere foliation. In the current scenario, using $u=O(\eps)$, we have $\mathrm{Area}(S_r)=r^2(1+O(\eps))$, and it is in fact not possible to make $r$ exactly the area radius,  which at the linear level would correspond to the relation $\overline{\ao}=-\frac 12 r \overline{\thc}$.} \eqref{eq:ao-ell=0-estimate-part} and the definition of $\mu$ in \eqref{eq:def-mu}, when projected to $\ell=0$, the equations \eqref{eq:N-thc-estimate-part}-\eqref{eq:N-K-ell-neq-1-estimate-part} read
\begin{align}
\lab{eq:N-thc-estimate-part-ell=0}
N \thc_{\ell=0} &= \Kc_{\ell=0} -3r^{-1}\thc_{\ell=0} +\Ga_1\cdot \Ga_1+\RR(\thc,r^{-1}\ao), \\
\lab{eq:N-K-ell-neq-1-estimate-part-ell=0}
N \Kc_{\ell=0} &= -r^{-2}\thc_{\ell=0} -(\divh Y)_{\ell=0} -2r^{-1}\Kc_{\ell=0}  +\Ga_1\cdot \Ga_2+\RR(\Kc,r^{-2}\ao).
\end{align}
Clearly, $(\divh Y)_{\ell=0}=O(\eps r^{-1}) \cdot Y$. We now consider the weighted variables $r^{2-\de} \thc_{\ell=0}$ and $r^{3-\de} \Kc_{\ell=0}$, whose equations read
\begin{align*}
\pa_r \begin{pmatrix} r^{2-\de} \thc_{\ell=0} \\ r^{3-\de} \Kc_{\ell=0} \end{pmatrix}=r^{-1} \begin{pmatrix} -1-\de & 1 \\ -1 & 1-\de \end{pmatrix} \begin{pmatrix} r^{2-\de} \thc_{\ell=0} \\ r^{3-\de} \Kc_{\ell=0} \end{pmatrix} + r^{3-\de} (\Ga_1\cdot \Ga_2+\RR(\Kc,r^{-1}\thc,r^{-2}\ao)).
\end{align*}
The matrix has $\det=-1+\de^2+1=\de^2 >0$, $\tr=-2\de<0$, hence two negative eigenvalues. This allows us to use Lemma \ref{lem:integration-lemma-1} to integrate forward in $r$ and deduce
\beaa
|r^{2-\de} \thc_{\ell=0},r^{3-\de} \Kc_{\ell=0}| \lesssim \eps_0+ \int_{r_0}^r r'^{3-\de} \cdot \eps^2 r'^{-3}\, dr' \lesssim \eps_0+ \eps^2 r^{1-\de},
\eeaa
which proves the improved estimate for $\thc_{\ell=0}$ and $\Kc_{\ell=0}$.

{\bf Step 3.}
Plugging \eqref{eq:log-ah-substitute} again into \eqref{eq:N-thc-estimate-part}, this time projected to $\ell=1$, we obtain
\beaa
N \thc_{\ell=1} =  -2r^{-1}\thc_{\ell=1}-2 r^{-2}\laph^{-1} (\Kc_{\ell=1}-r^{-1} \thc_{\ell=1}+\Ga_1\cdot \Ga_1) +\Ga_1\cdot \Ga_1+\RR_{tran}(\thc), 
\eeaa
which can be rewritten as
\beaa
N \thc_{\ell=1} =  -3r^{-1}\thc_{\ell=1}-2 r^{-2}\laph^{-1} (\Kc_{\ell=1}) +\Ga_1\cdot \Ga_1+ \RR_{tran} (\thc).
\eeaa
Applying the second estimate in Lemma \ref{lem:integration-lemma-1} with $\la=3$, we obtain the improved the estimate for $\thc_{\ell=1}$.\footnote{We note that the $\ell=1$ equations for $\thc$, $\trt$, and $\curlh \Xi$ are similar to those derived in \cite[Section 2.6]{CK25}. In particular, they are rigid in the sense that no free scalars enter as inhomogeneous terms; consequently, it suffices to impose the initial condition at $\{r=r_0\}$ (or $r=\infty$ in the context of \cite{CK25}).}

This finishes the $\H^s$ estimates for $\thc$ and $\Kc$. Then, applying elliptic estimates to \eqref{eq:laph-ao-estimate-part}, \eqref{eq:ao-ell=0-estimate-part}, we obtain the improved estimate for $\ao=\ah-1$:
\beaa
r^{-1} \| \ao \|_{\H^{s+2}(S_r)} \lesssim \eps_0+ \eps^2.
\eeaa 
To close the estimates in \eqref{eq:improved-estimate-thc-Kc}, we commute the equation \eqref{eq:N-thc-estimate-part} with $(r\nabh)^{s+1}$ and integrate in $r$. Since we now have enough control of $\ah$, we obtain the $\H^{s+1}$ estimate for $\trth$.
\end{proof}

\subsubsection{Bounds for the remaining quantities}
We now examine the remaining equations, which, similar to \eqref{eq:N-thc-estimate-part}-\eqref{eq:u-ell=1-estimate-part}, follow by combining the HCS  equations  of Proposition \ref{prop:linearized-eqns} with the conditions in Theorem \ref{thm:local-existence-precise}:
\begin{align}
\lab{eq:unconditional-Codazzi-1-2-estimate}
\d_1 \d_2 \thh &= \big(\frac 12 \laph\thc,0\big)-(\BB+\Bb_{\ell\leq 1},\BBd+\Bbd_{\ell\leq 1}),\\
\lab{eq:estimate-Y}
\d_1 Y &=  (\BB, \BBd)+(\Bb_{\ell\leq 1},\Bbd_{\ell\leq 1})  ,
\\
\lab{eq:N-trt-estimate}
N\trt &= 2 r^{-1} \Pi  - r^{-1} \trt +\Ga_1\cdot \Ga_1 ,\\
\lab{eq:d1-d2-Thh-estimate}
\d_1 \d_2 (\ah \Thh) &= (\KK, -\KKd)+(\Kk_{\ell\leq 1},\Kkd_{\ell\leq 1})+\Ga_1\cdot \Ga_2,\\
\lab{eq:nu-estimate}
\divh \Xi &=0, \\
\lab{eq:curl-Xi-estimate}
\ah N \curlh \Xi  &=-4r^{-1} \curlh \Xi+ \KKd,\\
\lab{eq:estimate-Pi}
\laph \left(\Pi +\frac 12 \trt \right) &= \KK + \Kk_{\ell\leq 1},\\
\lab{eq:average-Pi-estimate}
\overline{\Pi}&= -\frac 12 \overline{\trt},\\ 
\lab{eq:nab-hot-b-estimate}
\nabz\hot b^{\flat,0} &=-2 e^{-2u} \ah\, \thh,\\
\lab{eq:div-b-ell=1}
(\divz (e^{2u} b^{\flat,0}))_{\ell=1} & = -\left(e^{2u} (\ah\, \trth -2r^{-1})\right)_{\ell=1},\\
\lab{eq:curl-b-ell=1}
(\curlz (e^{2u} b^{\flat,0}))_{\ell=1}&=0.
\end{align}
\begin{proposition}
We have 
\beaa
\| \thh ,rY, \trt, \Thh, \Xi, \Pi \|_{\H^{s+1}(S_r)} \lesssim \eps_0+\eps^2.
\eeaa
\end{proposition}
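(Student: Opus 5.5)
We proceed in the order dictated by the triangular structure of \eqref{eq:unconditional-Codazzi-1-2-estimate}--\eqref{eq:curl-b-ell=1}, using throughout the improved bounds \eqref{eq:improved-estimate-thc-Kc} for $\thc$, $\Kc$, $u$ and $\ao$. These improved bounds make the spheres uniformly close to round, so that the Hodge estimates of Lemma \ref{lem:comparison} and the transport estimates of Lemma \ref{lem:integration-lemma-1} hold with uniform constants.

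\textbf{Step 1: $\thh$ and $Y$.} We start with \eqref{eq:unconditional-Codazzi-1-2-estimate}. Lemma \ref{lem:comparison} determines the unique $\ell\leq 1$ correction $(\Bb_{\ell\leq 1},\Bbd_{\ell\leq 1})$ for which the Hodge system $\d_1\d_2$ is solvable. At the linear level this correction equals $-r^{-2}(\thc_{\ell=1},0)$, plus errors quadratic in $u$ and the other quantities. Hence
\[
r^3|(\Bb_{\ell\leq 1},\Bbd_{\ell\leq 1})|\lesssim \|\thc\|_{L^2}+\eps^2+r^3\|\BB,\BBd\|_{L^2}\lesssim\eps_0+\eps^2 .
\]
The Hodge estimate then gives
\[
\|\thh\|_{\H^{s+1}}\lesssim \|\thc\|_{\H^{s+1}}+r^2\|\BB,\BBd\|_{\H^{s-1}}+r^2|\Bb_{\ell\leq1}|\lesssim \eps_0+\eps^2 ,
\]
using the hypothesis $r^2\|\BB\|_{\H^s}\le\eps_0$. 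The same inversion of $\d_1$ in \eqref{eq:estimate-Y} gives $r\|Y\|_{\H^{s+1}}\lesssim r^2\|\BB,\BBd\|_{\H^s}+r^3|\Bb_{\ell\leq 1}|\lesssim\eps_0+\eps^2$.

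\textbf{Step 2: $\Xi$, $\trt$ and $\Pi$.} For $\curlh\Xi$, equation \eqref{eq:curl-Xi-estimate} is a transport equation with coefficient $\la=4>1$ in the sense of the scalar model \eqref{eq:scalar-model}. Multiplying by $r^{4}$ (or by a weight $r^{\la'}$ with $2<\la'<4$, as in Remark \ref{rem:la'-la}) and integrating forward, with source $\KKd=O(\eps_0r^{-2})$, gives $r\|\curlh\Xi\|_{\H^s}\lesssim\eps_0+\eps^2$. Combined with $\divh\Xi=0$ from \eqref{eq:nu-estimate}, the Hodge estimate for $\d_1$ yields $\|\Xi\|_{\H^{s+1}}\lesssim\eps_0+\eps^2$; there is no $\ell=1$ kernel obstruction for $1$-forms on almost round spheres.

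The pair $(\trt,\Pi)$ is the delicate part. From \eqref{eq:estimate-Pi}--\eqref{eq:average-Pi-estimate}, set $\Pi=-\frac12\trt+\laph^{-1}(\KK+\Kk_{\ell\leq1})$. Substituting this into \eqref{eq:N-trt-estimate} gives
\[
N\trt=-2r^{-1}\trt+2r^{-1}\laph^{-1}(\KK+\Kk_{\ell\leq1})+\Ga_1\cdot\Ga_1 .
\]
This has $\la=2>1$, which is exactly the reason the condition on $\KK$ was modified. The source is $O(r^{-1}\cdot r^2\cdot\eps_0 r^{-2})=O(\eps_0r^{-1})$, which is not integrable against the desired rate, so the first attempt fails. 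Instead we use the weight $r^{\la'}$ with $1<\la'<2$ as in Remark \ref{rem:la'-la}. The source term then contributes
\[
\int^r r'^{\la'}\cdot\eps_0 r'^{-2}\,dr'\lesssim \eps_0r^{\la'-1},
\]
which recovers $\|\trt\|_{\H^{s+1}}\lesssim\eps_0+\eps^2$ after commuting with $(r\nabh)^{s+1}$ via Lemma \ref{lem:integration-lemma-1}. Here $\Kk_{\ell\leq1}$ is bounded by $\eps r^{-3}$, which we take from the bootstrap assumption at this stage and improve in Step 3. The formula for $\Pi$ then gives the bound for $\Pi$.

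\textbf{Step 3: $\Thh$.} Apply Lemma \ref{lem:comparison} to \eqref{eq:d1-d2-Thh-estimate}. This fixes $(\Kk_{\ell\leq1},\Kkd_{\ell\leq1})$ with $r^3|(\Kk_{\ell\leq1},\Kkd_{\ell\leq1})|\lesssim\eps_0+\eps^2$, since the nonlinear terms $\Ga_1\cdot\Ga_2$ are $O(\eps^2 r^{-3})$. It also gives $\|\ah\Thh\|_{\H^{s+1}}\lesssim\eps_0+\eps^2$. Dividing by $\ah=1+O(\eps)$ yields the bound for $\Thh$, and the improved bound on $\Kk_{\ell\leq1}$ closes Step 2 consistently.

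We expect the main obstacle to be the $\trt$ transport in Step 2: only the borderline source $\laph^{-1}\KK/r$ is present, and the argument relies on the margin $\la=2$ together with the choice of an intermediate weight. A secondary point is ensuring that in Step 1 the $\ell=1$ corrections absorb $\thc_{\ell=1}$ without any loss in $r$.
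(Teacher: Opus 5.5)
Your proposal is correct and follows essentially the paper's argument: comparison-lemma and Hodge estimates for $\thh$ (with $(\Bb_{\ell\leq 1},\Bbd_{\ell\leq 1})\approx -r^{-2}(\thc_{\ell=1},0)$), $\Thh$ and $\Pi+\frac12\trt$, the $\la=2$ transport equation for $\trt$ obtained by substituting $\Pi$, and the $\la=4$ transport equation for $\curlh\Xi$ combined with $\divh\Xi=0$ and the $\d_1$-Hodge estimate. You additionally bound $Y$ via $\d_1 Y$, which the paper leaves implicit. Two minor corrections are worth making.

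First, the intermediate weight $1<\la'<2$ is not needed. The source $2r^{-1}\laph^{-1}(\KK+\Kk_{\ell\leq 1})$ is $O((\eps_0+\eps^2)r^{-2})$ pointwise, so the plain weight $r^2$ already gives $|\trt|\lesssim(\eps_0+\eps^2)r^{-1}$, exactly as in the paper. Your claim that this fails comes from inserting the $L^2_\ga(S_r)$ size $\eps_0 r^{-1}$ into a pointwise integration. Second, the improved bound on $\Kk_{\ell\leq 1}$ from your Step 3 must be fed into the $\trt$ integration. Since Step 3 uses only bootstrap bounds, it is cleanest to do it first, as the paper does. With merely $r^3|\Kk_{\ell\leq 1}|\leq\eps$, the $\trt$ bound would be $\eps_0+C\eps$ and would not improve the bootstrap.
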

\begin{proof}
We proceed in steps as follows: 
\begin{enumerate}
\item 
To obtain the estimate for $\Bb_{\ell\leq 1},\Bbd_{\ell\leq 1}$, we apply Lemma \ref{lem:comparison} to \eqref{eq:unconditional-Codazzi-1-2-estimate}. This yields 
\beaa
r^2 |\Bb_{\ell\leq 1},\Bbd_{\ell\leq 1} |\lesssim r|\trth_{\ell=1}|+ \eps^2 \lesssim \eps_0 + \eps^2,
\eeaa
where we also used the improved bound for $\trth$ obtained in \eqref{eq:improved-estimate-thc-Kc}. Then, we use the Hodge estimate in Lemma \ref{lem:comparison} to infer 
\beaa
\|\thh\|_{\H^{s+1}(S_r)} \lesssim \|\thc \|_{\H^{s+1}(S_r)}+ \eps^2 \lesssim \eps_0 + \eps^2.
\eeaa
\item Applying the same argument to \eqref{eq:d1-d2-Thh-estimate}, we obtain 
\beaa
|\Kk_{\ell\leq 1},\Kkd_{\ell\leq 1}|\lesssim \eps_0+ \eps^2 ,\quad \|\Thh\|_{\H^{s+1}(S_r)}\lesssim \eps_0+ \eps^2.
\eeaa
\item Applying elliptic estimates to \eqref{eq:estimate-Pi}-\eqref{eq:average-Pi-estimate}, we obtain 
\bea\lab{eq:improved-bound-Pi-trt-sum}
\|\Pi+\frac 12 \trt\|_{\H^{s+1}(S_r)}\lesssim  \eps_0+ \eps^2.
\eea
\item Using \eqref{eq:nab-hot-b-estimate}
-\eqref{eq:curl-b-ell=1} and the improved estimate for $\thh$, $\thc$ and $\ao$, we  obtain the improved estimate for $b$: 
\beaa
r^{-1}\| b \|_{\H^{s+2}(S_r)} \lesssim \| \thh \|_{\H^{s+1}(S_r)}+|r\thc_{\ell=1},\ao_{\ell=1}| + \eps^2 \lesssim \eps_0 + \eps^2.
\eeaa
\item We write 
\eqref{eq:N-trt-estimate} as
\beaa
N\trt = -2r^{-1} \trt +2r^{-1} (\Pi+\frac 12 \trt) + \Ga_1\cdot \Ga_1.
\eeaa
Using the established bound for $\Pi+\frac 12 \trt$ in \eqref{eq:improved-bound-Pi-trt-sum}, we integrate using Lemma \ref{lem:integration-lemma-1} to obtain
\beaa
\|\trt \|_{\H^{s+1}(S_r)}\lesssim r^2 \| \KK+\Kk_{\ell\leq 1}\|_{\H^{s-1}(S_r)} \lesssim \eps_0+ \eps^2.
\eeaa
Then, combining this with \eqref{eq:improved-bound-Pi-trt-sum}, we also deduce the same bound for $\| \Pi \|_{\H^{s+1}(S_r)}$.
\item We integrate  \eqref{eq:curl-Xi-estimate} using Lemma \ref{lem:integration-lemma-1} to bound  $\curlh \Xi$:
\beaa
\| r^3 \curlh \Xi \|_{\H^s(S_r)}\lesssim \eps_0+ \int_{r_0}^r r'^3 \| \KKd\|_{\H^s(S_r)} \lesssim \eps_0 r^2.
\eeaa
 Then, along with the condition $\divh \Xi=0$, we obtain the estimate 
$ \|\Xi \|_{\H^{s+1}(S_r)} \lesssim \eps_0$.
\end{enumerate}
\end{proof}
Therefore, we conclude that $\| \Psi \|_{s,\ga,I} \leq C(\eps_0+ \eps^2 ) \ll \frac 12\eps$, and hence deduce $r^*=\infty$.

\appendix
\section{Technical Lemmas}
\subsection{Basic identities on conformal changes}
\begin{lemma}\lab{lem:div-curl-hot-identities}
Let $\ga=e^{2u}\gz$ on a $2$-sphere, and suppose $\xi\in \ss_1$, $h \in \ss_2$, and $v$ is a vector field. We have the relations between operators defined with respect to $\ga$ and $\gz$:
\begin{align}
\divh \xi=e^{-2u} \divz\xi,\quad \curlh\xi=e^{-2u} \curlz\xi,\quad 
\nabh\hot v^{\flat,\ga} = e^{2u} \nabz\hot v^{\flat,0},\quad \divh h=e^{-2u} \divz h.
\end{align}
\end{lemma}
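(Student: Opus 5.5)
This is a direct computation, done in a local chart $(\vth^A)$ and using the standard two-dimensional conformal-change formulas. For $\ga = e^{2u}\gz$ we have $\sqrt{\det\ga} = e^{2u}\sqrt{\det\gz}$ and $\ga^{AB} = e^{-2u}(\gz)^{AB}$. The orthonormal-frame area form satisfies $\in^{AB}_\ga = e^{-2u}\in^{AB}_0$ in coordinates, as contravariant components, because $\in_{AB}$ scales by $e^{2u}$ and raising both indices contributes $e^{-4u}$. The Christoffel symbols change by
\[
\Ga^C_{AB}(\ga)-\Ga^C_{AB}(\gz)=\de^C_A\pa_B u+\de^C_B\pa_A u-(\gz)_{AB}(\gz)^{CD}\pa_D u .
\]
The plan is to insert this into each of the four operators and check that the $\pa u$ terms cancel.

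\textbf{Divergence and curl of a $1$-form.} For the divergence we write $\divh\xi = \ga^{AB}\nabh_A\xi_B$. Contracting the Christoffel difference with $\ga^{AB}$ gives $e^{-2u}(\gz)^{AB}(\de^C_A\pa_B u+\de^C_B\pa_A u-(\gz)_{AB}(\gz)^{CD}\pa_D u)\xi_C$. This equals $e^{-2u}(2-2)\pa^C u\,\xi_C=0$, using that the dimension is $2$. Hence $\divh\xi=e^{-2u}\divz\xi$. An equivalent route is $\divh\xi=(\det\ga)^{-1/2}\pa_A(\sqrt{\det\ga}\,\ga^{AB}\xi_B)$, where the conformal weights $e^{2u}\cdot e^{-2u}$ cancel inside the derivative. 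For the curl, $\curlh\xi=\in^{AB}\pa_A\xi_B$ because the Christoffel terms are symmetric in $A,B$ and drop out. The prefactor in $\in^{AB}$ then gives $e^{-2u}$.

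\textbf{Conformal Killing operator.} Write $v^{\flat,\ga}_A=e^{2u}v^{\flat,0}_A$. Then $\nabh_A v^{\flat,\ga}_B+\nabh_B v^{\flat,\ga}_A = (\mathcal L_v\ga)_{AB}$. The Lie derivative formula gives $\mathcal L_v(e^{2u}\gz)=e^{2u}\mathcal L_v\gz+2v(u)e^{2u}\gz$, which matches the expression already used in the proof of Proposition \ref{prop:b-flat}. The last term is pure trace, so it disappears when we remove the trace with respect to $\ga$; the traceless parts with respect to $\ga$ and $\gz$ agree because the two metrics are conformal. Therefore $\nabh\hot v^{\flat,\ga}=e^{2u}(\mathcal L_v\gz)^{\mathrm{tf}}=e^{2u}\nabz\hot v^{\flat,0}$.

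\textbf{Divergence of a symmetric traceless $2$-tensor.} This is the only case with real cancellations. The plan is to compute $(\divh h)_B=\ga^{AC}\nabh_A h_{CB}$, where $h$ has fixed covariant components. The Christoffel difference produces two groups of terms. The first is $-\ga^{AC}(\de^D_A\pa_C u+\de^D_C\pa_A u-(\gz)_{AC}\pa^D u)h_{DB}$. The second is $-\ga^{AC}(\de^D_A\pa_B u+\de^D_B\pa_A u-(\gz)_{AB}\pa^D u)h_{CD}$. Using $\tr h=0$ and dimension $2$, the first group becomes $-e^{-2u}(1+1-2)\pa^D u\,h_{DB}=0$. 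In the second group, the middle term gives $-e^{-2u}\pa^A u\, h_{AB}$ and the last gives $+e^{-2u}\pa^D u\,h_{BD}$; these cancel each other, and the first term vanishes by tracelessness. Hence $\divh h=e^{-2u}\divz h$ as $1$-forms.

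The main risk is bookkeeping for the last identity, specifically which index placement (covariant versus frame) is meant. I would check conventions against the paper's use in Proposition \ref{prop:b-flat}, where everything is taken as coordinate covariant tensors with the horizontal operators acting on $e^{2u}b^{\flat,0}$.
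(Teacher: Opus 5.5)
Your proposal is correct and follows essentially the same route as the paper: insert the conformal Christoffel difference $\delta_A^C\nabz_B u+\delta_B^C\nabz_A u-(\gz)_{AB}(\nabz)^C u$ into each operator and check that all $\nabz u$ terms cancel after contraction, using the symmetry of the connection difference for the curl. The only minor variation is the identity for $\nabh\hot v^{\flat,\ga}$, where you use $\mathcal L_v(e^{2u}\gz)=e^{2u}\mathcal L_v\gz+2v(u)e^{2u}\gz$ together with conformal invariance of the traceless part. The paper instead first derives $\nabh\hot\xi=\nabz\hot\xi-2\,\nabz u\hot\xi$ and then substitutes $\xi=e^{2u}v^{\flat,0}$; the two arguments are equivalent.
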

\begin{proof}

The Christoffel symbol of $\ga$ relative to $\gz$ reads
\beaa
\Gamma(\ga;\gz)_{AB}{}^C=
\delta^{}_A{}^C \nabz_B u+\delta^{}_B{}^C \nabz_A u -\gz_{AB}(\nabz)^C u.
\eeaa
Hence
\beaa
\nabh_A\xi_B=
\nabz_A\xi_B
-(\nabz_A u) \xi_B- (\nabz_B u) \xi_A+\gz_{AB} \nabz u\cdot \xi.
\eeaa
Therefore,
\beaa
\divh \xi= e^{-2u}(\gz)^{AB}
\left(
\nabz_A\xi_B
-(\nabz_A u) \xi_B-(\nabz_B u) \xi_A+\gz_{AB} \nabz u\cdot \xi
\right)
= e^{-2u}\divz \xi.
\eeaa
For the curl, note that
\beaa
\curlh \xi=\ga^{AC}\ga^{BD}\in_{CD}\nabh_A\xi_B.
\eeaa
Since $\in_{CD}=e^{2u}\in\0_{CD}$ and
$\ga^{AC}\ga^{BD}=e^{-4u}(\gz)^{AC} (\gz)^{BD}$, we get
\beaa
\curlh \xi=
 e^{-2u}(\gz)^{AC} (\gz)^{BD}\in\0_{CD}\nabh_A\xi_B=e^{-2u} \curlz\xi.
\eeaa
where we use that $\nabh_A\xi_B-\nabz_A\xi_B$ is symmetric in $A$ and $B$.

We also have
\beaa
(\nabh\widehat\otimes\xi)_{AB}
&=&
\nabh_A\xi_B+\nabh_B\xi_A-\ga_{AB}\divh\xi \\
&=& \nabz_A\xi_B+\nabz_B\xi_A-e^{2u} \gz_{AB} e^{-2u} \divz\xi \\
&& -2(\nabz_A u) \xi_B-2(\nabz_B u)\xi_A+2\gz_{AB} \nabz u\cdot \xi \\
&=&
(\nabz\widehat\otimes\xi)_{AB}
-2(\nabz u\widehat\otimes\xi)_{AB}.
\eeaa
Then, for a vector field $v$, we can compute 
\beaa
\nabh\hot v^{\flat,\ga} &=& \nabz\hot v^{\flat,\ga}-2(\nabz u\hot v^{\flat,\ga}) \\
&=& \nabz\hot (e^{2u} v^{\flat,0})-2e^{2u} (\nabz u\hot v^{\flat,0}) \\
&=& e^{2u} \nabz\hot v^{\flat,0},
\eeaa
where we note that the symmetric traceless product of two $1$-forms is invariant under conformal changes of the metric.

Now let $h$ be a symmetric traceless covariant $2$-tensor. 
Using
\beaa
\nabh_B h_{CA}
=
\nabz_B h_{CA}
-\Ga(\ga;\gz)^{}_{BC}{}^D h_{DA}
-\Ga(\ga;\gz)^{}_{BA}{}^D h_{CD},
\eeaa
we obtain
\beaa
\nabh_B h_{CA}
&=&
\nabz_B h_{CA}
-(\nabz_B u) h_{CA}- (\nabz_C u) h_{BA}
+ \gz_{BC} (\nabz u \cdot h)_{A}\\
&&
-(\nabz_B u) h_{CA}- (\nabz_A u) h_{CB}
+\gz_{BA} (\nabz u\cdot h)_{C}.
\eeaa
Contracting with $\ga^{BC}=e^{-2u}(\gz)^{BC}$, we see that the connection terms again all cancel, which gives
\beaa
\divh h=e^{-2u}\divz h.
\eeaa
\end{proof}

\subsection{Comparison lemma}
\begin{lemma}\lab{lem:comparison}
Consider a sphere $(S,\ga)$, equipped with a spherical coordinate $(\vth^A)$ and the corresponding $\ell=1$ basis $J_p$, with $\ga=e^{2u} \gz=r^2 e^{2u} (d(\vth^1)^2+\sin^2 (\vth^1) d(\vth^2)^2)$ for some $r>0$, where $u$ satisfies the estimate  
\beaa
r^{-1} \| u \|_{\H^{s+1}_0(S)}\leq \mathring \eps \ll 1,
\eeaa
for some $s\geq 3$.
Then the following statement holds:
\begin{enumerate}
\item
For the equation
\bea\lab{eq:2-th-order-equation-h}
\d_1\d_2 h=\sum_{p=0,+,-} (F+c_p J_p+c_0,\dual F+\dual c_p J_p+\dual c_0),
\eea
there exist unique constants $c_0$, $\dual c_0$, $c_p$, $\dual c_p$ for which \eqref{eq:2-th-order-equation-h}  has a unique solution $h\in\ss_2(S,\ga)$.
Moreover, the constants satisfy the estimate
\begin{equation}\lab{eq:estimate-cor-solvability}
\begin{gathered}
 |(c_0,\dual c_0)| \lesssim |\overline{(F,\dual F)}^\ga| ,\\
 |c_p+ \langle F,J_p\rangle_{r^{-2}\gz} |+|\dual c_p+\langle \dual F,J_p\rangle_{r^{-2}\gz}| \lesssim \mathring \eps r^{-1} \|(F,\dual F) \|_{L_\ga^2(S)}.
\end{gathered}
\end{equation}
\item Suppose for scalar field $\phi$, $\xi\in \ss_1$, and $h\in \ss_2(S,\ga)$ we have
\begin{equation*}
\laph_\ga \phi=G,\quad \d_1 \xi=(f,\dual f),\quad \d_2 h=F.
\end{equation*}
Then the following estimates hold for all $i\leq s$:
\begin{equation}\lab{eq:Hodge-estimate-round}
\|\phi-\overline{\phi}^\ga\|_{\H_\ga^{i+2}(S_r)}\lesssim r^2 \| G\|_{\H_\ga^i(S_r)},\quad  \| \xi \|_{\H_\ga^{i+1}(S_r)}\lesssim r \|(f,\dual f) \|_{\H_\ga^i(S_r)},\quad \| h \|_{\H_\ga^{i+1}(S_r)} \lesssim r \| F \|_{\H^i(S_r)}.
\end{equation}
The same estimates hold if all $\H_\ga^s$ norms \eqref{eq:Hodge-estimate-round} are replaced by $\H^s_0$ norms.
\item For $\RR_{ellip}(\phi)$ defined in Definition \ref{def:error-terms}, we have
\beaa
\| \RR_{ellip}(\phi) \|_{\H_\ga^s(S_r)} \lesssim \mathring\eps \|\phi\|_{\H_\ga^s(S_r)} .
\eeaa
\end{enumerate}
\end{lemma}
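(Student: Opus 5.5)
The plan is to reduce every statement to the corresponding fact on the round sphere $(S,\gz)$ with $\gz=r^2\gacr$, and then to treat the conformal factor $e^{2u}$ as a small perturbation. By Lemma \ref{lem:div-curl-hot-identities} we have $\divh\xi=e^{-2u}\divz\xi$, $\curlh\xi=e^{-2u}\curlz\xi$ and $\divh h=e^{-2u}\divz h$. Hence
\[
\d_1\d_2 h=e^{-2u}\,\d_1^{(0)}\bigl(e^{-2u}\d_2^{(0)}h\bigr),
\]
where $h$ is traceless for $\ga$ exactly when it is traceless for $\gz$. Similarly, $\laph_\ga=e^{-2u}\lapz$.

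\textbf{Part 1.} The equation \eqref{eq:2-th-order-equation-h} is equivalent to
\[
\d_1^{(0)}\bigl(e^{-2u}\d_2^{(0)}h\bigr)=e^{2u}(\mathrm{RHS}).
\]
On the round sphere, $\d_2^{(0)}\colon\ss_2\to\ss_1$ is injective with cokernel the conformal Killing fields, i.e. the span of $\nabz J_p$ and $\dual\nabz J_p$. Also, $\d_1^{(0)}\colon\ss_1\to\ss_0$ has range equal to the mean-zero pairs.

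The solvability condition for the intermediate one-form $\xi=e^{-2u}\d_2^{(0)}h$ is therefore:
\begin{enumerate}
\item $e^{2u}(\mathrm{RHS})$ has zero $\gz$-mean, which fixes $c_0,\dual c_0$ in terms of $\overline{(F,\dual F)}^\ga$;
\item the unique $\xi$ with $\d_1^{(0)}\xi=e^{2u}(\mathrm{RHS})$ satisfies that $e^{2u}\xi$ is $L^2$-orthogonal to the six conformal Killing fields.
\end{enumerate}
Integrating by parts, condition 2 becomes a $6\times6$ linear system in $(c_p,\dual c_p)$. Its matrix is $\mathrm{Id}+O(\mathring\eps)$, because at $u=0$ the condition reads $\langle F,J_p\rangle+c_p\|J_p\|^2=0$. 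The system is thus uniquely solvable by a Neumann series, and this gives \eqref{eq:estimate-cor-solvability}. The perturbation bounds use $\|u\|_{L^\infty}+\|r\nabz u\|_{L^\infty}\lesssim\mathring\eps$ from Sobolev embedding, valid since $s\ge 3$. Uniqueness of $h$ follows from injectivity of $\d_2^{(0)}$.

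\textbf{Part 2.} I would prove the estimates in the $\H^s_0$ norms first, from the standard round-sphere Hodge estimates (Bochner identities $\int|\nabz\xi|^2+r^{-2}|\xi|^2=\int|\d_1^{(0)}\xi|^2$, and similarly for $\d_2$ and the Laplacian), together with the commutation of $r\nabz$ with the operators. The conformal factors are then handled by the product estimates, with the $u$-terms absorbed since $\mathring\eps\ll1$. The $\H_\ga$ versions follow by norm equivalence: $\|U\|_{\H^i_\ga}\simeq\|U\|_{\H^i_0}$ for $i\le s$, up to factors $1+O(\mathring\eps)$, because the Christoffel difference is $\nabz u$. This requires $u\in\H^{s+1}_0$, which is exactly the assumed regularity.

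\textbf{Part 3.} Write $\laph_\ga^{-1}=\lapz^{-1}e^{2u}$ on mean-zero functions. The projections $P_{\ell}$ are defined with respect to $\gacr$ and commute with $\lapz^{-1}$. Each $\RR_{ellip}$ expression is therefore a commutator
\[
r^{-2}\bigl[P_\ell,\lapz^{-1}(e^{2u}-1)\,\cdot\,\bigr],
\]
plus the $\ell=0$ term, which vanishes up to the difference between the $\ga$-mean and the $\gz$-mean. Both are $O(\mathring\eps)$ by Part 2 and the product estimates.

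\textbf{Main obstacle.} I expect the main obstacle to be the Part 1 bookkeeping: proving that the $6\times6$ matrix is a small perturbation of the identity uniformly in $r$, and that the constants enter with exactly the normalization $\langle F,J_p\rangle_{r^{-2}\gz}$. The rest is routine perturbation of round-sphere Hodge theory.
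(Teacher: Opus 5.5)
Your proposal is correct in its main lines, but it argues differently from the paper, whose proof is essentially a citation. There, Part 1 is Corollary 3.7 of \cite{CK25}; Part 2 is taken as standard Hodge theory on $(S,\ga)$ and then transferred to the $\H_0$ norms by the norm equivalence of Lemma 3.2 of \cite{CK25}; and Part 3 is a direct check. You instead use the exact conformal form of $\ga$. The factorization $\d_1\d_2 h=e^{-2u}\d_1^{(0)}\bigl(e^{-2u}\d_2^{(0)}h\bigr)$ and $\laph_\ga=e^{-2u}\lapz$ (this is the correct relation; the paper's proof writes $e^{2u}$) reduce everything to round-sphere Fredholm theory with multiplicative weights. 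Solvability in Part 1 then becomes an explicit finite linear system, an $O(\mathring\eps)$ perturbation of the $u=0$ system, and Part 2 runs in the opposite direction ($\H_0$ first, then $\H_\ga$). Your route is self-contained and makes the normalization $c_p\approx-\langle F,J_p\rangle_{r^{-2}\gz}$ transparent. The cited route is more general, since \cite{CK25} does not need the metric to be conformally round in the given chart. In Part 2, note that you need the norm equivalence at order $s+2$ for scalars and $s+1$ for tensors, not only for $i\le s$. This still uses only $u\in\H^{s+1}_0$, because the Christoffel difference is $O(\nabz u)$ and on scalars it first appears at second order.

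One conclusion you draw in Part 1 is wrong, although this reflects a defect of the statement rather than of your method. The mean-zero condition reads $\int_S\bigl(F+\sum_p c_pJ_p+c_0\bigr)\,d\vol_\ga=0$, that is, $c_0=-\overline{F}^{\ga}-\sum_p c_p\overline{J_p}^{\ga}$. So $c_0$ is not fixed by $\overline{F}^{\ga}$ alone, and your linear system only gives $|(c_0,\dual c_0)+\overline{(F,\dual F)}^{\ga}|\lesssim\mathring\eps\, r^{-1}\|(F,\dual F)\|_{L^2_\ga(S)}$. The first bound in \eqref{eq:estimate-cor-solvability} is in fact false unless $\overline{J_p}^{\ga}=0$. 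Take $F=J_p-\overline{J_p}^{\ga}$ and $\dual F=0$. Then $h=0$, $c_p=-1$, $c_0=\overline{J_p}^{\ga}$, with all other constants zero, solves the equation, so by uniqueness these are the constants, while $\overline{F}^{\ga}=0$. The bound does hold under the barycenter condition $\int_{\mathbb{S}^2}e^{2u}x^i=0$, which is exactly $\overline{J_p}^{\ga}=0$ and which the paper's spheres satisfy (up to quadratic errors along the iteration). In any case, the weaker bound above suffices wherever the lemma is used, since the extra term is quadratically small there. You should either add that hypothesis or state the weakened estimate, rather than claim \eqref{eq:estimate-cor-solvability} as written.
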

\begin{proof}
The first statement is Corollary 3.7 in \cite{CK25}. For the second statement, the estimates in \eqref{eq:Hodge-estimate-round} are standard, and the $\H^s_0$ version is obtained using the equivalency of the norms; see Lemma 3.2 of \cite{CK25}.  The third statement follows from a straightforward verification using $\laph=e^{2u}\lapz$.
\end{proof}

\subsection{The commutation formula}

We first recall the commutation formula expressed in geometric quantities defined in Section \ref{sec:geometric-quantities}.
\begin{lemma}\lab{lem:commutation-1}
We have 
\bea\lab{eq:commutation-nab-N-nab-a}
[\nabh_N, \nabh_a] U_{b_1\cdots b_k} = -\th_{ac} \nabh_c U_{b_1\cdots b_k} -\pp_a \nabh_N U_{b_1\cdots b_k} +(p_{b_i} \th_{ac} +p_c \th_{ab_i}+\in_{b_i c} \dual Y_a) U_{b_1\cdots c\cdots b_k}.
\eea
\end{lemma}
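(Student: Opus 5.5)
We prove the commutation formula \eqref{eq:commutation-nab-N-nab-a} by a direct frame computation. We express $\nabh_N$ and $\nabh_a$ through the ambient connection $\nab$ and the Ricci coefficients \eqref{eq:Ricci-N}, and then identify the resulting curvature term through $Y$.

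By tensoriality and the Leibniz rule, both sides are derivations acting on the horizontal indices. It therefore suffices to treat a horizontal $1$-form $U_b$; the general case follows by summing the index-by-index contributions over $b_1,\dots,b_k$. We use the frame expressions
\[
\nabh_a U_b = e_a(U_b)-U(\nabh_{e_a}e_b), \qquad \nabh_N U_b = N(U_b)-U(\nabh_N e_b),
\]
where $\nabh_N e_b$ is the projection of $\nab_N e_b$ to $TS_r$. The definitions \eqref{eq:Ricci-N} give
\[
\nab_a e_b = \nabh_a e_b - \th_{ab}N,\qquad \nab_N e_b = \nabh_N e_b - \pp_b N,\qquad \nab_a N=\th_{ac}e_c,\qquad \nab_N N = \pp_c e_c .
\]
The cleanest route is to extend $U$ to a $\Si$-tensor by $U(N)=0$. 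Then $\nabh_a U_b=(\nab_a U)_{b}$ restricted to horizontal slots, corrected by the $\th$-term coming from $\nab_a e_b$. Similarly, $\nabh_N U_b=(\nab_N U)_b$ with a correction from the normal component of $\nab_N e_b$.

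We then compute $\nabh_N\nabh_a U_b-\nabh_a\nabh_N U_b$ and compare it with $\nab^2_{N,a}U_b-\nab^2_{a,N}U_b=-R(N,e_a)U$ applied to $e_b$, i.e.\ $-R_{c b N a}U_c$ up to sign conventions. Two kinds of correction terms appear. The torsion-type terms come from $[N,e_a]=\nab_N e_a-\nab_a N$. Its horizontal part gives $-\th_{ac}\nabh_c U$, and its normal part $\pp_a N$ gives $-\pp_a\nabh_N U$. The second-fundamental-form terms come from differentiating the condition $U(N)=0$, namely $(\nab_a U)(N)=-U(\nab_a N)=-\th_{ac}U_c$ and $(\nab_N U)(N)=-\pp_cU_c$. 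Substituting these produces $\pp_b\th_{ac}U_c+\pp_c\th_{ab}U_c$. Finally, the spacetime curvature component $R(N,e_a,e_b,e_c)$ on $\Si$ is antisymmetric in $(b,c)$. In dimension two such a component equals $\in_{bc}$ times a $1$-form, and contracting gives exactly $\dual Y_a$ with $Y_a=R(N,e_b,e_b,e_a)$ as in \eqref{eq:def-curvature-components-Si}.

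We expect the main obstacle to be bookkeeping of signs and index placement. This concerns the orientation convention for $\in$ and the dual $\dual Y_a=\in_{ab}Y_b$, as well as the sign convention for $R$. We check the final curvature term against the known case $k=1$ in \cite{CK25}. Alternatively, we check it by taking the trace $\de^{ab}$ of the formula applied to $\Xi$: this must reproduce the commutator terms used in deriving \eqref{eq:N-div-Xi}.
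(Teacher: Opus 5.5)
Your route is the standard derivation and is sound in outline: extend $U$ by $U(N)=0$, compare with the ambient Ricci identity, and identify the curvature term via $R(N,e_a,e_b,e_c)=\in_{bc}\dual Y_a$. The paper itself only cites Lemma 3.4 of \cite{CK25}. The problem is the decisive substitution step. You assert its output rather than computing it, and the asserted output is wrong. Use $(\nab_aU)(N)=-\th_{ac}U_c$, $(\nab_NU)(N)=-\pp_cU_c$, and the normal components $g(\nab_Ne_b,N)=-\pp_b$, $g(\nab_ae_b,N)=-\th_{ab}$. This gives
\[
\nab^2_{N,a}U_b=\nabh_N\nabh_aU_b+\pp_a\nabh_NU_b-\pp_b\th_{ac}U_c,\qquad \nab^2_{a,N}U_b=\nabh_a\nabh_NU_b-\th_{ac}\nabh_cU_b-\th_{ab}\pp_cU_c,
\]
and therefore
\[
[\nabh_N,\nabh_a]U_b=-\th_{ac}\nabh_cU_b-\pp_a\nabh_NU_b+(\pp_b\th_{ac}-\pp_c\th_{ab})U_c+\big(\nab^2_{N,a}U-\nab^2_{a,N}U\big)_b .
\]
Both correction terms carry a minus sign inside their own second derivative. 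However, $\nab^2_{N,a}$ and $\nab^2_{a,N}$ enter the commutator with opposite signs, so the combination is $\pp_b\th_{ac}-\pp_c\th_{ab}$, not $\pp_b\th_{ac}+\pp_c\th_{ab}$. Relatedly, the normal part of $[N,e_a]$ is $-\pp_aN$, not $\pp_aN$, although your resulting term $-\pp_a\nabh_NU$ is correct.

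This is not a matter of conventions: the displayed formula cannot hold as printed. Both $\nabh_N$ and $\nabh_a$ annihilate $\ga$, and on scalars $[\nabh_N,\nabh_a]f=-\th_{ac}\nabh_cf-\pp_a\nabh_Nf$ has no zeroth-order term. Apply this to $f=U_bU_b$ and use the Leibniz rule. This forces $U_bC_{abc}U_c=0$ for the zeroth-order coefficient $C_{abc}$, so $C_{abc}$ must be antisymmetric in $(b_i,c)$. The term $\in_{b_ic}\dual Y_a$ is antisymmetric, but $\pp_{b_i}\th_{ac}+\pp_c\th_{ab_i}$ is symmetric. Indeed $U_b(\pp_b\th_{ac}+\pp_c\th_{ab})U_c=2(\pp\cdot U)\,\th_{ac}U_c\neq 0$ in general, and no sign convention for $\pp$, $\th$ or $R$ repairs this. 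So the statement contains a misprint. The correct coefficient is $\pp_{b_i}\th_{ac}-\pp_c\th_{ab_i}+\in_{b_ic}\dual Y_a$, with only the sign of the curvature term depending on convention. Your proposed checks would not catch this, since tracing against \eqref{eq:N-div-Xi} sees these terms only through the schematic $\Ga_1\cdot\Ga_2$. The misprint is harmless for the paper, which uses the lemma only schematically. Your write-up should nonetheless derive the antisymmetric combination rather than claim agreement with the printed one.
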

\begin{proof}
See Lemma 3.4 in \cite{CK25}.
\end{proof}

We now consider the metric
\bea\lab{eq:widetilde-g}
\widetilde g:= \widetilde a^2 dr^2+ r^2 e^{2 \widetilde u} \gacr_{AB} (d\vth^A +\widetilde b^A dr) (d\vth^B+\widetilde b^B dr).
\eea
We also denote the spherical part of $\widetilde g$ by $\widetilde \ga$. 
As in the main body, we view $\widetilde b$ as a $1$-form obtained by lowering the index of the vector field $\widetilde b^A$ using $\gz$.
Therefore, for the radial normal $\widetilde N$, we have
\bea\lab{eq:expression-N-n-b-n}
\nabz_{\widetilde N} = \widetilde a^{-1} (\nabz_{\pa_r} - \,\widetilde b \cdot \nabz).
\eea
where $\cdot$ means the dot product with respect to $\gz$. We also denote $\widecheck{\widetilde a}:=\widetilde a -1$.

\begin{lemma}\lab{lem:commutation-2}
We have
\begin{equation}\lab{eq:nabz-N-nabz}
[\nabz_{\widetilde N}, \nabz] U =  -\widetilde a^{-1} r^{-1} \nabz U +\nabz (\widetilde a, 
\widetilde b) \cdot (\nabz_{\widetilde N} ,\nabz) U- r^{-2} \widetilde a^{-1} \widetilde b\, U.
\end{equation}
\end{lemma}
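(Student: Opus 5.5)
The plan is a direct computation in coordinates, viewing $\nabz$ as the Levi-Civita connection of the product-type reference metric $dr^2+\gz$, with $\gz=r^2\gacr$. Its horizontal part is the connection of $\gz$ on each $S_r$. Its radial part $\nabz_{\pa_r}$, acting on $S_r$-tangent covariant tensors, is the projected covariant derivative
\[
\nabz_{\pa_r}U_{A_1\cdots A_k}=\pa_r U_{A_1\cdots A_k}-k\,r^{-1}U_{A_1\cdots A_k}.
\]
I expect this convention is the one implicit in \eqref{eq:expression-N-n-b-n}, and it is what produces the $-r^{-1}$ term. Since the Christoffel symbols of $r^2\gacr$ coincide with those of $\gacr$ and are independent of $r$, we get $[\pa_r,\nabz_A]=0$ at the level of components. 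The shift in index count then gives the basic identity
\[
[\nabz_{\pa_r},\nabz]U=-r^{-1}\nabz U,
\]
because $\nabz U$ carries one more covariant index than $U$.

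Next I expand $\nabz_{\widetilde N}=\widetilde a^{-1}(\nabz_{\pa_r}-\widetilde b\cdot\nabz)$ and compute $[\nabz_{\widetilde N},\nabz_A]U$ term by term using the Leibniz rule.

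\emph{Radial term.} $\widetilde a^{-1}[\nabz_{\pa_r},\nabz_A]U=-\widetilde a^{-1}r^{-1}\nabz_AU$, which is the first term in \eqref{eq:nabz-N-nabz}.

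\emph{Lapse factor.} Moving $\nabz_A$ past $\widetilde a^{-1}$ produces $(\nabz_A\widetilde a)\,\widetilde a^{-1}\nabz_{\widetilde N}U$. This is of the schematic form $\nabz\widetilde a\cdot\nabz_{\widetilde N}U$.

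\emph{Shift factor.} Moving $\nabz_A$ past the coefficient of $\widetilde b^B\nabz_B$ gives $-\widetilde a^{-1}(\nabz_A\widetilde b^B)\nabz_BU$, of the form $\nabz\widetilde b\cdot\nabz U$. What remains is $-\widetilde a^{-1}\widetilde b^B[\nabz_B,\nabz_A]U$. Since $\gz$ is round with Gauss curvature $r^{-2}$, its Riemann tensor is $r^{-2}(\gz_{AC}\gz_{BD}-\gz_{AD}\gz_{BC})$. The commutator is therefore a sum of contractions of $r^{-2}\widetilde b$ with $U$, giving the term $-r^{-2}\widetilde a^{-1}\widetilde b\,U$.

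Two bookkeeping points need care. First, the commutator with $\pa_r$ must be computed with the metric $\gz$ that is used for raising indices. Because $\widetilde b$ is regarded as a $1$-form lowered with $\gz$, the contraction $\widetilde b\cdot\nabz$ also carries an $r$-dependent factor $r^{-2}$. Its $\pa_r$-derivative should combine with the index count so that the net radial contribution is exactly $-r^{-1}\nabz U$, with no stray $\widetilde b$-terms. Second, the signs and contractions in the curvature term must be consistent with the schematic notation. The main obstacle is the first point: fixing the convention for $\nabz_{\pa_r}$ precisely. This is where any discrepancy in the coefficient of the $r^{-1}$ term would arise. I would first check the case of a scalar $U$, where $[\nabz_{\pa_r},\nabz]f=-r^{-1}\nabz f$ should hold directly, and then the case of a $1$-form, before writing out the general tensor case.
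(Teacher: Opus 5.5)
Your proposal is correct and is essentially the paper's proof. You expand $\nabz_{\widetilde N}=\widetilde a^{-1}(\nabz_{\pa_r}-\widetilde b\cdot\nabz)$, use $[\nabz_{\pa_r},\nabz]U=-r^{-1}\nabz U$ (which you derive rather than quote), and use $[\nabz,\nabz]U\sim r^{-2}U$ for the shift term. The only difference is cosmetic: factoring $\widetilde a^{-1}$ out of the whole operator gives $\widetilde a^{-1}\nabz\widetilde a\cdot\nabz_{\widetilde N}U$ directly, while the paper obtains $\widetilde a^{-2}\nabz\widetilde a\cdot\nabz_{\pa_r}U$ and then re-substitutes for $\nabz_{\pa_r}$.

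The bookkeeping issue you flag, about $\pa_r$ acting on the $\gz$-contraction with $\widetilde b$, never arises, because only angular derivatives of $\widetilde b$ enter the commutator. Also, the sign of your $\nabz\widetilde b\cdot\nabz U$ term is flipped, which does not matter for the schematic identity.
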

\begin{proof}
Using \eqref{eq:expression-N-n-b-n}, we compute, also noting the schematic relation $[\nabz,\nabz]U =r^{-2} U$,
\beaa
[\nabz_{\widetilde N}, \nabz] U &=& [\widetilde a^{-1}\nabz_{\pa_r}, \nabz] U-[\widetilde a^{-1} \,  \widetilde b \cdot \nabz,\nabz] U \\
&=& \widetilde a^{-1} [\nabz_{\pa_r}, \nabz] U +  \widetilde a^{-2} \nabz \widetilde a \cdot \nabz_{\pa_r} U \\
&& +\nabz (\widetilde a^{-1}  \widetilde b)\cdot \nabz U- r^{-2} \widetilde a^{-1} \widetilde b\, U.
\eeaa
Note that $\nabz_{\pa_r}$ can be replaced using $\nabz_{\widetilde N}$ and $\nabz$. Combining the well-known formula, see e.g., equation (3.6) in \cite{CK25},
\beaa
[\nabz_{\pa_r}, \nabz_a] U_{b_1\cdots b_k} = -r^{-1} \nabz_a U_{b_1\cdots b_k},
\eeaa
we obtain the identity \eqref{eq:nabz-N-nabz} as required.
\end{proof}

We need a lemma regarding projecting the transport equations to $\ell\neq 1$ parts. 
\begin{lemma}\lab{lem:RR-transport} 
For $\RR_{tran}(\phi)$ defined in Definition \ref{def:error-terms} (with $N$ being $\widetilde N$ here\footnote{We reserve the tilded notation for quantities arising in the course of the iteration or for their limit.}), we have, schematically
\begin{align*}
r^{-1} \| \RR_{tran} (\phi) \|_{\H_0^s(S_r)} \lesssim r^{-1} \| \widecheck{\widetilde a} \|_{\H^s_0(S_r)}  \| N\phi \|_{L_0^\infty(S_r)}+r^{-1} \| \widetilde a\|_{\H_0^s(S_r)} \| \nabz \phi \cdot\widetilde b\|_{L_0^\infty(S_r)}.
\end{align*}
Here, the $\H_0^s$ norms can be replaced by $\H^s_\ga$, at the expense of allowing the implicit constant to contain a factor $(1+r^{-1} \| u\|_{\H_0^{s-1}(S_r)})$.
\end{lemma}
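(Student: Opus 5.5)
The plan is a direct computation of the commutator between $\widetilde N$ and the spherical-harmonic projections, carried out in the background round frame $\gz$. By Definition \ref{def:error-terms}, each $\RR_{tran}(\phi)$ has the form $N(P\phi)-P(N\phi)$, where $P$ is one of the projections onto $\ell=0$, $\ell=1$ or $\ell\geq 2$. These projections are taken with respect to the fixed measure of $r^{-2}\gz=\gacr$ and the fixed basis $\{1,J_0,J_\pm\}$. Neither depends on $u$, and both commute with $\pa_r$ acting on the coordinate components in $(\vth^A)$.

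The first step is to expand $N$ using \eqref{eq:expression-N-n-b-n}:
\[
N=\widetilde a^{-1}\big(\pa_r-\widetilde b\cdot\nabz\big)=\pa_r-\widetilde b\cdot \nabz+(\widetilde a^{-1}-1)\big(\pa_r-\widetilde b\cdot\nabz\big).
\]
Since $P$ commutes with $\pa_r$, the commutator is
\[
[N,P]\phi=[\widetilde a^{-1}-1,P]\,\widetilde a N\phi \;-\;[\widetilde b\cdot\nabz,P]\phi .
\]
To obtain this form I rewrite $\pa_r\phi=\widetilde a N\phi+\widetilde b\cdot\nabz\phi$ before commuting, so that no bare $\pa_r\phi$ survives.

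The second step is to bound each commutator $[f,P]\psi=fP\psi-P(f\psi)$, with $f$ a multiplier. The projections onto $\ell\le 1$ are finite rank. They are bounded from $L^1_0$ into every $\H^s_0$ norm, up to $r$-scaling: $\|P\psi\|_{\H^s_0}\lesssim \|\psi\|_{L^2_0}$. Hence $\|P(f\psi)\|_{\H^s_0}\lesssim \|f\|_{L^2_0}\|\psi\|_{L^\infty}$ and $\|fP\psi\|_{\H^s_0}\lesssim\|f\|_{\H^s_0}\|\psi\|_{L^\infty}$. For the projection onto $\ell\geq2$ we have $P_{\ell\ge2}=I-P_{\ell\le1}$, so the same bound follows.

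Applying this with $f=\widetilde a^{-1}-1$, which is comparable to $\widecheck{\widetilde a}$ in $\H^s_0$ by Moser estimates, and $\psi=\widetilde aN\phi$ gives the first term. For the shift term I write $[\widetilde b\cdot\nabz,P]\phi=\widetilde b\cdot\nabz P\phi-P(\widetilde b\cdot\nabz\phi)$. The second piece is handled as above, with $\widetilde b\cdot\nabz\phi$ placed in $L^\infty$. The first piece, for $P$ of rank at most one, is $\widetilde b$ times a smooth harmonic, and its size is controlled by $\|\nabz\phi\|$ paired against $\widetilde b$ after integration by parts. Here the schematic form of the statement allows collecting everything into $\|\widetilde a\|_{\H^s_0}\|\nabz\phi\cdot\widetilde b\|_{L^\infty}$; the product of $\widetilde a$ and $\widetilde b$ coefficients is absorbed into this schematic factor.

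The main obstacle is this last shift term. The stated bound places $\widetilde b$ inside $L^\infty$ together with $\nabz\phi$, while the derivatives fall on $\widetilde a$. Distributing the $s$ derivatives so that no more than $L^\infty$ control of $\widetilde b\cdot\nabz\phi$ is required needs care, and I would accept the schematic reading. Finally, to pass from $\H^s_0$ to $\H^s_\ga$ norms I invoke the norm equivalence used in Lemma \ref{lem:comparison}, which is Lemma 3.2 of \cite{CK25}. This equivalence costs a factor $(1+r^{-1}\|u\|_{\H^{s-1}_0})$, because $\nabh-\nabz$ involves only $\nabz u$.
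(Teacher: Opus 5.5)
Your strategy is the paper's: expand $\widetilde N=\widetilde a^{-1}(\pa_r-\widetilde b\cdot\nabz)$, use that the projections (fixed basis $J_p$, fixed measure of $\gacr$) commute with $\pa_r$, substitute $\pa_r\phi=\widetilde a\,\widetilde N\phi+\widetilde b\cdot\nabz\phi$, and estimate multiplier commutators against finite-rank projections. There is one small algebra slip. The exact identity is
\[
[\widetilde N,P]\phi=[\widetilde a^{-1},P]\big(\widetilde a\,\widetilde N\phi\big)-\widetilde a^{-1}[\widetilde b\cdot\nabz,P]\phi ,
\]
so you dropped the cross term $(\widetilde a^{-1}-1)[\widetilde b\cdot\nabz,P]\phi$. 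The prefactor $\widetilde a^{-1}$ is precisely what produces $\|\widetilde a\|_{\H^s_0}$ in the second term of the statement.

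The real issue is the piece $\widetilde a^{-1}\widetilde b\cdot\nabz(P\phi)$ that you single out, and your proposed resolution does not work there. Write $P\phi=\sum_p c_pJ_p$ with $c_p=\int_{\mathbb{S}^2}\phi J_p$. The term is then $\sum_p c_p\,\widetilde a^{-1}\widetilde b\cdot\nabz J_p$. Here $\widetilde b$ multiplies pointwise, outside any integral, so integration by parts only helps with the coefficient, giving $|c_p|\lesssim r\|\nabz\phi\|_{L^\infty_0}$. The $\H^s_0$ norm of the term necessarily costs $s$ derivatives of $\widetilde b$.

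In fact the stated inequality fails for this term. Take $\widetilde a\equiv 1$, $\phi=J_0$, and $\widetilde b$ oscillating at angular frequency $m$. Then $\RR_{tran}(J_0)=\pm(\widetilde b\cdot\nabz J_0)_{\ell\neq 1}$ for the $\ell=1$ and $\ell\neq1$ types. Its $r^{-1}\H^s_0$ norm grows like $m^s$, while the right-hand side stays bounded.

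The paper's proof misses this because, when computing $\widetilde N(\phi_{\ell=1})$, it lets $\widetilde N$ act only through $\widetilde a^{-1}\pa_r$ on the coefficient $\int\phi J_p$. It thereby drops $\widetilde a^{-1}\widetilde b\cdot\nabz\phi_{\ell=1}$, so its final identity lacks exactly your term.

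So rather than accepting a schematic reading, you should state a corrected estimate. Add to the right-hand side $r^{-1}\|\widetilde a^{-1}\widetilde b\|_{\H^s_0}\|\nabz\phi\|_{L^\infty_0}$, or alternatively $r^{-2}\|\widetilde a^{-1}\widetilde b\|_{\H^s_0}\|\phi\|_{L^\infty_0}$. With that addition your argument is complete. The extra term is also harmless wherever the lemma is used: it is quadratic, $b$ is controlled in $\H^{s+2}_0$, and its coefficients still depend only on $(\widetilde a,\widetilde b)$, as needed in the iteration and the limit identification.
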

\begin{proof}
We compute $(\widetilde N \phi)_{\ell=1}-\widetilde N (\phi_{\ell=1})$ for example. We have
\begin{align*}
\widetilde N (\phi_{\ell\neq 1})&= \widetilde N\phi- \widetilde N(\phi_{\ell=1})=N\phi- (\widetilde a^{-1} \pa_r)\Big(\int_{S_r} \phi\, J_p\, d\vol_{\mathbb{S}^2}\Big) \\
&= \widetilde N\phi - \widetilde a^{-1} \int_{S_r} (\pa_r \phi) J_p \, d\vol_{\mathbb{S}^2} =\widetilde N\phi - \widetilde a^{-1} \int_{S_r} (\widetilde a \widetilde N+\widetilde b\cdot\nabz) \phi\, J_p\, d\vol_{\mathbb{S}^2} \\
&= (\widetilde N\phi)_{\ell\neq 1}+\int_{S_r} (\widetilde N\phi) J_p \, d\vol_{\mathbb{S}^2}- \widetilde a^{-1} \int_{S_r} \widetilde a \widetilde N \phi J_p \, d\vol_{\mathbb{S}^2}-\widetilde a^{-1} \int_{S_r} \widetilde b\cdot\nabz \phi J_p \, d\vol_{\mathbb{S}^2}.
\end{align*}
Therefore, using $1 - \widetilde a^{-1} = \widetilde a^{-1}\widecheck{\widetilde a}$, we deduce 
\begin{align*}
\widetilde N (\phi_{\ell\neq 1}) - (\widetilde N\phi)_{\ell\neq 1} = \widetilde a^{-1} \Big[ \widecheck{\widetilde a} (\widetilde N\phi)_{\ell=1} - (\widecheck{\widetilde a} \widetilde N\phi)_{\ell=1} - (\widetilde b\cdot\nabz \phi)_{\ell=1} \Big],
\end{align*}
which yields the estimate  
\begin{align*}
r^{-1} \| \widetilde N (\phi_{\ell\neq 1}) - (\widetilde N\phi)_{\ell\neq 1}\|_{\H_0^s(S_r)} \lesssim r^{-1} \| \widecheck{\widetilde a} \|_{\H^s_0(S_r)}  \| N\phi \|_{L_0^\infty(S_r)}+r^{-1} \| \widetilde a\|_{\H_0^s(S_r)} \| \nabz \phi \cdot\widetilde b\|_{L_0^\infty(S_r)}.
\end{align*}
The estimate for other possible expressions considered in the notation $\RR_{tran}$ is similar.
\end{proof}

\subsection{The integration lemma}
\begin{lemma}\lab{lem:integration-lemma-1}
For a reference metric $\widetilde g$ defined in \eqref{eq:widetilde-g}, consider the equation 
\begin{align}\lab{eq:transport-model-1}
\widetilde N \phi+\la r^{-1} \phi= F,
\end{align} 
with $\phi$ taking value in $\mathbb{R}$ or $\mathbb{R}^2$. Then, we have the estimate
\begin{align}\lab{eq:transport-estimate-gz}
r^{-1} \| r^\la \phi\|_{\H_0^s(S_r)} 
&\lesssim r_0^{-1} \| r_0^\la \phi\|_{\H_0^s(S_{r_0})} 
+ \int_{r_0}^r \Big( \|r'^{\la-1} F\|_{\H_0^s(S_{r'})} + \RR_{com}^s(\phi)(r') \Big) \, dr',
\end{align}
where the error $\RR_{com}^s(\phi)$ satisfies
\begin{align}\lab{eq:transport-error-bound}
\RR_{com}^s(\phi) \lesssim \| (r^{-1} \widecheck{\widetilde a}, \nabz \widetilde b) \|_{L^\infty(S_r)} \| r^{\la-1} \phi\|_{\H_0^s(S_r)} + \| (r^{-1} \widecheck{\widetilde a}, \nabz \widetilde b) \|_{\H^s_0(S_r)} \| r^{\la-1} \phi \|_{L^\infty(S_r)}.
\end{align}
Now consider the equation 
\begin{align}\lab{eq:transport-model-2}
\widetilde N \phi+\la r^{-1} \phi=r^{-1} A\phi+ F,
\end{align} 
where $A$ is either a scalar operator or, in the $\mathbb R^2$-valued case, a $2\times 2$ matrix of operators on $S_r$. Assume that each operator entry of $A$ is constant multiplication or $\laph^{-1}$, and $A$ is non-positive on $\H^i_{\widetilde\ga}(S_r)$, respectively on $\H^i_{\widetilde\ga}(S_r)\times \H^i_{\widetilde\ga}(S_r)$ equipped with the inner product
\[
\langle\cdot,\cdot\rangle_{\H^i_{\widetilde\ga}}\otimes G,
\]
for every nonnegative integer $i$, where $G$ is a fixed positive definite inner product on $\mathbb R^2$.\footnote{In practice, $A$ will be taken to be the matrix operator appearing in \eqref{eq:N-de-weight-coupled}.}
Then, we have the estimate for $\H_{\widetilde \ga}^s$ norms defined using $\widetilde\ga$:
\begin{align}\lab{eq:transport-estimate-ga}
r^{-1} \| r^\la \phi\|_{\H_{\widetilde \ga}^s(S_r)} 
&\lesssim r_0^{-1} \| r_0^\la \phi\|_{\H_{\widetilde \ga}^s(S_{r_0})} 
+ \int_{r_0}^r \Big( \|r'^{\la-1} F\|_{\H_{\widetilde \ga}^s(S_{r'})} + \RR_{com}^s(\widetilde g, \phi)(r') \Big) \, dr',
\end{align}
where the error $\RR_{com}^s(\widetilde g, \phi)$ satisfies
\begin{align}\lab{eq:transport-error-bound-sharp}
\RR_{com}^s(\widetilde g, \phi) \lesssim \|\Gamma_1(\widetilde g)\|_{L^\infty} \| r^{\la-1} \phi\|_{\H_{\widetilde \ga}^s(S_r)} + \|\Gamma_1(\widetilde g)\|_{\H_{\widetilde \ga}^{s-1}(S_r)} \| r^{\la-1} (r\nabz \phi,\phi) \|_{L^\infty}.
\end{align}
Here, $\Ga_1(\widetilde g)$ is the geometric quantities in \eqref{eq:schematic-quantities}, defined through $\widetilde g$. 
\end{lemma}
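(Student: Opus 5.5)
Both estimates rest on an energy-type identity for $r^{\la}\phi$ along $\widetilde N$, integrated in $r$ over the spheres $S_r$. For \eqref{eq:transport-estimate-gz}, I set $\psi:=r^\la\phi$. Since $\widetilde N(r)=\widetilde a^{-1}$, one has $\widetilde N\psi = r^\la F + \la r^{\la-1}(\widetilde a^{-1}-1)\phi$. Multiplying by $\widetilde a$ and using \eqref{eq:expression-N-n-b-n} gives
\[
\pa_r\psi=\widetilde a\, r^\la F+\la r^{\la-1}(1-\widetilde a)\phi+\widetilde b\cdot\nabz\psi .
\]
I then commute with $(r\nabz)^i$, $i\le s$, using Lemma \ref{lem:commutation-2}; at the level of $\nabz_{\pa_r}$ this is simply $[\nabz_{\pa_r},r\nabz]=0$. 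Next I pair with $(r\nabz)^i\psi$ in $L^2_0(S_r)=L^2(\mathbb S^2,\gz)$, with respect to the $r$-independent measure $d\vol_{\mathbb S^2}$. In the term $\widetilde b\cdot\nabz (r\nabz)^i\psi$, the top order is integrated by parts, which yields $\divz\widetilde b\,|(r\nabz)^i\psi|^2$. The remaining commutators $[(r\nabz)^i,\widetilde b\cdot\nabz]\psi$ are bounded by the standard Moser/product estimates: one factor goes in $L^\infty$ and the other in $\H^s_0$, which gives the form \eqref{eq:transport-error-bound}. The term $(1-\widetilde a)r^{\la-1}\phi$ is likewise absorbed into $\RR^s_{com}$. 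Dividing by $\|\psi\|_{\H^s_0}$ and using $\pa_r(r^{-1}\|\psi\|)=r^{-1}\pa_r\|\psi\|-r^{-2}\|\psi\|\le r^{-1}\pa_r\|\psi\|$, I integrate from $r_0$ to $r$. To be safe about the weight $r^{-1}$, I would work directly with $\pa_r\|\psi\|_{\H^s_0}$ and bound $\|\psi\|$ first. The inequality $r^{-1}\|\psi(r)\|\le r_0^{-1}\|\psi(r_0)\|+\int r'^{-1}(\dots)$ is weaker than what is stated, but the extra $r'^{-1}$ combines with $r'^\la F$ to give exactly $r'^{\la-1}F$. This matches the statement, so \eqref{eq:transport-estimate-gz} follows.

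For \eqref{eq:transport-estimate-ga}, I repeat the argument with $\H^s_{\widetilde\ga}$ norms, i.e.\ with $(r\nabh)^i$ and the measure of $\widetilde\ga$. The equation for $\psi$ becomes $\widetilde N\psi=r^{-1}A\psi+r^\la F+(\text{lapse error})$. I commute with $(r\nabh)^i$ using Lemma \ref{lem:commutation-1}; the commutators produce $\th,p,Y$ terms. I split $\th=\f12\trth\,\ga+\thh$, where the trace part $r^{-1}+\thc/2$ exactly compensates the $r$ weights in $r\nabh$. The remaining pieces lie in $\Ga_1(\widetilde g)$, plus $\nabh\Ga_1$ for higher derivatives, and this gives \eqref{eq:transport-error-bound-sharp}. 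The derivative of the volume form, $\widetilde N\, d\vol_{\widetilde\ga}=\trth\, d\vol$, again contributes $2r^{-1}$, cancelled by the $r^{-1}$ prefactor in the norm, up to a $\thc\in\Ga_1$ error. The new ingredient is the operator $A$. By hypothesis $\langle A\psi,\psi\rangle_{\H^i_{\widetilde\ga}\otimes G}\le0$, so after pairing, the $r^{-1}A$ term can be discarded. For this I need $A$ to commute with $(r\nabh)^i$ up to errors. Constants commute, and $\laph^{-1}$ commutes with $\laph$, so I would define the $\H^i$ norms equivalently via powers of $r^2\laph$ (elliptic equivalence, Lemma \ref{lem:comparison}) to get exact commutation. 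Moving between these equivalent norms costs only $\Ga_1$-type errors. Finally I use Gronwall with $G$-weighted norms, where $G$ is constant in $r$, so that $\pa_r$ does not hit it.

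The main obstacle I expect is this last compatibility step: $A$ is non-positive only for the specific $\H^i\otimes G$ inner product, so the commutator $[\widetilde N,\laph^{-1}]$ and the mismatch between $(r\nabh)^i$ and $(r^2\laph)^{i/2}$ must be shown to generate only $\Ga_1$-controlled errors. I would handle $[\widetilde N,\laph^{-1}]=-\laph^{-1}[\widetilde N,\laph]\laph^{-1}$, with $[\widetilde N,\laph]=-\trth\,\laph+\Ga_1\cdot\nabh^2+\dots$. The leading $-2r^{-1}\laph$ part is exactly what makes $r^2\laph$ commute with $\widetilde N$ at the linear level, and the rest is a $\Ga_1$ error.
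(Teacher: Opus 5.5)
Your proposal is correct and follows essentially the paper's argument. It uses an $L^2$ energy identity on $S_r$ in which the shift is integrated by parts into $\divz\widetilde b$ and the lapse mismatch produces the $r^{-1}\widecheck{\widetilde a}$ error, then commutes via Lemmas \ref{lem:commutation-2} and \ref{lem:commutation-1}, and discards the $A$-term by its sign; your use of $\laph$-based norms to commute exactly with $A$ is the paper's stated alternative of commuting with the multiplier $|\nabh|$. One bookkeeping correction: since $\gz=r^2\gacr$, one has $\|\cdot\|_{\H^s_0(S_r)}=r\,\|\cdot\|_{H^s(\mathbb{S}^2,\gacr)}$, so your direct bound on $\|\psi\|$ in the unit-sphere norm is already exactly \eqref{eq:transport-estimate-gz}, and the further division by $r$ is unnecessary (read in the paper's normalization, it is weaker than what is claimed).
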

\begin{proof}
We first prove \eqref{eq:transport-estimate-gz}.
We have, for any scalar field $f$,
\begin{align*}
\pa_r  \int_{S_r} f\, d\vol_{\gz} &= \int_{S_r} (\pa_r f +2r^{-1} f)\, d\vol_{\gz} = \int_{S_r} (\widetilde a \widetilde N f + \widetilde b\cdot\nabz f+2r^{-1} f)\, d\vol_{\gz} \\
&= \int_{S_r} \left(\widetilde a \widetilde N f+2r^{-1} f-(\divz \widetilde b) f\right) d\vol_{\gz}.
\end{align*}
Applying it to $f=|r^{\la-1} \phi|^2$, using that $\widetilde N(r)=\widetilde a^{-1}$, we compute
\begin{align*}
&\quad \pa_r \int_{S_r}  |r^{\la-1} \phi|^2  = \int_{S_r}  2\,  \widetilde a \, r^{\la-1} \phi  \cdot \widetilde N (r^{\la-1} \phi) +2r^{-1} |r^{\la-1} \phi |^2 - (\divz \widetilde b)|r^{\la-1} \phi |^2  \\
&= \int_{S_r}  2\, \widetilde a\, r^{\la-1} \phi \cdot \left((\la-1) \widetilde N(r) r^{\la-2} \phi+r^{\la-1} \widetilde N \phi\right) +2r^{-1} |r^{\la-1} \phi |^2 - (\divz \widetilde b)|r^{\la-1} \phi|^2  \\
&= \int_{S_r} 2 r^{2\la-3} (\la-1) |\phi|^2 +2\, \widetilde a\, r^{2\la-2} \phi\cdot (F-\la r^{-1} \phi) +2r^{-1} |r^{\la-1} \phi |^2 - (\divz \widetilde b)|r^{\la-1} \phi|^2. 
\end{align*}
Hence, schematically,
\beaa
\pa_r \int_{S_r}  |r^{\la-1} \phi|^2  = \int_{S_r} r^{2\la-2} (r^{-1}\widecheck{\widetilde a},\nabz\, \widetilde b) |\phi|^2+2\, \widetilde a\, r^{2\la-2} F \cdot \phi.
\eeaa
The estimate \eqref{eq:transport-estimate-gz}, in the case when $s=0$, follows from H\"older's inequality. The higher-order estimates follow by commuting \eqref{eq:transport-model-1} with $\nabz$ using Lemma \ref{lem:commutation-2} iteratively:
\begin{align*}
\nabz_{\widetilde N} (\nabz)^s \phi + (\la+s) r^{-1} (\nabz)^s \phi &= (\nabz)^s F +\sum_{i+j=s, i\leq s-1} (\nabz)^i (\nabz (\widetilde a,\widetilde b,\widetilde u))\cdot (\nabz,\nabz_{\widetilde N})^j \phi.
\end{align*}
We now prove \eqref{eq:transport-estimate-ga}.
In what follows, the inner product $\cdot$, if $\phi$ takes value in $\mathbb{R}^2$, is taken with respect to $G$ where $A$ verifies non-positivity.
We recall the more standard formula
\beaa
&& \pa_r \int_{S_r} f \, d\vol_{\widetilde \ga}=\int_{S_r} \widetilde a (\widetilde N f+ \trth(\widetilde g) f) \, d\vol_{\widetilde \ga},
\eeaa
where we can write $\trth(\widetilde g)=\thc(\widetilde g)+2r^{-1}$.
Then, through a similar calculation, this time taking into account the contribution of $A$, we deduce, schematically for the right-hand side,
\beaa
\pa_r \int_{S_r}  |r^{\la-1} \phi|^2 - \int_{S_r} 2 \widetilde a r^{2\la-3}  A\phi\cdot \phi = \int_{S_r} r^{2\la-2}  \Ga_1(\widetilde g) |\phi|^2+2\, \widetilde a\, r^{2\la-2} F \cdot \phi,
\eeaa
where we note that $r^{-1} \widecheck{\widetilde a}$ is included in $ \Ga_1(\widetilde g)$. Then, using the sign of $A$ and H\"older's inequality, we conclude the proof of the $s=0$ case. The higher-order case follows from Lemma \ref{lem:commutation-1} and the schematic commutation formula $[\nabh^s, \laph^{-1}]U=O( r^{-2} \nabh^{s-4} U)$; Note that the inverse Laplacian $\laph^{-1}$ is defined on covariant tensors by $\laph^{-1} U:=V$, where $V\in \mathrm{Ker}(\laph)^\perp$ is the unique solution to $\laph V=P^\perp U$, with $P^\perp $ denoting the orthogonal projection onto $\mathrm{Ker}(\laph)^\perp$ in $L^2(S)$. Alternatively, one can commute with the multiplier $|\nabh|$.
\end{proof}

\end{document}